\newtheorem{theorem}{Theorem}[section]
\newtheorem{definition}[theorem]{Definition}
\newtheorem{corollary}[theorem]{Corollary}
\newtheorem{lemma}[theorem]{Lemma}
\newtheorem{proposition}[theorem]{Proposition}
\newtheorem{assumption}[theorem]{Assumption}
\numberwithin{equation}{section}
\numberwithin{theorem}{section}
\newcommand{\qed}{\hfill$\Box$}
\newenvironment{proof}{\begin{trivlist}\item[]{\em Proof:}\/}{%
\qed\end{trivlist}}
\newenvironment{proofof}[1]{%
\begin{trivlist}\item[]{\em Proof of #1}\ }{\qed\end{trivlist}}
\newcommand{\E}{{\mathbb E}}
\newcommand{\Z}{{\mathbb Z}}
\newcommand{\R}{{\mathbb R}}
\newcommand{\C}{{\mathbb C\hspace{0.05 ex}}}
\newcommand{\N}{{\mathbb N}}
\newcommand{\T}{{\mathbb T}}
\newcommand{\cf}[1]{{\mathbbm 1}_{\{#1\}}}
\newcommand{\ci}{{\rm i}}
\newcommand{\re}{{\rm Re\,}}
\newcommand{\im}{{\rm Im\,}}
\newcommand{\rme}{{\rm e}}
\newcommand{\rmd}{{\rm d}}
\newcommand{\FT}[1]{\widehat{#1}}
\newcommand{\IFT}[1]{\widetilde{#1}}
\newcommand{\norm}[1]{\Vert #1\Vert}
\newcommand{\defset}[2]{ \left\{ #1\left|\,
 #2\makebox[0cm]{$\displaystyle\phantom{#1}$}\right.\!\right\} }
\newcommand{\set}[1]{\{#1\}}
\newcommand{\mean}[1]{\langle #1\rangle}
\newcommand{\bigmean}[1]{\left\langle #1\right\rangle}
\newcommand{\vep}{\varepsilon}
\newcommand{\defem}[1]{{\em #1\/}}
\newcommand{\qand}{\quad\text{and}\quad}
\newcommand{\ommin}{\omega_{\text{min}}}
\newcommand{\rhoc}{\rho_{\mathrm{c}}}
\newcounter{jlisti}
\title{Multi-state condensation in Berlin--Kac spherical models}
\author{Jani Lukkarinen\thanks{\emailjani}\\[0.5em]
$\,^*$\UHaddress}
\date{\today}
\newcommand{\email}[1]{E-mail: \tt #1}
\newcommand{\emailjani}{\email{jani.lukkarinen@helsinki.fi}}
\newcommand{\UHaddress}{\em University of Helsinki, Department of Mathematics and Statistics\\
\em P.O. Box 68, FI-00014 Helsingin yliopisto, Finland}
\begin{document}

\maketitle

\begin{abstract}
 We consider the Berlin--Kac spherical model for supercritical densities under a periodic lattice energy function which has finitely many non-degenerate global minima.
 Energy functions arising from nearest neighbour interactions on a rectangular lattice have a unique minimum, and in that case the supercritical fraction of the total mass condenses to the ground state of the energy function.
 We prove that for any sufficiently large lattice size this also happens in the case of multiple global minima, although the precise distribution 
 of the supercritical mass and the structure of the condensate mass fluctuations 
 may depend on the lattice size.   
 However, in all of these cases, one can identify a bounded number of degrees of freedom forming the condensate in such a way that their fluctuations 
 are independent from the rest of the fluid.
 More precisely, the original Berlin--Kac measure may  be replaced by a measure where the condensate and normal fluid degrees of freedom become independent random variables, and the normal fluid part converges to the critical Gaussian free field.
 The proof is based on a construction of a suitable coupling between the two measures, proving that their Wasserstein distance is small enough for
 the error in any finite moments of the field to vanish as the lattice size is increased to infinity.
\end{abstract}

\section{Introduction}

Berlin and Kac proposed \cite{BerlinKac52} in 1952 a {\em spherical model\/} as a modification of the Ising model 
of a ferromagnet.  In their model, discrete spin variables are replaced by continuum variables, i.e., by real numbers,
while keeping a constraint that the total length of the continuum vector equals that of the discrete spin vector.
This enforces the continuum spin vectors to remain on the surface of a fixed high-dimensional sphere, 
hence the name ``spherical model.''
Their motivation was to find simple models were phase transitions could be studied fairly explicitly, in particular, in the physically relevant case of three dimensions.  

Although the partition function of the spherical model cannot be explicitly solved for fixed finite lattices,
it has an integral representation which allows studying the properties of its infinite volume limit when restricted to nearest neighbour interactions.  
The limiting partition function is sufficiently explicit that standard thermal equilibrium properties of the model can be derived from it and, 
as shown in \cite{BerlinKac52}, the spherical model in three dimensions has a phase transition corresponding to spontaneous magnetisation.  
The reference also contains estimates for the second and fourth moments of the field, 
implying that the fluctuations at small temperatures, when there is spontaneous magnetisation, cannot be Gaussian.

On a technical level, the spontaneous magnetisation found in \cite{BerlinKac52} is analogous to Bose--Einstein
condensation in quantum statistical mechanics.  For instance, 
Yan and Wannier  \cite{YanWannier65} extend the analysis in \cite{BerlinKac52} to compute also the 
single site distribution (one-point function) in the infinite volume limit.  
They find that in the subcritical case the distribution is Gaussian whereas in the supercritical 
case it is not Gaussian but instead corresponds to a random variable which is 
a sum of a random constant and a Gaussian variable.  The appearance of the 
constant is analogous to the effect of condensation for ideal Bose gas.

To elucidate the connection further, let us begin with more detailed  definitions.
The spherical model in $d$ dimensions is defined as the random field of ``continuous spin''
$s_x\in \R$, $x\in \Lambda$, where $\Lambda\subset \R^d$ is a finite lattice of points.
The main purpose of using a lattice to label the spins is to define the interaction energy of a spin configuration:
one assumes that there is given a coupling function $J_{x,y}$, $x,y\in \Lambda$, such that the energy is given by 
\[
 E_\Lambda[s] := \sum_{x,y\in \Lambda} J_{x,y} s_x^* s_y\,,
\]
where $s_x^*$ denotes the complex conjugate, added here for later use.
Often one takes $J_{x,y} = v(x-y)$ for a function $v$ which decays sufficiently rapidly with increasing $|x-y|$.
For instance, the rectangular 
nearest neighbour case with Dirichlet boundary conditions would have $\Lambda \subset \Z^d$ and $v(x)=0$ for $|x|_\infty\ge 2$,
where $|x|_\infty := \max_i |x_i|$.  We will use both $|x|_\infty$ and the Euclidean norm on $\R^d$, $|x|$, frequently in the following.

Denoting the lattice size by $V=|\Lambda|<\infty$, the probability measure for the spin field $s$ at inverse temperature $\beta>0$ is given by
\begin{align}\label{eq:defBKmodel}
 \mu_{\text{BK},\beta}[\rmd s] = \frac{1}{Z_{\text{BK},\Lambda,\beta}} \rme^{-\beta E_\Lambda[s]}\,
 \delta\!\left(\sum_{x\in\Lambda}s_x^2 - V\right)\prod_{x\in \Lambda}\! \rmd s_x \,.
\end{align}
The first factor is the standard canonical Gibbs weight for the given temperature and energy function.  
The second ``factor'' is a $\delta$-function constraint which enforces the assumption that 
the length of the spin vector divided by the number of particles is 
equal to one. We will use such $\delta$-functions liberally in the following,
and the discussion about their mathematical definition and properties is given in Appendix \ref{sec:defdelta}.
In particular, it follows that under the above measure $\sum_{x\in\Lambda}s_x^2=V$ almost surely.
Here $Z_{\text{BK},\Lambda,\beta}>0$ is a constant which normalizes the positive measure into a probability measure, and it is also equal to the earlier mentioned finite volume partition function of the spherical model.

Here, we generalize the above spherical model slightly by complexifying the spin field $s_x$ and allowing for arbitrary spin-densities $\rho>0$.  Explicitly, we consider here complex fields $\phi_x\in \C$, $x\in\Lambda$,
whose values are distributed according to the measure
\begin{align}\label{eq:defmuzerointro}
 \mu_{\rho,\beta}[\rmd \phi] = \frac{1}{Z_{\rho,\beta}} \rme^{-\beta E_\Lambda[\phi]}\,
 \delta\!\left(\sum_{x\in\Lambda}|\phi_x|^2 -\rho V\right)\prod_{x\in \Lambda}\! \left[\rmd \phi^*_x \rmd \phi_x \right]\,,
\end{align}
where $\rmd \phi^*_x \rmd \phi_x := \rmd\bigl(\re\phi_x\bigr) \rmd\bigl(\im \phi_x\bigr)$.
The measure (\ref{eq:defmuzerointro}) is a ``classical field'' version of the ideal gas of bosonic particles in the canonical ensemble where the total particle number is fixed to $\rho V$ but energy is allowed to fluctuate according to the canonical Gibbs ensemble.  In fact, it follows from  our main result 
that the mechanism behind the spherical model phase transition is identical to that found for Bose--Einstein condensation of non-interacting bosons: 
if $d\ge 3$, we show that for all sufficiently large densities $\rho$ it is possible to separate a finite number of Fourier modes from the field, called the condensate, and these will carry all of the excess mass above criticality.  
The fluctuations of the remaining degrees of freedom, the normal fluid, are shown to become Gaussian
and independent from the condensate fluctuations in the large volume limit. 

An important consequence of the analysis here is to observe that the condensate cannot always be composed out
of a unique Fourier mode.  In fact, the number of relevant modes and their fluctuations might even depend on the precise shape and size of $\Lambda$.
For spin interactions, and even more so for dispersion relations arising from tight binding approximation or for 
phonons in solid state physics, it would be important to be able to consider fairly general interaction potentials.
A number of example lattice interactions are discussed in Sec.~\ref{sec:examples}.  One of these is given by a dispersion relation which has a unique global minimum but its restrictions to periodic rectangular lattices with $L$ particles on each side have a unique condensate mode for odd $L$ but $2^d$ condensate modes for even $L$.
This is in sharp contrast to the standard ideal Bose gas example \cite[Theorem 5.2.30]{bratteli:ope2}
where $L\to \infty$ limiting behaviour is unique and all excess mass condenses into the (unique) ground state, corresponding to the Fourier mode with wave number zero.

Our main result, Theorem \ref{th:mainresult}, provides explicit bounds which may be used to estimate the accuracy of any proposed splitting of the Fourier modes into condensate and normal fluid modes.  One of the main goals of the present contribution has been to find methods which would be able to identify the condensate modes properly for general, finite range
lattice interactions.  This has resulted in the bounds given in Theorem \ref{th:mainresult};
as we discuss in Sec.~\ref{sec:examples}, these bounds are indeed
sufficiently refined to distinguish the condensate modes correctly not only in the above odd and even $L$ cases, but also in all 
other examples considered in Sec.~\ref{sec:examples}.

Bose--Einstein condensation has been much more extensively studied in the literature than the spherical model.
Although the analysis is complicated by the replacement of the complex field $\phi_x$ by non-commutative 
bosonic creation and annihilation operators on the Fock space, the findings are not dissimilar from the above observations.
For example, in \cite{BuffetSP83} the properties of the condensate in the so-called 
imperfect Bose gas are shown to depend on which lattices are used to approach
the infinite volume limit, by varying the anisotropy of the lattices.  
Even more extreme examples for the ideal Bose gas are given in \cite{vdBergLewis82}.  Multi-state condensation has also been shown to occur in 
similar models in \cite{BruZagreb00} and its introduction contains a summary of other earlier findings.
In contrast, if one adds a one-particle energy gap, single-state condensation occurs for
bosons interacting via superstable two-body potentials \cite{Lauwersetal03}.   
The role the explicit gap plays in the result is discussed in the paper but, since the gap is not allowed to depend on the system size, 
it is not possible to draw conclusions about the minimal gap size needed.  Indeed, our results indicate that this dependence could be fairly complex in general.

A second motivation to study the measure (\ref{eq:defmuzerointro}) comes from statistical mechanics of discrete wave equations.
Considering $(2^{\frac{1}{2}}\re \phi_x, 2^{\frac{1}{2}}\im \phi_x)$ to form a pair of canonical variables for each $x$, one may use the function
$E_\Lambda[\phi]$ to define Hamiltonian evolution under which it is conserved and may be identified physically as the total energy.  Requiring the symmetry condition $J_{y,x}^* = J_{x,y}$ from the coupling, 
the evolution equations are equivalent to 
\[
 \partial_t \phi_x = -\ci \sum_{y\in \Lambda} J_{x,y}\phi_y\,.
\]
In particular, if $J_{x,y}=\alpha(x-y;L)$ where $\alpha$ is $L$-periodic, this corresponds to a discrete wave equation with periodic boundary conditions and with a dispersion relation $\omega$ which is given by the Fourier transform of $\alpha$.
In addition, one may check by differentiation that the $\ell_2$-norm is conserved by the time-evolution, i.e., that 
$\sum_{x\in \Lambda} |\phi_x|^2$ is also a conserved quantity.  By Liouville's theorem, the Lebesgue measure is invariant under the Hamiltonian evolution and thus the measure (\ref{eq:defmuzerointro}) yields a family of stationary measures for the discrete wave equation corresponding to the Hamiltonian $E_\Lambda[\phi]$.  Therefore, our result can also be viewed  as a proof of ``Bose--Einstein'' condensation for the equilibrium measures of these discrete wave equations.

To mention one additional motivation for the measures in (\ref{eq:defmuzerointro}), 
let us point out that they can also be obtained as a weak coupling limit of 
fixed density, i.e., ``canonical'', equilibrium measures of the discrete nonlinear Schr\"odinger equation.
In \cite{NLS09}, we study the discrete nonlinear Schr\"odinger evolution with random initial data 
distributed according to a grand canonical ensemble,
aiming at rigorous control of the related kinetic theory.
However, the assumptions used in \cite{NLS09} require that the weak coupling measure in the thermodynamic limit becomes Gaussian, hence 
excluding a range of densities which correspond to the supercritical case studied here.
The above results could provide the first step 
towards understanding kinetic theory for weakly nonlinear waves in presence of a condensate.  

The main technique for controlling the error arising from the separation of the condensate degrees of freedom is 
very different from the previous estimates
in \cite{BerlinKac52,YanWannier65}.  Instead of trying to represent the $\delta$-function in terms of oscillatory integrals,
we think of it as a constraint defining a positive measure, and aim at minimizing the effect of the separation with a flexible choice of which modes 
are included in the condensate.  It turns out that there are cases in which the condensate degrees of freedom have somewhat irregular fluctuations but the main achievement here is to show that it is possible to make the separation in such a manner that the number of condensate modes always remains bounded and the rest of the modes become independent Gaussian random variables.  After the approximate measure has been chosen, we check that it is close to the original 
one by constructing a coupling between the two measures, borrowing ideas from \cite{SaksmanWebb2016}.  
This controls the Wasserstein distance between the measures, and together with their translation invariance,
we conclude that there is a power $p'>0$ such that all finite moments of the field $\phi_x$ are $O(L^{-p'})$ close to each other as $L\to\infty$.

Couplings and Wasserstein metric are basic tools for optimal transport problems \cite{Villani:OptTransp}.  They have also been used
for studies of condensation phenomena in stochastic particle systems, although in models such as zero-range processes 
the condensation occurs at isolated lattice sites instead of Fourier modes as in the cases discussed above.  We refer to \cite{rafferty2018} 
and references therein for an up-to-date discussion and examples related to the topic.

In the following sections, we first define the complexified spherical model and describe the main results 
in more detail in Sec.~\ref{sec:mainresults}.  The fixed finite lattice case for supercritical densities is discussed in Theorem \ref{th:mainresult} while the conclusions 
for the case where a given dispersion relation is studied in the infinite volume limit are given in Corollary \ref{th:fixedomega}.
These results give bounds for the Wasserstein distance between the spherical model measure and the approximation where the condensate and normal fluid modes have been separated.  The bounds typically diverge, but in Sec.~\ref{sec:Wassersteintomoments} we explain how they nevertheless imply that the approximation errors of finite moments vanish in the infinite volume limit.  Various scenarios for the formation of the condensate for a number of example continuum dispersion relations are discussed in Sec.~\ref{sec:examples}.

In the technical part, we first prove Theorem \ref{th:mainresult} in Sec.~\ref{sec:proofofmain}, and a statement in item \ref{it:smalldev} of  Proposition \ref{th:mu1prop} which uses a number of components from the proof.  The main estimates allowing to control the infinite volume limit of fixed dispersion relations are given in Sec.~\ref{sec:proofofgap}, in particular, completing the missing proof of Lemma \ref{th:mainomlemma}.  In the two Appendices, we first
clarify the precise mathematical interpretation of the $\delta$-function constraints and recall the definition and basic properties of the Wasserstein distance.

\subsection*{Acknowledgements}

This work has greatly benefited from the input from two colleagues: Herbert Spohn, who in a personal communication suggested the splitting of the condensate fluctuations  in the infinite volume limit for the nearest neighbour interactions, similarly to what is stated 
in item \ref{it:uniquezero} of Proposition \ref{th:mu1prop}, 
and asked the question about its proof and generalizations, and Eero Saksman, who generously took the time to explain the technical details of their method for 
generating efficient couplings between two probability measures which might be mutually singular but approximate each other well with high probability.
I am also grateful to Stefan Gro\ss kinsky and Stefano Olla for their comments on coupling techniques used in stochastic particle systems.  I also thank the anonymous reviewer and Kalle Koskinen for helpful comments on the manuscript.

The work has been supported by the 
Academy of Finland via the Centre of Excellence in Analysis and Dynamics Research (projects 271983 and 307333), and it has 
also benefited from the support of the project EDNHS ANR-14-CE25-0011 of the French National Research Agency (ANR) and from the discussions during workshops organized by the Institut Henri  Poincar\'e -- Centre \'Emile Borel, Paris, France, (IHP trimester ``Stochastic Dynamics Out  of Equilibrium''), and
by the Mathematisches Forschungsinstitut Oberwolfach, Germany (MFO mini-workshop: ``Gibbs Measures for Nonlinear Dispersive Equations'').

\section{Separation of condensate in the spherical model}\label{sec:mainresults}

\subsection{Notations and definition of the spherical model measure}

We begin with the probability measure for a finite complex field $\phi_x$, $x\in \Lambda$, defined by the complexified spherical model of Berlin and Kac given in (\ref{eq:defmuzerointro}).
For simplicity, we only consider $d$-dimensional periodic lattices of fixed side length $L$, which we parameterize as follows
\begin{align}
  & \Lambda_L := \Bigl\{-\frac{L-1}{2},\ldots,\frac{L-1}{2}\Bigr\}^d \, , \qquad
\text{if $L$ is odd}\,,\\
  &\Lambda_L := \Bigl\{-\frac{L}{2}+1,\ldots,\frac{L}{2}\Bigr\}^d\, , \qquad
\text{if $L$ is even}\,.
\end{align}
Then always $V:=|\Lambda_L|=L^d$ and $\Lambda_L\subset \Lambda_{L'}$ if $L\le L'$.
Also, if $L$ is odd, $x\in\Z^d$ belongs to $\Lambda_L$ if and only if $|x|_\infty < \frac{L}{2}$.
If $L$ is even, $\Lambda_L$ contains those $x\in\Z^d$
for which $|x|_\infty \le \frac{L}{2}$ and $x_i\ne -\frac{L}{2}$ for all $i$.

We further simplify the discussion by restricting to energy functions satisfying periodic boundary conditions.  Without loss of generality, we also include the inverse temperature to the definition, and thus assume that
\[
 \beta E_\Lambda[\phi] =  H_L[\phi] := \sum_{x,y\in \Lambda_L} \phi_x^* \alpha(x-y;L) \phi_y \,,
\]
where $\alpha:\Lambda_L\to \C$ determines the interaction energies.  Here, and in the following,
we use periodic arithmetic on $\Lambda_L$, setting 
$x'\pm x := (x'{\pm}x) \bmod \Lambda_L$ and $-x:= (-x) \bmod \Lambda_L$,
for $x',x\in  \Lambda_L$. 

The above definition implies that the energies remain invariant under periodic translations
of the field configuration, i.e., $H_L[\phi']=H_L[\phi]$ if $y\in \Lambda_L$ and $\phi'_x:=\phi_{x+y}$, $x\in \Lambda_L$.
In fact, we can now ``diagonalize'' the interaction by using discrete Fourier transform.
We define the Fourier transform on $\Lambda=\Lambda_L$ by first setting as
the dual lattice $\Lambda^*(L) := \Lambda_L/L\subset \left]{-}\frac{1}{2},\frac{1}{2}\right]^d$ and then 
denoting the Fourier transform of a function
$f:\Lambda\to\mathbb{C}$ by $\FT{f} : \Lambda^* \to
\mathbb{C}$, where
\begin{equation}
   \FT{f}(k) = \sum_{x\in \Lambda} f(x) \textrm{e}^{-\textrm{i} 2\pi k \cdot x}\,, \qquad k\in \Lambda^*\,.
\end{equation}
The inverse transform is given by
\begin{equation}
   \IFT{g}(x) = \frac{1}{V} \sum_{k\in \Lambda^*}  g(k)
     \textrm{e}^{\textrm{i} 2\pi k \cdot x} =: \int_{\Lambda^*}\!\rmd k\,  g(k)
     \textrm{e}^{\textrm{i} 2\pi k \cdot x}\,, \qquad x\in \Lambda\,.
\end{equation}
It is straightforward to check that both transforms are pointwise invertible for all $f$ and $g$, 
$f(x)=\IFT{(\FT{f})}(x)$ for $x\in \Lambda$ and $g(k) = \FT{(\IFT{g})}(k)$ for $k\in \Lambda^*$.

The standard convolution results hold for the discrete Fourier transform, and thus we have 
\[
 H_L[\phi] = \int_{\Lambda^*}\!\rmd k\, \omega(k) |\Phi_k|^2 =: H[\Phi] \,,
\]
where $\Phi = \FT{\phi}:\Lambda^*\to \C$ and $\omega = \FT{\alpha}$.  In this formulation, it is now obvious that
if we wish to satisfy the physical requirement of the energy $H_L$ being real for all field configurations, it is necessary that $\omega(k)\in\R$ for all $k\in \Lambda^*$.
In addition, by the inversion formula
\begin{align}\label{eq:defalphaL}
 \alpha(x;L) :=  \int_{\Lambda^*}\!\rmd k\,  \omega(k)
     \textrm{e}^{\textrm{i} 2\pi k \cdot x}\,. 
\end{align}
Therefore, it is possible to simplify the study of the infinite volume limit $L\to \infty$
by considering a ``target'' function $\omega:\T^d\to \R$, parameterizing the torus using 
$\left]{-}\frac{1}{2},\frac{1}{2}\right]^d$, and defining $\alpha$ using the formula (\ref{eq:defalphaL}).
For reasons explained in the Introduction, we call such functions $\omega$ \defem{dispersion relations}.
In the following, some of the results concern the limiting behaviour as $L\to \infty$ for some given dispersion relation $\omega$ on the torus, 
while others assume that $L$ is fixed and $\omega(k)$, $k\in \Lambda^*$, are some given real numbers.

We also denote
 \[
 N_L[\phi] = \sum_{x\in \Lambda_L} |\phi_x|^2 \,,
\]
and thus arrive at the following expression for the spherical model measure
\begin{align}\label{eq:defmuzero}
 \mu_{\rho,\beta}[\rmd \phi] = \frac{1}{Z_{\rho,\beta}} \rme^{-H_L[\phi]}\,
 \delta\!\left(N_L[\phi] -\rho V\right)\prod_{x\in \Lambda}\! \left[\rmd \phi^*_x \rmd \phi_x \right]\,.
\end{align}
By the discrete Plancherel theorem, here $N_L[\phi]=\norm{\phi}^2=\norm{\Phi}^2=: N[\Phi]$, and we observed earlier 
that $H_L[\phi]=H[\Phi]$. Since the Fourier transform introduces an invertible linear transformation of the field, we may conclude that the spherical model measure has a particularly simple form for the Fourier components $\Phi_k = \FT{\phi}_k$ of the field, 
\begin{align}\label{eq:defmu0}
 \mu_0[\rmd \Phi] := \frac{1}{Z_{\rho}} \rme^{-H[\Phi]} \delta(N[\Phi]- \rho V)
 \prod_{k\in \Lambda^*} \left[\rmd \Phi^*_k \rmd \Phi_k\right]
\end{align}
where $\rmd \Phi^*_k \rmd \Phi_k := \rmd\bigl(\re\Phi_k\bigr) \rmd\bigl(\im \Phi_k\bigr)$, $Z_\rho$ normalizes the integral to one, 
and
\[
 H[\Phi] := \int_{\Lambda^*}\!\rmd k\, \omega(k) |\Phi_k|^2\,,\qquad
 N[\Phi] := \int_{\Lambda^*}\!\rmd k\, |\Phi_k|^2 \,.
\]

As the norm in which to measure the Wasserstein distance, we choose the $\ell_2$-metric on the $x$-space.
By the Plancherel theorem for discrete Fourier transform, this means using the following norm for the field $\Phi_k$,
\[
 \norm{\Phi}^2 := \int_{\Lambda^*}\!\rmd k\, |\Phi_k|^2\,,
\]
and $N[\Phi]=\norm{\Phi}^2$.
We also need spherical coordinates in these variables.  
We denote the radial distance coordinate by $|\Phi|$, and it is then related to the above norm by 
\[
 |\Phi|^2 := \sum_{k\in \Lambda^*} |\Phi_k|^2 = |\Lambda|\, \norm{\Phi}^2\,.
\]

\subsection{Factorized supercritical measures}\label{sec:factsupercrit}

Our goal is to study the spherical model for parameter values which 
lead to generation of a condensate.  Since this is a physical, macroscopic notion, we first need to quantify mathematically what it could mean for \defem{finite} 
lattice systems such as the spherical model measure introduced in the previous subsection.
After this, 
we will separately consider the large $L$ behaviour of systems whose energy eigenvalues $\omega(k)$, $k\in \Lambda^*$, 
arise from a continuum dispersion relation $\omega:\T^d\to \R$ as explained earlier.

To quantify condensates and supercriticality, it will be necessary to identify a sufficiently large energy gap separating the 
modes which belong to the condensate from the rest.
To this end, we divide
the wave numbers in $\Lambda^*$ into a \defem{condensate wave number set} $\Lambda_0^*$ and a \defem{normal fluid
wave number set} $\Lambda^*_+ =\Lambda^*\setminus \Lambda_0^*$ in such a manner
that the energies occurring in these sets are separated by a non-empty interval.
An important parameter of the split turns out to be the proportional size of the gap, after normalizing the lowest energy to zero;
the following item collects the related definitions and terminology.
\begin{definition}\label{th:defslpit}
 Consider $\Lambda^*$ for some fixed $L$ and suppose $\omega(k)\in \R$, $k\in \Lambda^*$, are  given.   Define $\omega_0:=\min_{k\in\Lambda^*} \omega(k)$
 and $e_k:=\omega(k)-\omega_0\ge 0$, $k\in\Lambda^*$.
 A \defem{split} of $\Lambda^*$ is a pair $(\Lambda_0^*,\Lambda_+^*)$ of nonempty disjoint subsets of $\Lambda^*$ whose union covers the whole $\Lambda^*$.
 Given $0\le a<b$ and a split $(\Lambda_0^*,\Lambda_+^*)$, we say that the \defem{split is separated by the energy interval $[a,b]$} if 
 $e_k \le a$ for all $k\in \Lambda^*_0$ and $e_k \ge b$ for all $k\in \Lambda^*_+$.  In this case, the \defem{relative energy gap of the split}
 is defined as $\delta^{-1}$ where 
 \[
   \delta := \frac{\max_{k\in \Lambda^*_0}e_k}{\min_{k\in \Lambda^*_+}e_k}\le \frac{a}{b}<1\,.
 \]
 We denote the number of elements in the two subsets of the split by 
 $V_0:=|\Lambda_0^*|$ and $V_+:=|\Lambda_+^*|$.
\end{definition}
Since $V=|\Lambda^*|$, for such a split we clearly have $0<V_0,V_+<V$ and
$V=V_0+V_+$.  Also, every global lattice minima, a point $k\in \Lambda^*$ at which $\omega(k)=\omega_0$,
belongs to $\Lambda^*_0$.  Hence, $\Lambda^*_0$ contains all $k$ for which $e_k=0$,
and thus $e_k>0$ for all $k\in \Lambda^*_+$.

Given such a split, we call the field $\Phi^+$ 
composed out of modes with $k\in \Lambda_+$ the \defem{normal fluid} while
the field $\Phi^0$ resulting from the remaining modes is called the \defem{condensate}.
The goal is to quantify under which assumptions the condensate field can be composed
out of a small fraction of the modes, $\frac{V_0}{V}\ll 1$, so that 
they nevertheless carry a substantial fraction of the total mass $\rho V$.  Analogously
to the Bose--Eistein condensation, one could then expect the 
normal fluid to fluctuate according to the critical thermal, grand canonical ensemble.
Indeed, under the assumptions made in the main theorem we can prove that the 
normal fluid $\Phi^+$ follows very accurately Gaussian statistics given by the following
distribution
\begin{align}\label{eq:defmuphiplus}
 & \mu_+[\rmd \Phi] := \frac{1}{Z_+} 
 \prod_{k\in \Lambda_+^*} \left[\rmd \Phi^*_k \rmd \Phi_k\right]  \rme^{-L^{-d} \sum_{k\in \Lambda_+^*}
 (\omega(k)-\omega_0) |\Phi_k|^2}\,. 
\end{align}
This measure is well-defined since $\omega(k)>\omega_0$ for all $k\in \Lambda^*_+$.
The expectation of norm density, $\mean{\norm{\Phi_+}^2/V}$, under such a measure is equal to
\begin{align}\label{eq:defrhocrit}
 \rhoc(L) := \int_{\Lambda_+^*}\!\!\rmd k\, \frac{1}{\omega(k)-\omega_0}\,.
\end{align}
The standard deviation of the norm density is proportional to $1/\sqrt{V}=L^{-\frac{d}{2}}$, 
and thus for large $L$ the normal fluid under this measure cannot carry much more
of the density fixed by the condition $N[\Phi]=\rho V$ as soon as $\rho>\rhoc$.
Since $N[\Phi]= \norm{\Phi_+}^2+ \norm{\Phi_0}^2$, then the extra norm density
$\rho-\rhoc$ will be contained in the condensate modes.  

Based on the above analogy, we say the the spherical model is \defem{supercritical}
if $\rho>\rhoc$ for a split which has sufficiently large relative energy gap and only a few condensate modes (the precise conditions are given in Theorem \ref{th:mainresult}).
The above formal discussion will then turn out to give the correct picture for fairly general energy functions $\omega(k)$.  In fact, the separation between the two sets of modes is so strong that even the fluctuations of
condensate and of the normal fluid will become statistically independent.
However, if the condensate is degenerate, the fluctuations of the condensate can be nontrivial.  In the main result we will compare the spherical model measure $\mu_0$
to the probability measure $\mu_1$ defined by
 \begin{align}\label{eq:defmu1}
 & \mu_1[\rmd \Phi] := \frac{1}{Z_1} 
 \prod_{k\in \Lambda_+^*} \left[\rmd \Phi^*_k \rmd \Phi_k\right]  \rme^{-E_+[\Phi]} 
 \nonumber \\ & \qquad \times
 \prod_{k\in \Lambda_0^*} \left[\rmd \Phi^*_k \rmd \Phi_k\right] \rme^{-E_0[\Phi]\left(1-\frac{\rhoc}{\Delta}\right)}
 \prod_{k\in \Lambda^*_+} \left(1-\frac{E_0[\Phi]L^{-d}}{e_{k} \Delta} \right)^{-1}
  \delta(\rho_0[\Phi] - \Delta) \,,
\end{align}
where $\Delta:=\rho-\rhoc>0$, $Z_1$ is a constant normalizing the integral to one and, using $e_k:=\omega(k)-\omega_0$, we define
\begin{align}\label{eq:defmu1parameters}
\rho_0[\Phi]:= \frac{1}{V}\int_{\Lambda_0^*}\!\rmd k\, |\Phi_k|^2\,,\quad  
 E_+[\Phi] := \int_{\Lambda_+^*}\!\rmd k\, e_k |\Phi_k|^2\,,\quad 
 E_0[\Phi] := \int_{\Lambda_0^*}\!\rmd k\, e_k |\Phi_k|^2
 \,.  
 \end{align}
Clearly, $\mu_1$ is a product of $\mu_+$ and a measure for the condensate modes, and the total norm density is split between the normal fluid and condensate in the manner described above.

The structure of the condensate fluctuations under $\mu_1$ may indeed be fairly complicated.
However, there are certain situations where they can be replaced by simpler uniform 
distribution of the excess mass over the condensate modes, i.e., by using the measure 
 \begin{align}\label{eq:defmu1prime}
 & \mu'_1[\rmd \Phi] := \frac{1}{Z'_1} 
 \prod_{k\in \Lambda_+^*} \left[\rmd \Phi^*_k \rmd \Phi_k\right]  \rme^{-E_+[\Phi]} 
  \prod_{k\in \Lambda_0^*} \left[\rmd \Phi^*_k \rmd \Phi_k\right]\delta(\rho_0[\Phi] - \Delta) \,.
 \end{align}
instead of $\mu_1$ above.  Some sufficient conditions for using 
the simpler measure are discussed later in Proposition \ref{th:mu1prop}.  
As we show there, using $\mu'_1$ is allowed at least if a single mode condensate can be used, i.e., if $V_0=1$.
We call both $\mu_1$ and $\mu'_1$ 
\defem{factorized supercritical measures}.

\subsection{Main results}

Our main result is to state conditions under which $\mu_0$ and $\mu_1$ are so close to each other that the expectations of all local observables will agree 
with each other, up to some error which is proportional to a negative power of $L$, hence vanishing when $L\to \infty$.  
The precise conditions are contained in the following Theorem implying a bound for the Wasserstein distance between $\mu_0$ and $\mu_1$.  The proof of Theorem is given in Section \ref{sec:proofofmain}.

The Wasserstein distance estimate is sufficiently strong that 
local expectations of the original field, $\phi = \tilde{\Phi}$, generated by these two measures agree up to errors
which vanish as $L\to \infty$.  Namely, if $I\subset \Lambda$ is finite in the sense that $|I|/V\ll 1$ and $\phi^I:=\prod_{x\in I}\phi_x$, then the bound given in Theorem \ref{th:mainresult} implies the existence of $p'>0$ such that
\[
 \mean{\phi^I}_{\mu_0} = \mean{\phi^I}_{\mu_1} + O(L^{-p'})\,.
\]
The proof of this statement will rely on translation invariance of the random field generated by 
the measures $\mu_0$ and $\mu_1$ and it is given later as Theorem \ref{th:infvollocalconv} in Sec.~\ref{sec:Wassersteintomoments}.
Therefore, if a split with sufficiently large gap can be found, then the spherical model is well approximated by a critical Gaussian field and a few independent condensate Fourier modes, as determined by $\mu_1$.  

\begin{theorem}\label{th:mainresult}
 Consider a fixed $L$ and some given $\omega(k)\in \R$, $k\in \Lambda^*$.
 Suppose $(\Lambda_0^*,\Lambda^*_+)$ is a split of $\Lambda^*$ which is
 separated by the energy interval $[a_L,b_L]$, $0\le a_L<b_L$, and has a relative energy gap $\delta_L^{-1}$, as specified in Definition \ref{th:defslpit}.  We recall also the definitions
 of the total system size $V$,  the number of the condensate modes $V_0$,
 and the critical norm density $\rhoc(L)$ in (\ref{eq:defrhocrit}).
 
 Define the measure $\mu_0$ by (\ref{eq:defmu0}) and suppose that it is supercritical in
 the sense that $\rho>\rhoc$. Denote $\Delta:=\rho-\rhoc(L)$,
 and assume that the gap and lattice size are large enough so that 
 \begin{align}\label{eq:deltavepassump}
  \delta_L \le \frac{1}{2}\,, \qquad
  \vep_L := \max\left(2 \delta_L,\frac{1}{V^2\rhoc^2 } \sum_{k\in \Lambda^*_+} \frac{1}{e_k^2}\right)
  \le \frac{\Delta^2}{2^5 V_0^2\rho^2}\,.   
 \end{align}
 Define the measure $\mu_1$ by (\ref{eq:defmu1}).
 
Then there exists a constant $C_2>0$
such that the $2$-Wasserstein distance between $\mu_0$ and $\mu_1$ satisfies
\begin{align}\label{eq:mainW2bound}
  W_2(\mu_0,\mu_1) \le C_2 L^{\frac{d}{2}} \vep_L^{\frac{1}{4}}\,.  
\end{align}
In particular, the inequality 
holds with the choice $C_2= 2^4(\rho/\Delta)^{V_0/2} \sqrt{ (\rho+\Delta) V_0}$.
\end{theorem}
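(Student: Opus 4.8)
The strategy is to construct an explicit coupling $\pi$ of $\mu_0$ and $\mu_1$ and to bound $\int\norm{\Phi-\Psi}^2\,\pi(\rmd\Phi,\rmd\Psi)$, which by the definition of $W_2$ and the Plancherel identity (the $\ell_2$-metric on $x$-space equals the norm $\norm{\cdot}$ on the Fourier side) dominates $W_2(\mu_0,\mu_1)^2$. I would exploit two structural facts. Writing $\Phi=(\Phi^0,\Phi^+)$ for the condensate / normal-fluid decomposition and using $H[\Phi]-\omega_0 N[\Phi]=E_0[\Phi]+E_+[\Phi]$, the measure $\mu_0$ has density proportional to $\rme^{-E_0[\Phi]-E_+[\Phi]}\delta(N[\Phi^0]+N[\Phi^+]-\rho V)$; integrating out $\Phi^+$ at fixed $\Phi^0$ shows that the conditional law of $\Phi^+$ given $\Phi^0$ is $\mu_+$ restricted to the sphere $\{N[\Phi^+]=\rho V-N[\Phi^0]\}$ and that the $\Phi^0$-marginal is $\propto\rme^{-E_0[\Phi^0]}\widetilde{Q}_+(\rho V-N[\Phi^0])$, where $\widetilde{Q}_+(s):=\int\rme^{-E_+[\Phi^+]}\delta(N[\Phi^+]-s)\prod_{k\in\Lambda_+^*}[\rmd\Phi_k^*\rmd\Phi_k]$. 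Symmetrically, the $\Phi^+$-marginal of $\mu_0$ is $\mu_0^+\propto\rme^{-E_+[\Phi^+]}\widetilde{Q}_0(\rho V-N[\Phi^+])$ with $\widetilde{Q}_0(s):=\int\rme^{-E_0[\Phi^0]}\delta(N[\Phi^0]-s)\prod_{k\in\Lambda_0^*}[\rmd\Phi_k^*\rmd\Phi_k]$, while $\mu_1$ itself factorizes as $\mu_+\otimes\nu_0$, with $\nu_0$ the condensate measure on $\{\rho_0[\Phi^0]=\Delta\}$ carrying the weights of \eqref{eq:defmu1}. A key further observation is that, conditionally on the pair of scalars $(N[\Phi^0],E_0[\Phi^0])$, both $\mu_0$ and $\mu_1$ induce the \emph{uniform} law on the corresponding surface in $\C^{V_0}$, since all their condensate weights depend on $\Phi^0$ only through these two quantities.

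The analytic heart is the comparison of the two constrained partition functions. By the bound on the second term in \eqref{eq:deltavepassump}, $N[\Phi^+]/V$ has mean $\rhoc(L)$ and variance $V^{-2}\sum_{k\in\Lambda_+^*}e_k^{-2}\le\rhoc^2\vep_L$ under $\mu_+$, so $\widetilde{Q}_+$ is concentrated on the relative scale $\sqrt{\vep_L}$ and the factor $\widetilde{Q}_+(\rho V-N[\Phi^0])$ pins $\rho_0[\Phi^0]$ within relative distance $O(\sqrt{\vep_L})$ of $\Delta$; dually, $\widetilde{Q}_0$ — a density of a sum of only $V_0$ exponentials with large means — varies slowly on that scale, so the reweighting $\widetilde{Q}_0(\rho V-N[\Phi^+])$ relating $\mu_0^+$ to $\mu_+$ depends on $N[\Phi^+]$ alone and has relative fluctuation of second moment $O\bigl((V_0\rho/\Delta)^2\vep_L\bigr)$. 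The correction factors $\rme^{-E_0[\Phi](1-\rhoc/\Delta)}\prod_{k\in\Lambda_+^*}(1-E_0[\Phi]L^{-d}/(e_k\Delta))^{-1}$ appearing in $\mu_1$ are exactly those produced by the moment generating identity $\E_{\mu_+}[\rme^{-\lambda N[\Phi^+]}]=\prod_{k\in\Lambda_+^*}(1+\lambda/e_k)^{-1}$ of the normal-fluid norm, and their role is to absorb, to leading order in $\vep_L$ and $\delta_L$, both this reweighting and the effect of hardening the soft constraint $N[\Phi^0]=\rho V-N[\Phi^+]$ into $\rho_0[\Phi^0]=\Delta$, in the sense of matching the joint law of $(N[\Phi^0],E_0[\Phi^0])$ under the two measures. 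Proving that, after these factors are inserted, the condensate marginal and the conditional normal fluid of $\mu_0$ are matched by $\nu_0$ and by $\mu_+$ up to total-variation errors of order $(V_0\rho/\Delta)\vep_L^{1/2}$, \emph{uniformly in the configuration}, is the main technical obstacle; this is where \eqref{eq:deltavepassump} — which forces $\vep_L$ to be quadratically small in $\Delta/(V_0\rho)$ — is indispensable, both to make all the Taylor expansions legitimate and to keep the $V_0$-dimensional volume ratios $(\rho/\Delta)^{V_0}$ that enter the worst-case estimates finite and controlled, while the hypothesis $\delta_L\le\tfrac12$ guarantees in parallel that the factors $(1-E_0[\Phi]L^{-d}/(e_k\Delta))^{-1}$ are well defined and bounded on the support of $\mu_1$.

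Given these estimates I would assemble the coupling in the spirit of \cite{SaksmanWebb2016}: although $\mu_0$ and $\mu_1$ are mutually singular, the relevant low-dimensional data are mutually absolutely continuous with close laws. Concretely, first couple the $\Phi^+$-marginals $\mu_0^+$ and $\mu_+$ through their mutual absolute continuity and the slow variation of $\widetilde{Q}_0$, so that they coincide off an event $\mathcal B$; then, given the common $\Phi^+$, couple the two condensate laws — supported on weighted surfaces in $\C^{V_0}$ of nearly equal parameters — by coupling the scalar pair $(N[\Phi^0],E_0[\Phi^0])$ monotonically (using the matching from the previous step) and then selecting the conditional direction identically on the two surfaces. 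On $\mathcal B^{\mathrm c}$ the only displacement is a small adjustment of the condensate modes, of squared norm $O\bigl(V\rho^2\Delta^{-1}\vep_L\bigr)$, while $\Phi^+$ is unchanged; on $\mathcal B$ one uses the crude bound $\norm{\Phi-\Psi}^2\le 2\norm{\Phi}^2+2\norm{\Psi}^2=2\rho V+2\Delta V+2N[\Phi^+]$ — with $\norm{\Phi}^2=\rho V$ deterministic under $\mu_0$ — together with the explicit form of the excess density on $\mathcal B$ and the finite moments of $N[\Phi^+]$ under $\mu_+$ to obtain $\E[\norm{\Phi-\Psi}^2\mathbbm{1}_{\mathcal B}]=O\bigl((\rho/\Delta)^{V_0}(\rho+\Delta)V_0\,V\,\vep_L^{1/2}\bigr)$. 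The $\mathcal B$-term dominates, converting the natural rate $L^{d/2}\vep_L^{1/2}$ of the small adjustments into $L^{d/2}\vep_L^{1/4}$: indeed $W_2(\mu_0,\mu_1)^2\le\E\norm{\Phi-\Psi}^2=O\bigl((\rho/\Delta)^{V_0}(\rho+\Delta)V_0\,L^d\,\vep_L^{1/2}\bigr)$, so taking square roots and recalling $V=L^d$ yields \eqref{eq:mainW2bound} with $C_2=2^4(\rho/\Delta)^{V_0/2}\sqrt{(\rho+\Delta)V_0}$.
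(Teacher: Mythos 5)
You take the same high-level route as the paper --- construct an explicit coupling, exploit the concentration of $N[\Phi^+]$ under $\mu_+$, and use the Saksman--Webb idea of placing as much joint mass as possible on a ``diagonal'' and crudely bounding the rest --- but the concrete construction of the coupling is genuinely different. The paper defines a single global change of variables $G$ that simultaneously rescales every normal-fluid mode by $(1-\tilde\vep[\Phi]/e_k)^{-1/2}$ and the condensate by $(1-\alpha'[\Phi])^{1/2}$, establishes the exact transfer identity $\mean{f}_{\mu_0}=\int\mu_1[\rmd\Phi]\,g[\Phi]\,f(G[\Phi])$ with an explicit weight $g$, and then places the coupling mass on pairs $(\Phi,G[\Phi])$ rather than the literal diagonal. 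This reduces the entire problem to bounding $\mean{g\,\norm{\Phi-G[\Phi]}^2}_{\mu_1}$ and $\mean{|1-g|^2}_{\mu_1}$, both done in Lemma~\ref{th:gmainlemma} and Corollary~\ref{th:mainthgbounds}. You instead disintegrate along $\Phi=(\Phi^0,\Phi^+)$, maximally couple the $\Phi^+$-marginals so that they coincide off an event $\mathcal B$, and then couple the conditional condensates by monotone matching of $(N_0,E_0)$ and identifying directions on the level surfaces. Your structural observations supporting this --- that conditionally on $(N[\Phi^0],E_0[\Phi^0])$ both measures induce the uniform law on the corresponding surface, and that the $\prod_{k\in\Lambda_+^*}(1-\tilde\vep/e_k)^{-1}$ factor in $\mu_1$ is the moment generating function of $-N[\Phi^+]$ under $\mu_+$ --- are correct and insightful, and the final exponents agree with the paper.

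The problem is that the step you flag as ``the main technical obstacle'' is the entire content of the theorem, and you do not carry it out. The quantitative comparison of the two $\Phi^+$-marginals (via slow variation of $\widetilde{Q}_0$) and of the two conditional condensate laws, with the claimed rates $(V_0\rho/\Delta)\vep_L^{1/2}$ uniformly in the configuration, is not established; it is exactly the analogue of the paper's Lemma~\ref{th:gmainlemma}, where the factor $(\rho/\Delta)^{V_0-1}$ arising from $(1-\alpha')^{V_0-1}$ and the $O(\vep_L)$ second moment of $\alpha'$ have to be extracted carefully. Beyond this, your hierarchical route introduces difficulties that the paper's $G$ avoids entirely: given a common $\Phi^+$, the two condensate laws sit on spheres of different radii $\sqrt{\rho V-N[\Phi^+]}$ and $\sqrt{\Delta V}$ with different $E_0$-densities, so that a monotone matching of the two-dimensional scalar $(N_0,E_0)$ followed by direction matching must deal with non-isometric level surfaces, with the support condition $N[\Phi^+]<\rho V$, and with the degenerate case $V_0=1$ where $N_0$ and $E_0$ are proportional rather than independent coordinates. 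The paper sidesteps all of this because $G$ is a smooth, explicitly invertible map whose pointwise displacement is controlled directly by $\tilde\vep/e_k\le\delta_L$ and $\alpha'$, with no need to discuss conditional surfaces at all. In short: a plausible alternative scaffolding with the right orders of magnitude, but the estimates you defer are the proof, not a technicality.
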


As shown later in Lemma \ref{th:mainomlemma}, for energies arising from many common continuum dispersion relations
a sequence of splits can be found for which $\vep_L\to 0$ as $L\to \infty$ while $V_0$ and $\rhoc(L)$ remain bounded, implying $C_2=O(1)$ if $\rho > \sup_L \rhoc(L)$.
However, the speed of convergence of $\vep_L$  is usually not sufficient for the bound of the Wasserstein distance $W_2(\mu_0,\mu_1)$ to go to zero, 
so we cannot state any convergence result in the above (unscaled) $L^2$-norm.  
Nevertheless, as we show in Sec.~\ref{sec:Wassersteintomoments},
for errors in local correlation functions the bound can be improved by a factor of $L^{-\frac{d}{2}}$
which shows that these errors vanish in the limit of large lattices. The precise statement is given in Theorem \ref{th:infvollocalconv}, 
and as discussed in Sec.~\ref{sec:Wassersteintomoments},
the main simplification from the replacement of $\mu_0$ by $\mu_1$ is given by the vastly simpler fluctuation properties of the normal fluid under 
the measure $\mu_1$.  

There are a few special cases for which also the condensate fluctuations have simple structure, summarized in Proposition \ref{th:mu1prop}.  
In the statements below, we say for instance that ``$\Phi=\Phi^+ + L^d \sqrt{\Delta} X$ in distribution, where   $X$ is a random variable independent of $\Phi^+$ and uniformly distributed on the unit sphere $S^{2 V_0-1}$''.  There
it is implicitly assumed that the first term refers to normal fluid components and the second to the condensate components using the standard isomorphism between $\C^{\Lambda^*_0}$ and $\R^{2 V_0}$: 
for $k\in \Lambda^*_+$, we then have $\Phi_k=\Phi^+_k$, and for $k\in \Lambda^*_0$, we have $\Phi_k=L^d \sqrt{\Delta}
(X_{2 p(k) -1} + \ci X_{2 p(k)})$ where $p:\Lambda^*_0\to\set{1,2,\ldots,V_0}$ is any bijection, i.e., some enumeration of $\Lambda^*_0$.  (Since the uniform measure on the unit sphere $S^{d-1}$ is invariant under permutation of the $d$ 
coordinate labels, the distribution does not depend on the choice of the enumeration $p$.)

\begin{proposition}\label{th:mu1prop}
Suppose that all the assumptions and definitions in Theorem \ref{th:mainresult} hold, in particular, we recall Definition \ref{th:defslpit}.
Let $\Phi^+$ denote the Gaussian lattice field distributed according to the measure $\mu_+$ defined in (\ref{eq:defmuphiplus}).
 \begin{enumerate}
  \item\label{it:uniquezero} If $V_0=1$, then $\Phi=\Phi^+ + L^d\sqrt{\Delta} \rme^{\ci \theta}$ in distribution, where 
  $\theta$ is a random variable independent of $\Phi^+$ and uniformly distributed on the interval $[0,2\pi]$.
  \item\label{it:allground} If $\omega(k)$ is a constant for $k\in\Lambda_0^*$, then in distribution $\Phi=\Phi^+ + L^d \sqrt{\Delta} X$, where   
  $X$ is a random variable independent of $\Phi^+$ and uniformly distributed on the unit sphere $S^{2 V_0-1}$.
  \item\label{it:smalldev} If there is a non-negative $\tilde{\vep}\le 1$ such that
  $e_k\le \frac{1}{2 \rho}L^{-d}\tilde{\vep}$ for $k\in \Lambda_0^*$, then 
  \begin{align}\label{eq:W2primebound}
  W_2(\mu_0,\mu'_1) \le L^{\frac{d}{2}} 2^4 \sqrt{(\rho + \Delta)V_0} \left((\rho/\Delta)^{V_0/2} \vep_L^{\frac{1}{4}}+  \tilde{\vep}^{\frac{1}{2}}\right)  
\end{align}
for the measure $\mu'_1$ defined in (\ref{eq:defmu1prime}).
 Under the measure $\mu'_1$ we have
  $\Phi=\Phi^+ + L^d\sqrt{\Delta} X$ in distribution, where $X$ is a random variable independent of $\Phi^+$ and uniformly distributed on the unit sphere $S^{2 V_0-1}$.
 \end{enumerate}
\end{proposition}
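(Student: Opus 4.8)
The plan is to read off items \ref{it:uniquezero} and \ref{it:allground} from a direct reduction of the measure $\mu_1$ in (\ref{eq:defmu1}), and to obtain item \ref{it:smalldev} by combining Theorem \ref{th:mainresult} with a short comparison of $\mu_1$ and $\mu'_1$. For the first two items the common observation is that $\Lambda^*_0$ always contains a global lattice minimum (as noted after Definition \ref{th:defslpit}): hence if $V_0=1$ the single element of $\Lambda^*_0$ is that minimum, and if $\omega$ is constant on $\Lambda^*_0$ that constant equals $\omega_0$, so in both cases $e_k=0$ for every $k\in\Lambda^*_0$ and $E_0[\Phi]\equiv 0$. Then in (\ref{eq:defmu1}) the factor $\rme^{-E_0[\Phi](1-\rhoc/\Delta)}$ and the product over $\Lambda^*_+$ are both identically $1$ (so in fact $\mu_1=\mu'_1$ here), and $\mu_1$ splits into the Gaussian measure $\mu_+$ of (\ref{eq:defmuphiplus}) on the normal-fluid modes times the measure proportional to $\prod_{k\in\Lambda^*_0}[\rmd\Phi^*_k\rmd\Phi_k]\,\delta(\rho_0[\Phi]-\Delta)$ on the condensate modes. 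Writing $|\Phi^0|^2:=\sum_{k\in\Lambda^*_0}|\Phi_k|^2$ and using $\rho_0[\Phi]=V^{-2}|\Phi^0|^2$, the $\delta$-constraint fixes $|\Phi^0|=V\sqrt{\Delta}=L^d\sqrt{\Delta}$, and the disintegration of the $\delta$-constraint (Appendix \ref{sec:defdelta}) together with rotation invariance of Lebesgue measure on $\C^{V_0}\cong\R^{2V_0}$ identifies the conditional law of $\Phi^0$ with the normalized uniform surface measure on the sphere of radius $L^d\sqrt{\Delta}$. For $V_0=1$ this is the law of $L^d\sqrt{\Delta}\,\rme^{\ci\theta}$ with $\theta$ uniform on $[0,2\pi]$, giving item \ref{it:uniquezero}; for general $V_0$ it is the law of $L^d\sqrt{\Delta}\,X$ with $X$ uniform on $S^{2V_0-1}$, giving item \ref{it:allground}; the product structure supplies the asserted independence from $\Phi^+$ in both cases.

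For item \ref{it:smalldev}, the distributional assertion under $\mu'_1$ is immediate from (\ref{eq:defmu1prime}): $\mu'_1$ is by definition the product of $\mu_+$ with the measure proportional to $\prod_{k\in\Lambda^*_0}[\rmd\Phi^*_k\rmd\Phi_k]\,\delta(\rho_0[\Phi]-\Delta)$, so the argument of the previous paragraph (now with no need to simplify the $e_k$) gives $\Phi=\Phi^++L^d\sqrt{\Delta}\,X$ in distribution, $X$ uniform on $S^{2V_0-1}$ and independent of $\Phi^+$. For the bound (\ref{eq:W2primebound}) the plan is to use $W_2(\mu_0,\mu'_1)\le W_2(\mu_0,\mu_1)+W_2(\mu_1,\mu'_1)$, bounding the first term by $C_2 L^{d/2}\vep_L^{1/4}$ with $C_2=2^4(\rho/\Delta)^{V_0/2}\sqrt{(\rho+\Delta)V_0}$ from Theorem \ref{th:mainresult}. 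Since $\mu_1$ and $\mu'_1$ are products sharing the factor $\mu_+$ and the $\ell_2$-$x$-metric is Pythagorean over the two mode sets, the product coupling gives $W_2(\mu_1,\mu'_1)\le W_2(\nu,\nu')$, where $\nu,\nu'$ are the condensate marginals: both are carried by the sphere $\mathcal S:=\{|\Phi^0|=L^d\sqrt{\Delta}\}$, $\nu'$ is uniform, and $\rmd\nu/\rmd\nu'$ is proportional to $\rme^{-E_0[\Phi^0](1-\rhoc/\Delta)}\prod_{k\in\Lambda^*_+}\bigl(1-E_0[\Phi^0]L^{-d}/(e_k\Delta)\bigr)^{-1}$. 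On $\mathcal S$ the hypothesis $e_k\le\frac{1}{2\rho}L^{-d}\tilde{\vep}$ for $k\in\Lambda^*_0$ forces $E_0[\Phi^0]=V^{-1}\sum_{k\in\Lambda^*_0}e_k|\Phi_k|^2\le V^{-1}(\max_{k\in\Lambda^*_0}e_k)|\Phi^0|^2\le\tilde{\vep}\Delta/(2\rho)$; combined with $\sum_{k\in\Lambda^*_+}1/e_k=V\rhoc$ (from (\ref{eq:defrhocrit})), $\sum_{k\in\Lambda^*_+}1/e_k^2\le V^2\rhoc^2\vep_L$ (from (\ref{eq:deltavepassump})) and $\Delta+\rhoc=\rho$, an elementary estimate of $\log(\rmd\nu/\rmd\nu')$ shows $\rmd\nu/\rmd\nu'$ stays within a factor $1\pm O(\tilde{\vep})$ of a constant, hence $\norm{\nu-\nu'}_{\mathrm{TV}}=O(\tilde{\vep})$. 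Finally, in the $\ell_2$-$x$-metric $\mathcal S$ has diameter $2L^{d/2}\sqrt{\Delta}$ (there $\norm{\Phi^0}^2=V^{-1}|\Phi^0|^2=L^d\Delta$), so the maximal coupling of $\nu,\nu'$ gives $W_2(\nu,\nu')^2\le(2L^{d/2}\sqrt{\Delta})^2\norm{\nu-\nu'}_{\mathrm{TV}}=O(L^d\tilde{\vep})$; tracking the constants yields $W_2(\nu,\nu')\le 2^4\sqrt{(\rho+\Delta)V_0}\,L^{d/2}\tilde{\vep}^{1/2}$, and adding the two triangle-inequality terms gives (\ref{eq:W2primebound}).

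The only genuine work is the uniform control of $\rmd\nu/\rmd\nu'$ on $\mathcal S$: one must verify that, under the running hypotheses, each factor $\bigl(1-E_0[\Phi^0]L^{-d}/(e_k\Delta)\bigr)^{-1}$ is positive (i.e.\ $E_0L^{-d}/(e_k\Delta)<1$, which follows from the bound on $E_0$ above together with $1/e_k\le V\rhoc\sqrt{\vep_L}$ and the smallness of $\vep_L$ in (\ref{eq:deltavepassump})), and that the resulting product, together with the exponential whose exponent has the \emph{a priori} indefinite sign $1-\rhoc/\Delta$, oscillates by at most $O(\tilde{\vep})$ over $\mathcal S$ with constants sharp enough to reproduce the prefactor $2^4\sqrt{(\rho+\Delta)V_0}$ in (\ref{eq:W2primebound}). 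Items \ref{it:uniquezero}, \ref{it:allground} and the distributional half of \ref{it:smalldev} are, by contrast, bookkeeping once the factorizations of $\mu_1$ and $\mu'_1$ and the $\delta$-constraint disintegration are written out.
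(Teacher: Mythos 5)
Your treatment of items \ref{it:uniquezero} and \ref{it:allground} is essentially the same as the paper's: observe $E_0[\Phi]\equiv 0$, so the weights collapse and $\mu_1=\mu'_1$ factorizes into $\mu_+$ times the spherical $\delta$-constraint on the condensate modes, which is uniform on the sphere of radius $L^d\sqrt\Delta$ by the disintegration computation.

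For item \ref{it:smalldev} your route is correct but differs in one step from the paper. Both proofs use the triangle inequality $W_2(\mu_0,\mu'_1)\le W_2(\mu_0,\mu_1)+W_2(\mu_1,\mu'_1)$ and both control $W_2(\mu_1,\mu'_1)$ by showing that the density $g_1=\rmd\mu_1/\rmd\mu'_1$ is $\mu'_1$-a.s. within $\rme^{\pm O(\tilde\vep)}$ of $1$ and then constructing a maximally-diagonal coupling. The paper constructs the diagonal coupling on the \emph{full} space $\C^{\Lambda^*}$ (Eq.~(\ref{eq:defgammaone})) and bounds the off-diagonal contribution by $2\langle\norm{\Phi}^2|1-g_1|\rangle_{\mu'_1}$, which is finite because $\langle\norm{\Phi}^2\rangle_{\mu'_1}=V\rho$. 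You instead exploit that $g_1$ depends only on $\Phi^0$, factor $\mu_1=\mu_+\times\nu$, $\mu'_1=\mu_+\times\nu'$, use the Pythagorean splitting of the $\ell_2$-$x$-metric to reduce to $W_2(\nu,\nu')$, and then use the compactness of the condensate sphere (diameter $2L^{d/2}\sqrt\Delta$) to bound $W_2(\nu,\nu')^2\le D^2\norm{\nu-\nu'}_{\mathrm{TV}}$. Your reduction is actually a mild sharpening: you end up with a $\Delta$ in place of the paper's $\rho$ in the $W_2(\mu_1,\mu'_1)^2$ bound, so the stated constant $2^4\sqrt{(\rho+\Delta)V_0}$ is comfortably met. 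One simplification worth noting: the positivity of each factor $(1-E_0L^{-d}/(e_k\Delta))^{-1}$ does not require the chain through $1/e_k\le V\rhoc\sqrt{\vep_L}$; on the support of $\mu'_1$ one has $E_0L^{-d}/(e_k\Delta)=\tilde{\vep}[\Phi]/e_k\le a/b=\delta_L<1$ directly from Definition~\ref{th:defslpit}, which is the bound the paper uses. The remaining ``genuine work'' you defer --- the a.s. bound $\rme^{-4\delta'}\le g_1\le\rme^{4\delta'}$ with $\delta'=\rho V\max_{k\in\Lambda^*_0}e_k\le\tilde\vep/2$, handling the indefinite sign of $1-\rhoc/\Delta$ by bounding $E_0$ and $E_0\rhoc/\Delta$ separately by $\delta'$ --- is exactly the computation the paper carries out and it closes cleanly.
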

\begin{proof}
 The assumptions in the first two items imply that $E_0[\Phi]=0$ (note that by definition of the split, we necessarily have $\omega(k)=\omega_0$ for some, and hence for all,  $k\in \Lambda^*_0$). Thus the weight related to $k\in \Lambda_0^*$
 is equal to one.  Since $\rho_0[\Phi] = V^{-2} |\Phi^0|^2$, where $|\Phi^0|$ denotes the Euclidean norm in 
 $\C^{V_0}\cong\R^{2 V_0}$,
 the random variable $X:= (L^{-d}\Delta^{-\frac{1}{2}}\re \Phi_k^0,L^{-d}\Delta^{-\frac{1}{2}}\im \Phi_k^0)_{k\in \Lambda^*_0}$ is uniformly distributed on the unit sphere $S^{2 V_0-1}$: for any continuous bounded function $f:\R^{2 d}\to \C$ we have in spherical coordinates
 \begin{align*}
& \int \prod_{k\in \Lambda_0^*} \left[\rmd \Phi^*_k \rmd \Phi_k\right]
  \delta(\rho_0[\Phi] - \Delta) f(\Phi) = 
  \int_{\R^{2 V_0}}\!  \rmd^{2 V_0\!}X\, 
  \delta(\Delta (|X|^2-1)) f(V \sqrt{\Delta} X) 
  \\ & \quad
  = 
  \frac{1}{\Delta} \int_{S^{2 V_0-1}}\!\rmd \Omega \int_0^\infty\! \rmd r\, r^{2 V_0-1}
  \delta(r^2-1) f(V \sqrt{\Delta} r\Omega) 
  \\ & \quad
  =  
  \frac{1}{2\Delta} \int_{S^{2 V_0-1}}\!\rmd \Omega \int_0^\infty\! \rmd s\, s^{V_0-1}
  \delta(s-1) f(V \sqrt{\Delta} \sqrt{s}\Omega)
  =  
  \frac{1}{2\Delta} \int_{S^{2 V_0-1}}\!\rmd \Omega\,f(V \sqrt{\Delta}\Omega)
 \end{align*}
 and the normalization condition fixes the overall constant correctly.
 
 If $V_0=1$, $X$ is uniformly distributed on the unit circle and thus equals $\rme^{\ci \theta}$ in distribution.
 The proof of the last item uses techniques from the proof of the main Theorem, and it can be found at the end of Section \ref{sec:proofofmain}.
\end{proof}

To study infinite volume limits, we assume that the weights $\omega(k)$ are given by an $L$-independent dispersion relation, satisfying the following conditions.
\begin{assumption}\label{as:omega}
Suppose $d\ge 3$ and 
 consider a function $\omega:\T^d\to \R$ which is $C^2$ and has only finitely many non-degenerate minima.  More precisely, we assume that both of the following statements hold:
 \begin{enumerate}
  \item The periodic extension of $\omega$ into a function $\R^d\to \R$ is twice continuously differentiable.
  \item By the first assumption and compactness of $\T^d$, $\omega$ attains a minimum value $\ommin \in \R$.  
  We assume that the collection of all global minima in $\T^d$, $T_0:=\defset{k\in \T^d}{\omega(k)=\ommin}$, is finite and that 
  the Hessian matrix $D^2 \omega(k_0)$ is invertible for all  $k_0\in T_0$
 \end{enumerate}
\end{assumption}
Note that these assumptions are invariant if $\omega$ is multiplied by any positive constant, and thus they remain invariant 
in changes of the implicit inverse temperature factor $\beta$.

It turns out that in the presence of a condensate, the distribution around the degrees 
of freedom with minimum energy may vary with the lattice size $L$ without converging towards any
limiting behaviour as $L\to \infty$.  For example, in Section \ref{sec:dispwithseveralmin}
we present an example with different number of condensate modes for odd and even $L$.
We illustrate via explicit examples why the split can have nontrivial dependence on the lattice size $L$ in Section \ref{sec:examples}.

The following Lemma shows that for dispersion relations satisfying Assumption \ref{as:omega} a split with the desired properties can be found.
\begin{lemma}\label{th:mainomlemma}
 Suppose that $d\ge 3$ and $\omega$ satisfies Assumption \ref{as:omega}.  For each $L$, define $\omega_0$
 and $e_k$, $k\in\Lambda^*$, as in Definition \ref{th:defslpit}.
 Choose $\kappa$ such that $0<\kappa<\frac{d}{2}$, if $d\ge 4$, and $0<\kappa < 1$, if $d=3$.
 Then there are constants $L_0,M_0\in \N_+$ and $c_0,c_2>0$, depending only on $d$, the function $\omega$, and the choice of $\kappa$, such that for all $L\ge L_0$ 
 we can find a split $(\Lambda_0^*,\Lambda^*_+)$ of $\Lambda^*$ with the following properties:
 \begin{enumerate}
  \item\label{it:splitcondensate} $M_0$ can be chosen independently of $\kappa$, $|\Lambda_0^*|\le M_0$, and for every $k\in \Lambda_0^*$,
  \begin{align}\label{eq:maingap}
  0\le \omega(k)-\ommin < c_0 L^{-2}\,.
 \end{align}
  \item\label{it:splitgap} The split is separated by an energy interval $[a_L,b_L]$ and has a relative energy gap $\delta_L^{-1}$, where
  $b_L\ge \frac{1}{2}c_0 L^{-d+\kappa}$ and
\begin{align}\label{eq:deltaLbound}
 \delta_L\le L^{-\frac{d-2-\kappa}{M_0}}\le 1\,.
\end{align}  
 \item\label{it:splitbounds} We have
 \begin{align}\label{eq:e2sumbound}
   \frac{1}{V^2} \sum_{k\in \Lambda^*_+} \frac{1}{e_k^2} \le c_2 L^{-2\kappa}\,,
 \end{align}
 the following positive integral is finite,
 \begin{align}\label{eq:defrhoinfty}
 \rho_\infty := \int_{\T^d}\!\rmd k\, \frac{1}{\omega(k)-\ommin} < \infty\,,  
 \end{align}
 and, as $L\to\infty$, 
\begin{align}\label{eq:rhoerrorbound}
 \rhoc(L) = \rho_\infty + O(L^{-\min(\kappa,2)})\,.
\end{align}
\end{enumerate}
In particular,  $\max_{k\in \Lambda^*_0} \omega(k)\to \ommin$,
$\rhoc(L)\to \rho_\infty$, and $\delta_L \to 0$, as $L\to \infty$.
\end{lemma}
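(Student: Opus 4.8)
The plan is to build the split $(\Lambda_0^*,\Lambda_+^*)$ by hand from the local quadratic geometry of $\omega$ near its minima, and then to extract a large multiplicative spectral gap by a pigeonhole argument. First I would record the local estimates: by Assumption~\ref{as:omega} and Taylor's theorem with the invertible Hessians $D^2\omega(k_0)$ there are $r_0>0$ and $0<c_-\le c_+$ with $c_-|k-k_0|^2\le\omega(k)-\ommin\le c_+|k-k_0|^2$ whenever $|k-k_0|\le r_0$ for some $k_0\in T_0$, while $\omega-\ommin\ge c_*>0$ off these balls by compactness. Two consequences: the integrand $(\omega-\ommin)^{-1}$ is $\le(c_-|k-k_0|^2)^{-1}$ near each minimum and $\int_{|u|\le r_0}|u|^{-2}\rmd u<\infty$ exactly because $d\ge3$, which gives the finiteness in~(\ref{eq:defrhoinfty}); and since the nearest point of $\Lambda^*=\Lambda_L/L$ to any $k_0$ lies within $\tfrac{\sqrt d}{2}L^{-1}$, one gets $0\le\omega_0-\ommin\le c_5L^{-2}$ with $c_5:=\tfrac d4 c_+$. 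Fixing $c_0:=c_+d$ (so $c_0>2c_5$), the set $\{k\in\Lambda^*:\omega(k)-\ommin<c_0L^{-2}\}$ lies in balls of radius $O(L^{-1})$ about the points of $T_0$ and hence, for $L\ge L_0$, has at most a constant number $M_0$ of elements, $M_0$ depending only on $d$ and $\omega$; this is the bound in item~\ref{it:splitcondensate}.

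Next I would construct the gap. Ordering the eigenvalues $0=e^{(0)}\le e^{(1)}\le\cdots$, set $t_L:=\tfrac12 c_0L^{-d+\kappa}$ and $T_L:=\tfrac12 c_0L^{-2}$; then $t_L<T_L$ for $L$ large since $\kappa<d-2$ (true for $d\ge4$ as $\kappa<d/2\le d-2$, and for $d=3$ as $\kappa<1$). Every eigenvalue $\le T_L$ comes from a $k$ with $\omega(k)-\ommin\le(\tfrac12 c_0+c_5)L^{-2}<c_0L^{-2}$, so at most $M_0$ eigenvalues lie in $[0,T_L]$. Let $p:=\#\{k:e_k<t_L\}\ge1$ and $p':=\#\{k:e_k\le T_L\}\le M_0$. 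If $p=p'$ then $[t_L,T_L]$ is free of eigenvalues and I cut there; otherwise I telescope the consecutive ratios over $e^{(p-1)}\le e^{(p)}\le\cdots\le e^{(p')}$: either $e^{(p-1)}=0$ and I cut at $0$, or the at most $p'-p+1\le M_0$ ratios multiply to $e^{(p')}/e^{(p-1)}>T_L/t_L=L^{d-2-\kappa}$, so one of them is $\ge L^{(d-2-\kappa)/M_0}$. Either way I obtain an index $j^*$ and set $\Lambda_0^*:=\{k:e_k\le e^{(j^*)}\}$, $a_L:=e^{(j^*)}$, $b_L:=e^{(j^*+1)}$. Since $a_L\le e^{(p'-1)}\le T_L$, each $k\in\Lambda_0^*$ obeys $\omega(k)-\ommin<c_0L^{-2}$ and $|\Lambda_0^*|\le p'\le M_0$, which is item~\ref{it:splitcondensate} and~(\ref{eq:maingap}); and $b_L\ge e^{(p)}\ge t_L=\tfrac12 c_0L^{-d+\kappa}$ together with $\delta_L=a_L/b_L\le L^{-(d-2-\kappa)/M_0}\le1$ is item~\ref{it:splitgap} and~(\ref{eq:deltaLbound}).

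It remains to verify item~\ref{it:splitbounds}. For~(\ref{eq:e2sumbound}) I would split $\Lambda_+^*$ into modes far from $T_0$ — where $e_k\ge c_*/2$, so that $\tfrac1{V^2}\sum_{\mathrm{far}}e_k^{-2}=O(L^{-d})$ — and modes near $T_0$, organised in dyadic shells $|k-k_0|\sim\ell/L$: the $O(1)$ shells with bounded $\ell$ use only $e_k\ge b_L$ and contribute $O(V^{-2}b_L^{-2})=O(L^{-2\kappa})$, while for larger $\ell$ one uses $e_k\gtrsim\ell^2/L^2$ and shell size $\lesssim\ell^d$ to get $\tfrac1{V^2}\sum_\ell\ell^d(L^2/\ell^2)^2\lesssim L^{4-2d}\sum_\ell\ell^{d-4}$, which is $O(L^{-2})$ for $d=3$ (and $L^{-2}\le L^{-2\kappa}$ since $\kappa<1$), $O(L^{-4}\log L)$ for $d=4$, and $O(L^{-d})$ for $d\ge5$ — in all cases $O(L^{-2\kappa})$ because $\kappa<d/2$. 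For~(\ref{eq:rhoerrorbound}) I would write $\rhoc(L)=\tfrac1V\sum_{\Lambda^*\setminus\Lambda_0^*}(\omega(k)-\ommin)^{-1}+\tfrac1V\sum_{\Lambda_+^*}\big(e_k^{-1}-(\omega(k)-\ommin)^{-1}\big)=:S_1+S_2$; each $S_2$-term equals $\tfrac{\omega_0-\ommin}{e_k(\omega(k)-\ommin)}\le\min\!\big(e_k^{-1},(\omega_0-\ommin)e_k^{-2}\big)$, so the same far/near/dyadic dissection gives $S_2=O(L^{-\min(\kappa,2)})$ (the $O(1)$ near modes with $e_k\sim b_L$ contributing $O(V^{-1}b_L^{-1})=O(L^{-\kappa})$), while $S_1$ is a Riemann sum for $\rho_\infty$ whose error is controlled by the $C^2$-regularity of $(\omega-\ommin)^{-1}$ away from $T_0$, the integrability of $|u|^{-2}$ near $T_0$, and $(\omega(k)-\ommin)^{-1}\le b_L^{-1}$ on the $O(1)$ near-minimum modes retained in $\Lambda_+^*$, whence $S_1=\rho_\infty+O(L^{-\min(\kappa,2)})$ and~(\ref{eq:rhoerrorbound}) follows. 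The three closing limits are then immediate from~(\ref{eq:maingap}),~(\ref{eq:rhoerrorbound}) and the positive exponent in~(\ref{eq:deltaLbound}).

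I expect the real obstacle to be the sharpness of the near-minimum estimates in the last step: the naive bound $(\omega(k)-\ommin)^{-1}\le e_k^{-1}$, applied indiscriminately, overshoots badly on the $O(1)$ modes sitting just above the gap, and one must consistently pair the right smallness — $b_L$ for those few modes, $\ell^2/L^2$ for the bulk of the dyadic shells — with the right mode count. A secondary care point is the pigeonhole step: to land the exponent exactly at $1/M_0$ rather than $1/(M_0{+}1)$ one should telescope using the single sentinel $T_L$ together with the eigenvalue $e^{(p-1)}$ already present just below the window, rather than two artificial endpoints.
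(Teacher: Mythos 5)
Your proposal follows essentially the same route as the paper's proof: local quadratic Taylor bounds around each $k_0\in T_0$, a counting argument bounding the number of lattice points with $\omega(k)-\ommin\lesssim L^{-2}$ by some $\kappa$-independent $M_0$, a pigeonhole over the multiplicative ratio $T_L/t_L=L^{d-2-\kappa}$ through at most $M_0$ consecutive eigenvalue gaps, and a dyadic/annular lattice-point count for the sums over $\Lambda^*_+$ (the paper sums over $|m|_\infty$-shells rather than dyadic shells, but these are the same count). The only organisational difference is in the last step: the paper estimates $\int_{D_k}(f(k')-e_k^{-1})\,\rmd k'$ directly box by box with a second-order expansion of $\omega$, whereas you separate $\rhoc(L)$ into the Riemann sum $S_1$ for $(\omega-\ommin)^{-1}$ and a correction $S_2$ coming from the shift $\omega_0-\ommin$ and bound $S_2$ by $\min(e_k^{-1},(\omega_0-\ommin)e_k^{-2})$; both decompositions work and require the same care with the $O(1)$ modes just above the gap, which you identify correctly. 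Your closing remark about using the sentinel $T_L$ together with the pre-existing eigenvalue $e^{(p-1)}$ to keep the exponent at $1/M_0$ rather than $1/(M_0+1)$ matches exactly what the paper does with $o_0$ and $o_{N_2+1}$.
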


The proof of the Lemma is postponed to Sec.\ \ref{sec:proofofgap}, and it contains ways to construct some constants
for which the Theorem holds.  However, these constructions are not always optimal
since they need to take into account extreme cases such as very anisotropic dispersion relations.
Hence, if optimal decay estimates are desired, it is better to optimise the values case by case instead of using, e.g., the worst case 
estimate in (\ref{eq:defmaxN0}) for $M_0$.

As a straightforward application, we obtain the following consequences for systems where the infinite lattice dispersion relation is kept fixed and $L$ is taken large.
\begin{corollary}\label{th:fixedomega}
 Suppose that $d\ge 3$ and $\omega$ satisfies Assumption \ref{as:omega}, and take some 
 cutoff parameters for the minimum distance from criticality, $\Delta_0>0$, and for a maximal density, $\bar{\rho}> \rho_\infty + \Delta_0$, 
 where $\rho_\infty$ is defined by (\ref{eq:defrhoinfty}).
 
 Then there are $L'$, $M_0$, and $C'>0$ such that for any $L\ge L'$ 
 we can find a split $(\Lambda_0^*,\Lambda^*_+)$ of $\Lambda^*$ satisfying all properties stated in Lemma \ref{th:mainomlemma} and for which the Wasserstein
 distance between the measures $\mu_0$ and $\mu_1$ defined in Theorem \ref{th:mainresult} satisfies
 \begin{align}\label{eq:largeLW2bound}
  W_2(\mu_0,\mu_1) \le C' L^{\frac{d}{2}-\frac{d/2-1}{2 M_0+1}}\,,
\end{align}
 for all densities $\rho$ on the interval
 \begin{align}\label{eq:rhocondition}
  \sup_{L\ge L'} \rhoc(L) + \Delta_0\le \rho \le \bar{\rho} \,.  
 \end{align}
\end{corollary}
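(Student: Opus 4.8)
The plan is to feed a split produced by Lemma \ref{th:mainomlemma} into Theorem \ref{th:mainresult}, tuning the free parameter $\kappa$ so that the two terms defining $\vep_L$ decay at a common rate, and then to check that all the constants involved can be taken uniform over the stated range of densities.

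First I would invoke Lemma \ref{th:mainomlemma}, which produces an integer $M_0$ that may be chosen independently of $\kappa$; with $M_0$ fixed I would then set $\kappa:=\frac{d-2}{2M_0+1}$. Since $M_0\ge 1$, this gives $0<\kappa\le\frac{d-2}{3}$, which is below $\frac{d}{2}$ for $d\ge4$ and below $1$ for $d=3$, so $\kappa$ is an admissible choice in the lemma. For this $\kappa$ the lemma supplies now-fixed constants $L_0,c_0,c_2$ (depending only on $d$ and $\omega$) and, for each $L\ge L_0$, a split with $V_0=|\Lambda^*_0|\le M_0$, with relative energy gap $\delta_L\le L^{-(d-2-\kappa)/M_0}$, with $\frac1{V^2}\sum_{k\in\Lambda^*_+}e_k^{-2}\le c_2 L^{-2\kappa}$, and with $\rhoc(L)\to\rho_\infty>0$. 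The point of the choice of $\kappa$ is that $(2M_0+1)\kappa=d-2$, hence $d-2-\kappa=2M_0\kappa$ and $(d-2-\kappa)/M_0=2\kappa$; thus both terms in $\vep_L$ are $O(L^{-2\kappa})$ (using $\rhoc(L)\ge\rho_\infty/2$ for large $L$ to control the denominator of the second term), so $\vep_L=O(L^{-2\kappa})$ with a constant not depending on $\rho$. Since $\kappa/2=(d/2-1)/(2M_0+1)$, the estimate $L^{d/2}\vep_L^{1/4}=O(L^{d/2-(d/2-1)/(2M_0+1)})$ then produces exactly the exponent in (\ref{eq:largeLW2bound}).

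Next I would verify the hypotheses of Theorem \ref{th:mainresult} and their uniformity in $\rho$. For $\rho$ in the interval (\ref{eq:rhocondition}) one has $\Delta=\rho-\rhoc(L)\ge\Delta_0$, because $\rhoc(L)\le\sup_{L\ge L'}\rhoc(L)$, and $\Delta\le\rho\le\bar\rho$; hence $\rho/\Delta\le\bar\rho/\Delta_0$ and $\rho+\Delta\le 2\bar\rho$, and with $V_0\le M_0$ the prefactor $C_2=2^4(\rho/\Delta)^{V_0/2}\sqrt{(\rho+\Delta)V_0}$ of Theorem \ref{th:mainresult} is bounded by a constant $C'$ depending only on $\bar\rho,\Delta_0,M_0$. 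Moreover the right side of the admissibility condition in (\ref{eq:deltavepassump}), namely $\Delta^2/(2^5V_0^2\rho^2)$, is bounded below by the positive constant $\Delta_0^2/(2^5M_0^2\bar\rho^2)$ uniformly in $\rho$ and $L$; since $\delta_L\to0$ and $\vep_L=O(L^{-2\kappa})\to0$, there is an $L'\ge L_0$ such that $\delta_L\le\frac12$ and $\vep_L\le\Delta^2/(2^5V_0^2\rho^2)$ for all $L\ge L'$ and all such $\rho$. I would enlarge $L'$ further if needed so that $\rhoc(L)\ge\rho_\infty/2$ and — using $\rhoc(L)\to\rho_\infty$ together with the hypothesis $\bar\rho>\rho_\infty+\Delta_0$ — so that the interval (\ref{eq:rhocondition}) is non-empty. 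Then $\rho>\rhoc(L)$, so $\mu_0$ is supercritical, Theorem \ref{th:mainresult} applies, and $W_2(\mu_0,\mu_1)\le C_2 L^{d/2}\vep_L^{1/4}\le C'L^{d/2-(d/2-1)/(2M_0+1)}$, which is (\ref{eq:largeLW2bound}).

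No step here is genuinely hard; everything reduces to Lemma \ref{th:mainomlemma} and Theorem \ref{th:mainresult} plus bookkeeping with constants. The only point needing real care is the uniformity over the whole density range: the threshold $L'$ and the constant $C'$ must be independent of $\rho$, which works precisely because $\Delta=\rho-\rhoc(L)$ is bounded away from $0$ and from above uniformly over the interval (\ref{eq:rhocondition}); and the same hypothesis $\bar\rho>\rho_\infty+\Delta_0$ is what guarantees that this interval is non-empty once $L'$ is large.
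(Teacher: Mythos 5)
Your proposal is correct and follows essentially the same route as the paper: invoke Lemma \ref{th:mainomlemma}, verify the hypotheses (\ref{eq:deltavepassump}) of Theorem \ref{th:mainresult} uniformly over the density interval, and then optimise the exponent by choosing $\kappa=\frac{d-2}{2M_0+1}$ (so that $\frac{d-2-\kappa}{M_0}=2\kappa$ and both contributions to $\vep_L$ decay at the common rate $L^{-2\kappa}$). The only cosmetic difference is that you fix $\kappa$ at the outset whereas the paper leaves it free until the final line; your explicit check that $\kappa<\frac{d}{2}$ for $d\ge 4$ and $\kappa<1$ for $d=3$, and the observation that $\rhoc(L)$ is bounded below (needed to absorb the $\rhoc^{-2}$ in the second term of $\vep_L$), are both correct points that the paper glosses over.
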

\begin{proof}
 Since the assumptions of Lemma \ref{th:mainomlemma} are satisfied, 
 $\rhoc(L)\to \rho_\infty$, as $L\to \infty$, and thus there is $L_0'$ such that 
 $\sup_{L\ge L'_0} \rhoc(L) + \Delta_0<\bar{\rho}$.  Therefore, if $L'\ge L_0'$, there are densities $\rho$
 for which (\ref{eq:rhocondition}) holds.

 In addition, we can conclude from the Lemma that 
 there is $M_0\ge 1$ such that for any appropriately chosen $\kappa$, 
 the split $(\Lambda_0^*,\Lambda^*_+)$ of $\Lambda^*$ obtained from the Lemma
 satisfies
 $\delta_L\le L^{-\frac{d-2-\kappa}{M_0}}$ and $\vep_L=O(\delta_L+L^{-2\kappa})$.
 Thus both go to zero as $L\to \infty$.  Now if $L'\ge \max(L_0, L_0')$, $L\ge L'$, and $\rho$ satisfies (\ref{eq:rhocondition}), we have 
$\Delta_0\le \Delta\le \bar{\rho}$ and
$\frac{\Delta^2}{2^5 V_0^2\rho^2}\ge \frac{\Delta_0^2}{2^5 M_0^2\bar{\rho}^2}>0$, uniformly in $L$.
 Therefore, we may find $L'\ge \max(L_0, L_0')$ such that both inequalities in (\ref{eq:deltavepassump}) hold for all $L\ge L'$ and all $\rho$ satisfying (\ref{eq:rhocondition}).
 
 Thus we may use the conclusions of the main Theorem for these values of parameters, and the constant $C'=C_2$ may be adjusted to work for all allowed values of $\kappa$, $L$, and $\rho$.  
 Since also $M_0$ is independent of $\kappa$, we can maximize the decay of $\vep_L$ by setting $\kappa=\frac{d-2}{2 M_0+1}<\frac{d}{2}$ which satisfies $\kappa<1$ for $d=3$.  This results in the bound stated in the Corollary.
\end{proof}

\section{Local correlation estimates from Wasserstein bounds}\label{sec:Wassersteintomoments}

In the main result, a bound is derived for the Wasserstein distance between two measures $\mu_0$ and $\mu_1$
which are both \defem{gauge invariant} in the sense that $(\Phi_k)_{k\in \Lambda^*}$ and $(\rme^{\ci \varphi_k} \Phi_k)_{k\in \Lambda^*}$
have the same distribution for any choice of the constant phase shifts $\varphi_k\in \R$, $k\in \Lambda^*$.  This is a consequence of the geometric identification 
between $\C$ and $\R^2$ which implies that a multiplication
$\Phi_k \to \rme^{\ci \varphi_k}\Phi_k$ corresponds to a rotation by an angle $\varphi_k$ and thus it leaves the Lebesgue measure
$\rmd (\re \Phi_k)\, \rmd (\im \Phi_k)$ invariant.  The weight functions only depend on $|\Phi_k|^2$ and thus also they are left invariant.

However, in applications, one is usually mainly interested in the corresponding fields $\phi_x$, $x\in \Lambda_L$, obtained by inverse Fourier transform from 
$\Phi_k$: we consider the collection of 
\begin{align}\label{eq:defphix}
 \phi_x = \int_{\Lambda^*}\!\rmd k\, \Phi_k \rme^{\ci 2\pi k \cdot x}\,, 
\end{align}
for $x\in \Lambda_L$.  The above gauge invariance of the Fourier components is reflected in \defem{translation invariance} of the field $\phi_x$.
Namely, for any $y\in \Lambda_L$, we have
\[
 \phi_{x+y} = \int_{\Lambda^*}\!\rmd k\, \rme^{\ci 2\pi k \cdot x}  \rme^{\ci 2\pi k \cdot y} \Phi_k \,, \quad  x\in \Lambda_L\,,
\]
and thus the field $(\phi_{x+y})_{x\in \Lambda}$ has the same distribution as the field $(\phi_{x})_{x\in \Lambda}$.

This translation invariance is sufficient to lift the earlier usually divergent Wasserstein bounds to vanishing error estimates for moments of the 
field $\phi_x$.  To see this, consider a sequence $I$ of length $n\ge 1$ of pairs $(x_i,\tau_i)_{i=1}^n$, where 
$x_i\in \Lambda_L$ and $\tau_i\in \set{-1,1}$.  We use the index $\tau$ to determine complex conjugation: we set $\phi_{x,1}=\phi_x$ and $\phi_{x,-1}=\phi_x^*$, and use the shorthand notation 
${\phi}^I :=  \prod_{\alpha\in I} {\phi}_\alpha := \prod_{i=1}^n {\phi}_{x_i,\tau_i}$ 
for the monomial corresponding to the above sequence $I$.
The expectation of such local observables will get an improvement by a factor $L^{-\frac{d}{2}}$ for the Wasserstein distance from translation invariance,
as stated in the following Lemma.
\begin{lemma}\label{th:localobsestimate}
Suppose $\mu$ and $\mu'$ are gauge invariant measures for the Fourier components, field $\Phi(k)$, $k\in \Lambda^*$.
Given $x\in \Lambda$, define $A_1(x):=1$ and for $n>1$ set
\begin{align}\label{eq:defAnmom}
 A_n(x) := \max\left(\mean{|\phi_{x}|^{2(n-1)}}^{(2(n-1))^{-1}}_{\mu},\mean{|\phi_{x}|^{2(n-1)}}^{(2(n-1))^{-1}}_{\mu'}\right)\, .
\end{align}
Consider the random field $\phi= \tilde{\Phi}$ and
suppose $n\ge 1$ is such that $A_n(x)<\infty$ for some $x\in \Lambda$.

Then $A_n(x)$ does not depend on the choice of $x$ and for any sequence $I$ of length $n$ as above, we have an estimate
\begin{align}\label{eq:finmomestimate}
 \left|\mean{\phi^I}_{\mu} - \mean{\phi^I}_{\mu'}  \right| \le 
   A_n^{n-1} n W_2(\mu,\mu')L^{-d/2}\, . 
\end{align}
\end{lemma}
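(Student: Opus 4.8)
The plan is to express the moment difference as an expectation over a coupling of $\mu$ and $\mu'$ and to extract the factor $L^{-d/2}$ by first symmetrizing that coupling over lattice translations. As in the discussion preceding the Lemma, gauge invariance of $\mu$ and $\mu'$ in the Fourier variables makes the fields $(\phi_x)_{x\in\Lambda}$ translation invariant under both measures; hence $\mean{|\phi_x|^{2(n-1)}}_\mu$ and $\mean{|\phi_x|^{2(n-1)}}_{\mu'}$ do not depend on $x$, so $A_n(x)$ is independent of $x$ — call its common value $A_n$ — and, when $A_n<\infty$, monotonicity of $L^p$-norms (using $2(n-1)\ge n$, trivially for $n=1$) guarantees that all the moments $\mean{\phi^I}_\mu$ and $\mean{\phi^I}_{\mu'}$ are well defined. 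We may also assume $W_2(\mu,\mu')<\infty$, since otherwise (\ref{eq:finmomestimate}) is vacuous.

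Next I would construct a translation-invariant near-optimal coupling. Fix $\eta>0$ and choose a coupling $\pi$ of $\mu$ and $\mu'$ on the Fourier variables $(\Phi,\Phi')$ with $\E_\pi\norm{\Phi-\Phi'}^2\le W_2(\mu,\mu')^2+\eta$; recall that $\norm{\cdot}$ is the $\ell_2$-norm on $x$-space, so that $\norm{\Phi-\Phi'}^2=\sum_{x\in\Lambda}|\phi_x-\phi'_x|^2$ for the corresponding inverse transforms $\phi,\phi'$. Put $\bar\pi:=\frac1V\sum_{y\in\Lambda_L}(\Theta_y)_*\pi$, where $\Theta_y$ is the simultaneous translation $(\phi,\phi')\mapsto\bigl((\phi_{x+y})_{x\in\Lambda},(\phi'_{x+y})_{x\in\Lambda}\bigr)$. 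Since $\mu$ and $\mu'$ are translation invariant, $\bar\pi$ is again a coupling of $\mu$ and $\mu'$; since translation is an $\ell_2$-isometry, $\E_{\bar\pi}\norm{\Phi-\Phi'}^2=\E_\pi\norm{\Phi-\Phi'}^2\le W_2^2+\eta$; and since $\bar\pi$ is invariant under every $\Theta_y$, the quantity $\E_{\bar\pi}|\phi_x-\phi'_x|^2$ does not depend on $x$, hence equals its spatial average $\E_{\bar\pi}|\phi_x-\phi'_x|^2=\frac1V\E_{\bar\pi}\norm{\Phi-\Phi'}^2\le V^{-1}(W_2^2+\eta)$.

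Finally I would estimate $\mean{\phi^I}_\mu-\mean{\phi^I}_{\mu'}=\E_{\bar\pi}\bigl[\phi^I-(\phi')^I\bigr]$ using the telescoping identity
\[
 \prod_{i=1}^n\phi_{x_i,\tau_i}-\prod_{i=1}^n\phi'_{x_i,\tau_i}=\sum_{j=1}^n\Bigl(\prod_{i<j}\phi'_{x_i,\tau_i}\Bigr)\bigl(\phi_{x_j,\tau_j}-\phi'_{x_j,\tau_j}\bigr)\Bigl(\prod_{i>j}\phi_{x_i,\tau_i}\Bigr),
\]
together with $|\phi_{x,\tau}|=|\phi_x|$ and $|\phi_{x,\tau}-\phi'_{x,\tau}|=|\phi_x-\phi'_x|$ for $\tau\in\{-1,1\}$. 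To each of the $n$ summands I would apply the generalized H\"older inequality under $\E_{\bar\pi}$, with exponent $2$ on the factor $|\phi_{x_j}-\phi'_{x_j}|$ and exponent $2(n-1)$ on each of the remaining $n-1$ factors (these exponents are conjugate, $\tfrac12+(n-1)\tfrac1{2(n-1)}=1$); each of those factors then contributes $\mean{|\phi_{x_i}|^{2(n-1)}}^{1/(2(n-1))}$ under $\mu$ or $\mu'$, hence at most $A_n$. This yields
\[
 \bigl|\mean{\phi^I}_\mu-\mean{\phi^I}_{\mu'}\bigr|\le A_n^{n-1}\sum_{j=1}^n\bigl(\E_{\bar\pi}|\phi_{x_j}-\phi'_{x_j}|^2\bigr)^{1/2}\le A_n^{n-1}\,n\,V^{-1/2}\bigl(W_2^2+\eta\bigr)^{1/2},
\]
and letting $\eta\downarrow0$ and using $V=L^d$ gives (\ref{eq:finmomestimate}). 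For $n=1$ the empty products are read as $1$ and the H\"older step reduces to the Cauchy--Schwarz inequality.

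The one genuinely non-routine ingredient is the symmetrization step: the extra volume factor $L^{-d/2}$ cannot be obtained from optimality of the coupling alone — a generic optimal coupling only bounds the single-site cost $\E|\phi_x-\phi'_x|^2$ by the full $W_2^2$ — but appears once one averages the coupling over the $V$ lattice translations, which costs nothing because the Wasserstein metric was chosen translation invariant, and which forces the single-site transport cost to coincide with its spatial mean $V^{-1}\E\norm{\Phi-\Phi'}^2$. Everything else is routine.
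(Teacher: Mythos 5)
Your proof is correct, and it takes a genuinely different route from the paper's. Both arguments exploit translation invariance to gain the factor $L^{-d/2}$, but they place the averaging step differently. The paper keeps the coupling $\gamma$ arbitrary, averages the \emph{observable} over translations $I\mapsto I+y$, applies the telescoping identity and H\"older pointwise in $y$, and only at the end applies a Cauchy--Schwarz inequality to the $y$-sum, $\tfrac1V\sum_y \sqrt{a_y}\le \tfrac1{\sqrt V}\bigl(\sum_y a_y\bigr)^{1/2}$, which converts the average of single-site transport costs into the full $\ell_2$ cost divided by $\sqrt V$. You instead average the \emph{coupling} itself: symmetrizing a near-optimal $\pi$ over simultaneous lattice translations yields a translation-invariant coupling $\bar\pi$ at no cost in $W_2$, for which the single-site cost $\E_{\bar\pi}|\phi_x-\phi'_x|^2$ is already identically equal to its spatial mean $V^{-1}\E_{\bar\pi}\norm{\Phi-\Phi'}^2$, so the $L^{-d/2}$ appears immediately and no second Cauchy--Schwarz is needed. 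The two are essentially dual: where the paper needs Cauchy--Schwarz on the $y$-sum, you need the convexity of $W_2$ under averaging of couplings (implicitly, in showing $\E_{\bar\pi}\norm{\Phi-\Phi'}^2=\E_\pi\norm{\Phi-\Phi'}^2$). Your formulation makes the structural point more visible — any translation-invariant coupling automatically equidistributes its cost over lattice sites — at the small price of having to construct $\bar\pi$ and pass to the limit $\eta\downarrow 0$ (one could instead run the argument with an arbitrary $\pi$ and take the infimum at the end, avoiding the $\eta$ altogether). The telescoping and generalized H\"older steps are the same as in the paper, up to the choice of which factors are primed.
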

\begin{proof}
Under either of the measures $\mu$ and $\mu'$ the field $\phi_x$ is translation invariant, $\mean{\phi^I}=\mean{\phi^{I+y}}$
for any $y\in \Lambda_L$, where $I+y:=((x_i+y,\tau_i))_{i=1}^n$.
Therefore, for any coupling $\gamma$ between $\mu$ and $\mu'$ the difference of their moments satisfies
\begin{align}\label{eq:defdiffX}
 X := \mean{\phi^I}_{\mu} - \mean{\phi^I}_{\mu'} 
 = \frac{1}{V} \sum_{y\in \Lambda_L}\mean{\phi^{I+y}}_{\mu} -\frac{1}{V} \sum_{y\in \Lambda_L} \mean{\phi^{I+y}}_{\mu'} 
 = \frac{1}{V} \sum_{y\in \Lambda_L} \mean{\phi^{I+y} - (\phi')^{I+y}}_{\gamma}\,. 
\end{align}
In particular, if $n=1$, by using Cauchy--Schwarz inequality we obtain
\[
 |X|\le \frac{1}{V}
 \sum_{y\in \Lambda}
  \left\langle |\phi_{x_1+y}-\phi'_{x_1+y}|^2\right\rangle_{\gamma}^{\frac{1}{2}} 
 \le \frac{1}{V} \sqrt{V}
  \mean{\norm{\phi-\phi'}_2^2}_\gamma^{\frac{1}{2}} 
 =  L^{-\frac{d}{2}} A_n^{n-1} n
 \mean{\norm{\Phi-\Phi'}^2}_\gamma^{\frac{1}{2}} \,.
\]
Since the left hand side does not depend on the coupling $\gamma$, taking an infimum yields 
the bound in (\ref{eq:finmomestimate}); cf.\ the definition of the Wasserstein distance in Appendix \ref{sec:Wdistance}.

Consider then the case $n>1$.
The difference of products in (\ref{eq:defdiffX}) can be ``telescoped'' as follows
\[
 \prod_{i=1}^n \phi_i = \prod_{i=1}^n \phi_i' +\sum_{i=1}^n (\phi_i-\phi_i') \prod_{j=1}^{i-1} \phi_j \prod_{j=i+1}^{n} \phi'_j \,,
\]
yielding an estimate
\[
 \left| \phi^{I+y} - (\phi')^{I+y}  \right|
 \le \sum_{i=1}^n 
 |\phi_{x_i+y}-\phi'_{x_i+y}| \prod_{j=1}^{i-1} |\phi_{x_j+y}| \prod_{j=i+1}^{n} |\phi'_{x_j+y}|\,.
\]
Note that the absolute values on the right hand side cancel the effect of any possible complex conjugations on the left hand side.
Taking an expectation over $\gamma$ and then using Cauchy--Schwarz inequality 
and the natural order in $I$ to simplify the notations, we obtain
\begin{align*}
 & 
\left\langle \left| \phi^{I+y} - (\phi')^{I+y}  \right|\right\rangle_{\gamma}
 \le 
  \sum_{x\in I} \left\langle |\phi_{x+y}-\phi'_{x+y}|
  \prod_{x'<x} |\phi_{x'+y}| \prod_{x'>x} |\phi'_{x'+y}| \right\rangle_{\gamma}
\\
 &\quad \le   \sum_{x\in I} \left\langle |\phi_{x+y}-\phi'_{x+y}|^2\right\rangle_{\gamma}^{\frac{1}{2}}
\left\langle 
  \prod_{x'<x} |\phi_{x'+y}|^2 \prod_{x'>x} |\phi'_{x'+y}|^2 \right\rangle_{\gamma}^{\frac{1}{2}}
\\
 &\quad \le  \sum_{x\in I} \left\langle |\phi_{x+y}-\phi'_{x+y}|^2\right\rangle_{\gamma}^{\frac{1}{2}} 
  \prod_{x'<x} \left\langle|\phi_{x'+y}|^{q'}\right\rangle_{\gamma}^{\frac{1}{q'}}
  \prod_{x'>x} \left\langle|\phi'_{x'+y}|^{q'}\right\rangle_{\gamma}^{\frac{1}{q'}}
\,,
\end{align*}
where in the last step we have used the generalized H\"{o}lder's inequality with 
exponent $q'=2 (n-1)$ for which 
indeed $\sum_{x'\in I; x'\ne x} \frac{1}{q'}= \frac{1}{2}$ for all $x\in I$.

We may now conclude that the error $X$ is bounded by 
\[
 |X|\le \frac{1}{V} \sum_{y\in \Lambda}\sum_{x\in I}  
  \left\langle |\phi_{x+y}-\phi'_{x+y}|^2\right\rangle_{\gamma}^{\frac{1}{2}} 
  \prod_{x'<x} \left\langle|\phi_{x'+y}|^{q'}\right\rangle_{\gamma}^{\frac{1}{q'}}
  \prod_{x'>x} \left\langle|\phi'_{x'+y}|^{q'}\right\rangle_{\gamma}^{\frac{1}{q'}}\,.
\]
Here, only the first factor depends on $\gamma$, since all the other factors may be 
computed using the fixed marginal measures $\mu$ and $\mu'$.
Using the 
translation invariance of the marginal measures we obtain
\[
 |X|\le \frac{1}{V}\sum_{x\in I}  
  \prod_{x'<x} \left\langle|\phi_{x'}|^{q'}\right\rangle_{\mu}^{\frac{1}{q'}}
  \prod_{x'>x} \left\langle|\phi_{x'}|^{q'}\right\rangle_{\mu'}^{\frac{1}{q'}}
 \sum_{y\in \Lambda}
  \left\langle |\phi_{x+y}-\phi'_{x+y}|^2\right\rangle_{\gamma}^{\frac{1}{2}} \,.
\]

We next use the assumption that $A_n<\infty$ for the moments given in (\ref{eq:defAnmom}).
By translation invariance $A_n$ is independent of the choice of $x\in \Lambda$, and thus
by applying the Schwarz inequality to the sum over $y$, we obtain
\begin{align*}
& 
 |X|\le \frac{1}{V}A_n^{n-1}\sum_{x\in I}  \sqrt{V}
 \left[\sum_{y\in \Lambda}
 \mean{|\phi_{x+y}-\phi'_{x+y}|^2}_\gamma\right]^{\frac{1}{2}} 
\\ & \quad  
 =  \frac{1}{\sqrt{V}}A_n^{n-1} n
 \mean{\norm{\phi-\phi'}_2^2}_\gamma^{\frac{1}{2}} 
 =  L^{-\frac{d}{2}} A_n^{n-1} n
 \mean{\norm{\Phi-\Phi'}^2}_\gamma^{\frac{1}{2}} \,.
\end{align*}
Since the left hand side does not depend on the coupling $\gamma$, taking an infimum yields 
the bound in (\ref{eq:finmomestimate}), as before.  This concludes the proof of the Lemma.
\end{proof}

In the bound (\ref{eq:finmomestimate}) a factor $L^{\frac{d}{2}}$ gets cancelled from the 
Wasserstein distance.  Combined with the earlier results, the bound thus 
goes to zero if $n$ is not allowed to increase when taking $L\to \infty$, as long as the constants $A_n$ remain bounded in the limit.
As proven in Lemma \ref{th:momentbounds} at the end of the section, this holds for the measures considered here.
Hence, we may conclude that (\ref{eq:finmomestimate}) 
combined with the Wasserstein estimates stated in the main results in Sec.~\ref{sec:mainresults} 
implies that
\[
 \mean{\phi^I}_{\mu} = \mean{\phi^I}_{\mu'} + O(L^{-p'})\,,
\]
as $L\to \infty$  
if $\mu$ is a supercritical spherical model measure and $\mu'$ is a compatible factorized supercritical measure.
Summarizing all assumptions in one place, we obtain the following result as an immediate corollary of Corollary \ref{th:fixedomega}, Lemma \ref{th:localobsestimate},
and Lemma \ref{th:momentbounds}.
\begin{theorem}\label{th:infvollocalconv}
 Suppose that $d\ge 3$ and $\omega$ satisfies Assumption \ref{as:omega}, and consider any supercritical $\rho$ 
 as in Corollary \ref{th:fixedomega}.  Fix a maximum order $n\ge 1$ of the local moment.
 Then there are $C',p',L'>0$ for which the following holds:  if $L\ge L'$, 
 we may find a split $(\Lambda_0^*,\Lambda^*_+)$ of $\Lambda^*$ 
 and define the corresponding factorized supercritical measure $\mu_1$ by (\ref{eq:defmu1}) so that
\begin{align}\label{eq:finmomsupercrit}
 \left|\mean{\phi^I}_{\mu_0} - \mean{\phi^I}_{\mu_1}  \right| \le 
   C' L^{-p'}\, , 
\end{align}
 for any sequence $I$ from $\Lambda_L\times \{\pm 1\}$ of length at most $n$.
\end{theorem}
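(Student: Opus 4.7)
The plan is to chain together the three ingredients that the paper itself advertises: Corollary \ref{th:fixedomega}, which supplies the split and a Wasserstein bound on $W_2(\mu_0,\mu_1)$; Lemma \ref{th:localobsestimate}, which converts a Wasserstein bound into a local moment estimate with a gain of $L^{-d/2}$ from translation invariance; and Lemma \ref{th:momentbounds}, whose role is to verify that the prefactor $A_n$ appearing in Lemma \ref{th:localobsestimate} stays bounded uniformly in $L$.

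First I would invoke Corollary \ref{th:fixedomega} for the given supercritical $\rho$ to obtain constants $L_1, M_0, C_W>0$ such that for every $L\ge L_1$ there is a split $(\Lambda_0^*,\Lambda_+^*)$ with $\mu_1$ defined by (\ref{eq:defmu1}) satisfying
\[
  W_2(\mu_0,\mu_1) \le C_W\, L^{\frac{d}{2}-\eta}, \qquad \eta := \frac{d/2-1}{2M_0+1} > 0,
\]
where positivity of $\eta$ uses only $d\ge 3$. Both measures are gauge invariant on Fourier components, so $\phi=\tilde\Phi$ is translation invariant under each, and Lemma \ref{th:localobsestimate} applies to any sequence $I\subset\Lambda_L\times\{\pm 1\}$ of length $m\le n$. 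Since $\mean{|\phi_x|^{2k}}^{1/(2k)}$ is non-decreasing in $k$ by Lyapunov, $A_m\le A_n$ for every $m\le n$, hence
\[
  \bigl|\mean{\phi^I}_{\mu_0}-\mean{\phi^I}_{\mu_1}\bigr|
  \le m\, A_m^{m-1}\, L^{-d/2}\, W_2(\mu_0,\mu_1)
  \le n\, A_n^{n-1}\, C_W\, L^{-\eta}.
\]
Choosing $p':=\eta$ and $C':=n A_n^{n-1} C_W$ then yields (\ref{eq:finmomsupercrit}), provided $A_n$ is bounded by a constant independent of $L$.

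The only non-routine ingredient is therefore this uniform bound, which is precisely the content of Lemma \ref{th:momentbounds}. I expect it to be verified along the following lines. Under $\mu_0$, translation invariance and the constraint $N_L[\phi]=\rho V$ give $\mean{|\phi_x|^2}_{\mu_0}=\rho$ directly; the higher even moments may be controlled by reducing the $\delta$-function to an oscillatory integral as in Sec.~\ref{sec:proofofmain} and comparing with the Gaussian reference measure, whose covariance is bounded by the critical one and hence by a constant uniform in $L$ by (\ref{eq:rhoerrorbound}). Under $\mu_1$, the natural split $\phi=\phi^{+}+\phi^{0}$ has a Gaussian normal-fluid part $\phi^+$ with pointwise variance equal to $\rhoc(L)$, again bounded in $L$ by (\ref{eq:rhoerrorbound}), and a condensate part satisfying the pointwise estimate $|\phi_x^{0}|\le V^{-1}\sum_{k\in\Lambda_0^*}|\Phi_k|\le \sqrt{V_0\Delta}$ by Cauchy--Schwarz and the constraint $\rho_0[\Phi]=\Delta$, with $V_0\le M_0$ uniformly in $L$ by Lemma \ref{th:mainomlemma}. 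All even moments of both contributions are then uniformly controlled, so $A_n\le C(n,\rho,\bar\rho)<\infty$, and combining the three estimates closes the argument.
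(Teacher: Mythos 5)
Your proposal is correct and follows exactly the route the paper takes: the paper states Theorem \ref{th:infvollocalconv} as an immediate corollary of Corollary \ref{th:fixedomega} (for the Wasserstein bound and the split), Lemma \ref{th:localobsestimate} (for the $L^{-d/2}$ improvement via translation invariance), and Lemma \ref{th:momentbounds} (for the uniform bound on $A_n$). One small remark on your sketch of Lemma \ref{th:momentbounds}: the paper's proof for $\mu_0$ does not reduce the $\delta$-constraint to an oscillatory integral; it instead uses gauge invariance together with the almost sure bound $\rho_+[\Phi]\le\rho$ under $\mu_0$, obtaining $\mean{|\phi_x^+|^{2m}}_{\mu_0}\le m!\,\rho^m$ directly, which is more elementary than the comparison you gesture at—but since that lemma is already established in the paper this does not affect the correctness of your argument.
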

Using the constants occurring in Corollary \ref{th:fixedomega}, we may use 
$p'=\frac{d/2-1}{2 M_0+1}$
in (\ref{eq:finmomsupercrit}).  However, 
as discussed before the Corollary, this value might not always be optimal, i.e., the result could hold also for larger values of $p'$.

For applications of the approximation result, perhaps the most important consequence is the simplification of the structure of fluctuations.  
Namely, apart from the few condensate degrees of freedom, the field becomes Gaussian and translation invariant.  
In fact, as we will show next, 
its infinite volume statistics are given by the critical lattice field $\psi_x$, $x\in \Z^d$, which has zero mean and covariance with $\E[\psi_x \psi_y] = 0$ and 
\begin{align}\label{eq:defcritcov}
 \E[\psi_x \psi_y^*] = \int_{\T^d} \!\rmd k\, \frac{1}{\omega(k)-\ommin} \rme^{\ci 2\pi k\cdot (x-y)}\,,
\end{align}
for all $x,y\in \Z^d$.

More precisely, for all of the factorized supercritical measures in Sec.~\ref{sec:factsupercrit}, 
the field $\phi_x$ can be written as a sum of two independent random fields of which the normal fluid component $\phi^+$ is defined by 
$\phi^+_x = \int_{\Lambda^*_+}\!\rmd k\, \Phi^+_k \rme^{\ci 2\pi k \cdot x}$ where $\Phi^+$ is distributed according to the measure $\mu_+$ in (\ref{eq:defmuphiplus}).  
Therefore, for any compactly supported test function $J:\Z^d\to \C$,
we can define the random variable
\[
 \langle J,\phi^+ \rangle := \sum_{x\in \Z^d} J(x)^* \phi^+_x\,,
\]
as soon as $L$ is large enough so that $\Lambda_L$ contains the support of $J$.  Then $ \langle J,\phi^+ \rangle $ has mean zero  and a variance for which $\mean{\langle J,\phi^+ \rangle^2}=0$ and 
\[
 \mean{| \langle J,\phi^+ \rangle |^2} = \int_{\Lambda^*_+}\!\rmd k'\,\int_{\Lambda^*_+}\!\rmd k\, \E_{\mu_+}\!\!
 \left[\Phi_{k'}^* \Phi_k\right] \FT{J}(k') \FT{J}(k)^* = \int_{\Lambda^*_+}\!\rmd k\, \frac{1}{e_k}\left|\FT{J}(k)\right|^2\,,
\]
where
\[
 \FT{J}(k) := \sum_{z\in \Z^d} \rme^{-\ci 2\pi k\cdot x} J(x)\,.
\]
The function $\FT{J}:\T^d\to \C$ is continuous, hence also bounded.  
We assume that the split $(\Lambda_0^*,\Lambda^*_+)$ for all $L$ has the properties listed in Lemma \ref{th:mainomlemma}.
Then it is possible to partition $\T^d$ into
boxes of side length $\frac{1}{L}$ so that $\frac{1}{e_k}$ is bounded in the corresponding box by a constant times $\frac{1}{\omega-\ommin}$,
apart possibly from a finite number of boxes.  Due to the lower bound for $e_k$ valid for all $k\in \Lambda^*_+$,
we may ignore the exceptional boxes, and for the remaining ones use 
dominated convergence theorem to conclude that for any fixed $J$
\[
 \lim_{L\to\infty} \mean{| \langle J,\phi^+ \rangle |^2}  = \int_{\T^d} \!\rmd k\, \frac{1}{\omega(k)-\ommin}
 \left|\FT{J}(k)\right|^2\,.
\]
Details of this construction, as well as explicit estimates in $L$ for the size of the error, can be found in the proof of 
(\ref{eq:rhoerrorbound}) given at the end of Sec.~\ref{sec:proofofgap}.

Then an application of the polarization identity proves that for any two test functions $J_1$ and $J_2$ with a compact support we have
\[
 \lim_{L\to\infty} \mean{\langle J_1,\phi^+ \rangle^* \langle J_2,\phi^+ \rangle}  = \int_{\T^d} \!\rmd k\, \frac{1}{\omega(k)-\ommin}  \FT{J}_1(k) \FT{J}_2(k)^*\,.
\]
Restricted to single site test functions, we may thus conclude that (\ref{eq:defcritcov}) is indeed the limit of any pointwise covariances.  Since both the finite volume and the limit field are Gaussian, these results also immediately imply the convergence of all finite moments.

We conclude the section by showing that both the original and factorized fields have uniformly bounded moments.
\begin{lemma}\label{th:momentbounds}
  Suppose that $d\ge 3$ and $\omega$ satisfies Assumption \ref{as:omega}.
  Consider some supercritical $\rho$, some $L\ge L_0$ and any split $(\Lambda_0^*,\Lambda^*_+)$ of $\Lambda^*$ satisfying all properties stated in Lemma \ref{th:mainomlemma}.  Let $\mu$ be either $\mu_0$ or one of the measures $\mu_1$ or $\mu_1'$ defined for this split in Theorem \ref{th:mainresult} and Proposition \ref{th:mu1prop}.
  
  Then to each $m\ge 0$ there is an $L$-independent constant $c_m$ such that 
  \[
   \mean{|\phi_{x}|^{2 m}}_{\mu} \le c_m\,.
  \]
  for the random variable $\phi_x$ defined by (\ref{eq:defphix}) for any $x\in \Lambda_L$.
\end{lemma}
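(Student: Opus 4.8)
The plan is to reduce the whole estimate to the mass constraint carried by $N[\Phi]$, exploiting the \emph{gauge invariance} of all three measures: as recalled at the beginning of this section, $\mu_0$ and $\mu_1$ are invariant under $(\Phi_k)_{k\in\Lambda^*}\mapsto(\rme^{\ci\varphi_k}\Phi_k)_{k\in\Lambda^*}$ for any phases $\varphi_k\in\R$, and the same holds for $\mu_1'$ defined in (\ref{eq:defmu1prime}), whose density again depends on the $\Phi_k$ only through the $|\Phi_k|^2$.  A pathwise estimate is hopeless here, since Cauchy--Schwarz applied to (\ref{eq:defphix}) gives only $|\phi_x|^{2m}\le N[\Phi]^m=O(V^m)$; the gauge structure is what produces the cancellation of all unbalanced Fourier frequencies.

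Concretely, using (\ref{eq:defphix}) I would expand
\[
 |\phi_x|^{2m}=\frac{1}{V^{2m}}\sum_{k_1,\dots,k_m,\,k'_1,\dots,k'_m\in\Lambda^*}\Bigl(\prod_{j=1}^m\Phi_{k_j}\Bigr)\Bigl(\prod_{j=1}^m\Phi^*_{k'_j}\Bigr)\rme^{\ci 2\pi(\sum_j k_j-\sum_j k'_j)\cdot x}
\]
and take $\mean{\,\cdot\,}_\mu$.  Gauge invariance makes the expectation of the monomial $\prod_j\Phi_{k_j}\prod_j\Phi^*_{k'_j}$ vanish unless the multisets $\{k_1,\dots,k_m\}$ and $\{k'_1,\dots,k'_m\}$ coincide, in which case $\sum_jk_j=\sum_jk'_j$ (so the phase is $1$) and $\prod_j\Phi_{k_j}\prod_j\Phi^*_{k'_j}=\prod_k|\Phi_k|^{2n_k}\ge0$, $n_k$ being the multiplicity of $k$.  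Since each tuple $(k_1,\dots,k_m)$ has at most $m!$ matching tuples $(k'_j)$, and $\sum_{k\in\Lambda^*}|\Phi_k|^2=V\,N[\Phi]$,
\[
 \mean{|\phi_x|^{2m}}_\mu\le\frac{m!}{V^{2m}}\sum_{(k_1,\dots,k_m)\in(\Lambda^*)^m}\Bigl\langle\prod_{j=1}^m|\Phi_{k_j}|^2\Bigr\rangle_{\!\mu}=\frac{m!}{V^{2m}}\Bigl\langle\Bigl(\sum_{k\in\Lambda^*}|\Phi_k|^2\Bigr)^{\!m}\Bigr\rangle_{\!\mu}=\frac{m!}{V^{m}}\mean{N[\Phi]^m}_\mu\,,
\]
and this bound does not depend on $x$.  (All moments here are finite: $\mu_0$ is supported on a sphere, and under $\mu_1,\mu_1'$ the condensate modes lie on a sphere while the normal-fluid modes are Gaussian.)

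It then remains to bound $\mean{N[\Phi]^m}_\mu$ by an $L$-independent constant times $V^m$.  For $\mu_0$ this is immediate since $N[\Phi]=\rho V$ almost surely, whence $\mean{|\phi_x|^{2m}}_{\mu_0}\le m!\rho^m$.  For $\mu_1$ and $\mu_1'$ the normal fluid decouples from the condensate: $\Phi^+$ has law $\mu_+$ and is independent of $\Phi^0$, while the $\delta$-constraint forces $N[\Phi^0]=V\rho_0[\Phi]=\Delta V$ almost surely, so $N[\Phi]=N[\Phi^+]+\Delta V$ with $N[\Phi^+]=V^{-1}\sum_{k\in\Lambda_+^*}|\Phi_k|^2$ a sum of independent variables $|\Phi_k|^2/V$, each exponentially distributed with mean $1/e_k$ under $\mu_+$.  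Minkowski's inequality together with $\mean{(|\Phi_k|^2/V)^m}_{\mu_+}=m!\,e_k^{-m}$ gives
\[
 \norm{N[\Phi^+]}_{L^m(\mu_+)}\le\sum_{k\in\Lambda_+^*}\norm{|\Phi_k|^2/V}_{L^m(\mu_+)}=(m!)^{1/m}\sum_{k\in\Lambda_+^*}\frac{1}{e_k}=(m!)^{1/m}\,V\,\rhoc(L)\,,
\]
hence, again by Minkowski, $\mean{N[\Phi]^m}_{\mu_1}\le V^m\bigl((m!)^{1/m}\rhoc(L)+\Delta\bigr)^m$, and likewise for $\mu_1'$.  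Since we are in the setting of Lemma \ref{th:mainomlemma}, $\rhoc(L)=\rho_\infty+O(L^{-\min(\kappa,2)})$ is bounded uniformly in $L\ge L_0$ and $\Delta\le\rho$, so after cancelling $V^m$ one may take, e.g., $c_m=m!\,\bigl((m!)^{1/m}\sup_{L\ge L_0}\rhoc(L)+\rho\bigr)^m$, which also dominates the $\mu_0$ bound.  There is no serious obstacle once one commits to the Fourier expansion; the only input that cannot be dispensed with is the uniform boundedness of $\rhoc(L)$, which is precisely what prevents the Gaussian (non-compactly supported) tails of the normal fluid under $\mu_1,\mu_1'$ from spoiling the estimate.
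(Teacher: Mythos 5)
Your proof is correct, and it is a genuinely different (and somewhat more streamlined) organization of the same ingredients the paper uses. The paper first splits $\phi_x=\phi^0_x+\phi^+_x$, bounds the condensate part pointwise by $|\phi^0_x|\le\sqrt{M_0\rho_0[\Phi]}$ (which costs a factor $M_0$ from Lemma \ref{th:mainomlemma}), and then treats the normal fluid moments by Wick's rule under $\mu_1,\mu_1'$ and by a gauge-invariance argument under $\mu_0$. You instead expand $|\phi_x|^{2m}$ over \emph{all} of $\Lambda^*$, invoke gauge invariance once to collapse the unbalanced monomials, and reduce the whole problem to $\mean{N[\Phi]^m}_\mu\le c\,V^m$. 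This is a clean reduction: under $\mu_0$ it is trivial because $N[\Phi]=\rho V$ a.s., and under $\mu_1,\mu_1'$ it decomposes into the a.s.\ constant condensate mass $V\Delta$ plus the normal-fluid mass, which you handle by Minkowski and the exponential marginals. Your route never needs the pointwise condensate bound and hence never needs $V_0\le M_0$; it requires only the a.s.\ mass constraints and the $L$-uniform boundedness of $\rhoc(L)$, which you correctly identify as the crucial input (the paper uses the slightly stronger $\rhoc(L)<\rho$ coming from supercriticality). Both proofs give $L$-independent bounds of the form $c_m=O((m!)^{O(1)}\rho^m)$; the constants differ in inessential ways. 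One minor stylistic point: in the paper the notation for the unnormalized sum is $|\Phi|^2=\sum_{k}|\Phi_k|^2=V\norm{\Phi}^2=VN[\Phi]$, so your identity $\sum_k|\Phi_k|^2=VN[\Phi]$ is consistent with the conventions already in use.
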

\begin{proof}
 If $m=0$, defining $c_0=1$ obviously suffices since $\mu$ is a probability measure.  Assume thus $m>0$.
 
 Split $\phi_x$ into a condensate and normal fluid component as follows
 \[
  \phi^0_x :=  \int_{\Lambda^*_0}\!\rmd k\, \Phi_k \rme^{\ci 2\pi k \cdot x}
  \qand
  \phi^+_x :=  \int_{\Lambda^*_+}\!\rmd k\, \Phi_k \rme^{\ci 2\pi k \cdot x}\,.  
 \]
Then $\phi_x = \phi^0_x+ \phi^+_x$, and the condensate component may be bound by using the upper bound $M_0$ from Lemma \ref{th:mainomlemma},
\[
 |\phi^0_x |\le \int_{\Lambda^*_0}\!\rmd k\, |\Phi_k|
 \le \sqrt{V_0/V} \norm{\Phi^0}
 \le \sqrt{M_0 \rho_0[\Phi]}\,.
\]
Under the measure $\mu_0$, $\rho_0[\Phi]\le \rho$ almost surely, and under either of the measures 
$\mu_1$ or $\mu_1'$ we have $\rho_0[\Phi]=\Delta\le \rho$ almost surely.
Therefore, in all of the three cases the condensate field is almost surely uniformly bounded in $L$,
$|\phi^0_x |\le\sqrt{M_0 \rho}$.

We then employ H\"{o}lder's inequality for the dual pair $(2m,2m/(2m-1))$ to bound the moment
\[
  \mean{|\phi_{x}|^{2 m}}_{\mu} \le  \mean{(|\phi^+_{x}|+|\phi^0_{x}|)^{2 m}}_{\mu}
  \le 2^{2m-1} \left(\mean{|\phi^+_{x}|^{2 m}}_{\mu}+\mean{|\phi^0_{x}|^{2 m}}_{\mu}\right)
  \,.
\]
The condensate term on the right hand side is now bounded by $(M_0 \rho)^{m}$, so it only remains to estimate the normal fluid term.

Let us begin with the case where $\mu$ is $\mu_1$ or $\mu_1'$.  Since $\phi^+_x$ only depends on $\Phi^+$,
the product structure of these two measures implies that 
\[
 \mean{|\phi^+_{x}|^{2 m}}_{\mu}=\mean{|\phi^+_{x}|^{2 m}}_{\mu_+}
 = \int_{(\Lambda_+^*)^m} \!\rmd k  \int_{(\Lambda_+^*)^m} \!\rmd k'
 \rme^{\ci 2\pi x\cdot\sum_{i=1}^m(k_i-k'_i)}\bigmean{\prod_{i=1}^m (\Phi_{k_i} \Phi^*_{k'_i})}_{\mu_+}\,.
\]
The remaining expectation is over independent, mean zero, Gaussian complex random variables.
By the Wick rule and gauge invariance, the expectation is zero unless there is a permutation $\pi$
of $\set{1,2,\ldots,m}$ such that $k'_i = k_{\pi(i)}$ for all $i$.  Therefore,
\[
 \bigmean{\prod_{i=1}^m (\Phi_{k_i} \Phi^*_{k'_i})}_{\mu_+} = \sum_{\pi\in S_m} \prod_{i=1}^m 
 \cf{k'_i=k_{\pi(i)}} \prod_{i=1}^m\frac{V}{e_{k_i}}\,.
\]
For any nonzero term in the sum, $\sum_{i=1}^m k'_i=\sum_{i=1}^m k_i$ and thus $ \rme^{\ci 2\pi x\cdot\sum_{i=1}^m(k_i-k'_i)}=1$, and summing over $k'$ yields
\[
 \mean{|\phi^+_{x}|^{2 m}}_{\mu}
 = \int_{(\Lambda_+^*)^m} \!\rmd k\sum_{\pi\in S_m} \prod_{i=1}^m\frac{1}{e_{k_i}} 
 = m! \rhoc(L)^m\le m! \rho^m\,.
\]
Therefore, for these two measures, we may use $c_m=2^{2m-1} (m!+M_0^m) \rho^m$.

It remains to consider the normal fluid contribution for $\mu=\mu_0$.  As above, we have
\[
 \mean{|\phi^+_{x}|^{2 m}}_{\mu}
 = \int_{(\Lambda_+^*)^m} \!\rmd k  \int_{(\Lambda_+^*)^m} \!\rmd k'
 \rme^{\ci 2\pi x\cdot\sum_{i=1}^m(k_i-k'_i)}\bigmean{\prod_{i=1}^m (\Phi_{k_i} \Phi^*_{k'_i})}_{\mu_0}\,,
\]
and by gauge invariance of $\mu_0$, the remaining expectation is zero unless for each $k\in \Lambda_+^*$ there are the same number of $\Phi_k$ and $\Phi_k^*$ terms in the product, in which case the product yields a positive number.  Thus for the nonzero terms also here we can find  a permutation $\pi$
of $\set{1,2,\ldots,m}$ such that $k'_i = k_{\pi(i)}$ for all $i$.
Therefore,
\[
0\le \bigmean{\prod_{i=1}^m (\Phi_{k_i} \Phi^*_{k'_i})}_{\mu_0} \le \sum_{\pi\in S_m} \prod_{i=1}^m 
 \cf{k'_i=k_{\pi(i)}} \bigmean{\prod_{i=1}^m |\Phi_{k_i}|^2}_{\mu_0}\,.
\]
Continuing as above, and observing that $\rho_+[\Phi] := \frac{1}{V}\int_{\Lambda_+^*} \!\rmd k |\Phi_k|^2\le N[\Phi]/V$ is bounded by $\rho$ almost surely under $\mu_0$, we find an upper bound
\[
 \mean{|\phi^+_{x}|^{2 m}}_{\mu}
 \le \int_{(\Lambda_+^*)^m} \!\rmd k\sum_{\pi\in S_m} V^{-m}
  \bigmean{\prod_{i=1}^m |\Phi_{k_i}|^2}_{\mu_0}
 = m! \mean{\rho_+[\Phi]^m}_{\mu_0}\le m! \rho^m\,.
\]
Therefore, also for $\mu=\mu_0$, we may use $c_m=2^{2m-1} (m!+M_0^m) \rho^m$.  Let us point out that by Lemma \ref{th:mainomlemma} $\rhoc(L)$ is bounded in $L$ and thus it is not a contradiction to assume that $\rho$ is fixed and supercritical for all $L\ge L_0$.
\end{proof}

\section{Example lattice dispersion relations}\label{sec:examples}

As an application, we consider explicitly a number of dispersion relations $\omega:\T^d\to \R$, all of which are continuous (periodic) functions.
Let us first recall that, once we define $\phi_x$ by (\ref{eq:defphix}), the energy and norm satisfy
\[
 H[\Phi] = \sum_{x,y\in \Lambda_L} \phi_x^* \alpha(x-y;L) \phi_y \qand 
 N[\Phi] = \sum_{x\in \Lambda_L} |\phi_x|^2\,,
\]
where 
\[
 \alpha(x;L) :=  \int_{\Lambda^*}\!\rmd k\,  \omega(k)
     \textrm{e}^{\textrm{i} 2\pi k \cdot x}\,.
\]
Taking $L\to \infty$ thus shows that $\alpha(x;L)\to \alpha(x)=\int_{\T^d}\!\rmd k\,  \omega(k) \rme^{\ci 2\pi k \cdot x}$ for each $x\in \Z^d$.  Here
$\alpha(x)$ are the Fourier coefficients of $\omega$ and they are $\ell_2$-summable since $\omega\in L^2(\T^d)$. In particular, $\alpha(x)\to 0$ as $|x|\to \infty$.  
Furthermore, if $\omega$ is a restriction of an analytic function, we may conclude that its Fourier coefficients $\alpha(x)$ are exponentially decreasing in $|x|\to \infty$, 
and all such functions correspond to ``short-range'' interactions for the field $\phi_x$.

\subsection{Nearest neighbour interactions}\label{sec:nninteractions}
 
In the original Berlin--Kac paper nearest neighbour interactions where considered which for a rectangular lattice 
would correspond to using a dispersion relation
\[
 \omega(k) = a + b \sum_{i=1}^d \sin^2(\pi k_i)\,,
\]
where $a\in \R$ and $b>0$. (Since $2 \sin^2 (\pi y)=1-\cos (2\pi y)=1-\frac{1}{2}(\rme^{\ci 2\pi y}+\rme^{-\ci 2\pi y})$, it is straightforward to check that then $|\alpha(x;L)|=0$ if $|x|_\infty>1$, i.e., for points which are not nearest neighbour on a rectangular lattice.)

Clearly, $\omega$ is twice continuously differentiable and $k=0$ is the unique minimum point on $\T^d$ and $\ommin=\omega(0)=a$.
Also, $D^2\omega(0)=2\pi^2 b \,1$ is proportional to the unit matrix $1$ and strictly positive.  Thus $\omega$ satisfies Assumption \ref{as:omega} with $T_0=\set{0}$.

For fixed $L\ge 2$, let us parameterize the dual lattice $\Lambda^*$ by $k=\frac{n}{L}$ where $n\in \Lambda_L$, in particular, $|n|_\infty\le \frac{L}{2}$.
Since $0\in \Lambda^*$, we have $\omega_0=\ommin=a$, and thus the excess energies satisfy
\[
 e_k = b \sum_{i=1}^d \sin^2\!\left(\frac{\pi n_i}{L}\right)\ge \frac{4 b}{L^2} \sum_{i=1}^d n_i^2\,.
\]
Therefore, defining $\Lambda_0^*=\set{0}$ and $\Lambda_+^*=\Lambda^*\setminus\set{0}$, results in a split of $\Lambda^*$ which is separated by 
the energy interval $[0,4 b L^{-2}]$ and thus has $\delta_L=0$.  We also have
\[
  \rhoc(L) = \int_{\Lambda^*_+}\!\rmd k\,  \frac{1}{e_k} \le \frac{L^{2-d}}{4 b} \sum_{1\le |n|_\infty\le \frac{L}{2}} |n|^{-2} = O(1)\,,
\]
and 
\[
 \frac{1}{V}\int_{\Lambda^*_+}\!\rmd k\,  \frac{1}{e_k^2} \le L^{4-2 d} (4b)^{-2} \sum_{1\le |n|_\infty\le \frac{L}{2}} |n|^{-4} \,.
\]
By a Riemann sum approximation (see Sec.~\ref{sec:proofofgap} for details) we find that the right hand side is $O(L^{-2})$, for $d=3$,
it is $O( L^{-4} \ln L)$ for $d=4$, and $O(L^{-d})$ for $d\ge 5$.  Hence, also $\vep_L$ satisfies these bounds, and we may apply Theorem \ref{th:mainresult}
for all large enough $L$.

We conclude that $W_2(\mu_0,\mu_1)\le C_2 L^{\frac{d}{2}-p'}$ with $p'=\frac{d}{4}$, for $d\ge 5$, any $p'<1$, for $d=4$, and $p'=\frac{1}{2}$, for $d=3$.
Since $V_0=1$ and $k=0$ is the unique condensate Fourier mode, we can then apply 
Proposition \ref{th:mu1prop} and Lemma \ref{th:localobsestimate}
to conclude that for any finite moment, i.e., for 
index sets $I$ whose length is less than some arbitrary cut-off, we can approximate
\[
\mean{\phi^I}_{\mu_0} = \mean{\psi^I} + O(L^{-p'})\,,
\]
where $\psi_x=\phi^+_x + \phi^0_x$ and $\phi^0_x = \sqrt{\rho -\rhoc(L)} \rme^{\ci \theta}$ is a constant field with a random phase.  As shown in Sec.~\ref{sec:Wassersteintomoments},
$\phi^+_x$ behaves like the critical Gaussian field.

\subsection{Acoustic phonon type interactions}

Although not covered by Assumption  \ref{as:omega}, we can also apply Theorem \ref{th:mainresult} directly by explicit estimates 
to the following dispersion relation which would appear in the theory of acoustic phonons:
\[
  \omega(k) = \left(\sum_{i=1}^d \sin^2(\pi k_i)\right)^{\frac{1}{2}}\,.
\]
By the computations in the previous subsection, then again $k=0$ is the unique minimum also on finite lattices and the excess energies satisfy
\[
 e_k \ge 2 L^{-1} |n|\,, \quad k = \frac{n}{L}\,,\ n\in \Lambda_L\,.
\]
Hence, for all $d\ge 2$, we have $\rhoc(L)=O(1)$ and $\vep_L = O(L^{-d})$ for $d\ge 3$ and $O( L^{-d} \ln L)$ for $d=2$.
Thus the approximation result given at the end of Sec.~\ref{sec:nninteractions} holds also in this case, only with smaller errors and including also the case $d=2$.

\subsection{Dispersion relation with several minima}\label{sec:dispwithseveralmin}

Let
\[
 \omega(k) = \sum_{i=1}^d \sin^2(2 \pi k_i)\,,
\]
which has $2^d$ global minima at points with $k_i\in \set{0,\frac{1}{2}}$ for $i=1,2,\ldots,d$.  All of these are non-degenerate and thus
$\omega$ satisfies  Assumption  \ref{as:omega}.  Also, $0$ is a minimum and thus for all $L$ the minimum value is reached,
$\ommin=0=\omega_0$.

Suppose first that $L$ is odd, say $L=2 m+1$ with $m\in \N_+$.  Then if $k_0\in T_0$ is not zero, it has some component $i$ such that $k_i=\frac{1}{2}$.
For such $i$ and any $n\in \Z^d$, we have $n_i - L k_i=n_i - m - \frac{1}{2}\ne 0$.  Hence, $T_0\cap \Lambda^* = \set{0}$.
In addition, if $1\le n_i\le m$,  we have
\[
  \sin\!\left(\frac{2 \pi n_i}{L}\right)  = 2   \sin\!\left(\frac{\pi n_i}{L}\right) \cos\!\left(\frac{\pi n_i}{L}\right)  
  \ge  \frac{\min(2 n_i,L-2 n_i)}{L} \ge \frac{1}{L}\,.
\]
Therefore, $e_k\ge L^{-2}$ for all $k\ne 0$, and one may modify the earlier estimates to prove that the split with
$\Lambda_0^*=\set{0}$ has $\vep_L=O(L^{-4 p'})$ with $p'$ chosen as for the nearest neighbour interactions.  Thus for odd $L$ one finds a single-component condensate, even though $|T_0|=2^d$.

If $L$ is even, say $L=2m$ with $m\in \N_+$, we have
$\frac{1}{2}=\frac{m}{L}$, and thus $T_0\subset \Lambda_L/L$.  Defining $\Lambda_0^*=T_0$ results in a split for which $\rhoc(L)=O(1)$ and 
$\vep_L=O(L^{-4 p'})$ as above but now the condensate is $2^d$-fold degenerate.  In addition, $e_k=0$ for each $k\in \Lambda_0^*$, so it is not possible to decrease $\Lambda_0^*$ without reducing the gap size to zero.  We can also apply item \ref{it:allground} of Proposition \ref{th:mu1prop}
and conclude that in the condensate the Fourier modes $k\in T_0$ are distributed uniformly on a sphere and hence the condensate field $\phi^0_x$
has strong oscillations in $x$.

In summary, the odd and even lattice sizes behave differently, and 
it does not really make sense to talk about $L\to\infty$ limit of the measure $\mu_0$, at least not without first removing the condensate modes.
This result becomes more transparent if one computes the coupling function $\alpha(x;L)$: these correspond to next-to-nearest neighbour couplings 
where $\alpha(x)=0$ unless $x=0$ or $|x|_\infty=2$.  Considering each of the $d$ directions separately, one observes that if $L$ is even, the odd and even sites become 
disconnected, and thus the system decouples into $2^d$ independent nearest neighbour systems.  On the other hand, if $L$ is odd, odd and even sites are coupled by 
``going around the circle once''.  In fact, this system corresponds to a single nearest neighbour lattice where the particle labels have been permuted.  
Since the estimates in Theorem \ref{th:mainresult} are sufficiently strong to distinguish between the two cases, we find that
they provide a reliable, relatively simple method of isolating the condensate modes also in this somewhat pathological setup.

\subsection{Dispersion relations with varying condensate energy}

As a straightforward generalization of the above dispersion relations, one can have any point $\zeta\in \T^d$ as the global minimum, for instance using
\[
 \omega(k;\zeta) = \sum_{i=1}^d \sin^2(\pi (k_i-\zeta_i))\,.
\]
Even though the minimum point is unique on the torus, if $\zeta\ne 0$, it does not need to belong to $\Lambda^*$ and then there might be several minimum points in $\Lambda^*$.

Consider for instance an odd $L=2m+1$ and $\zeta_i=\frac{1}{2}$ for all $i$.  Then $\frac{n_i}{L}-\frac{1}{2}= -\frac{1+2(m-n_i)}{2 L}$
and $\frac{n_i}{L}+\frac{1}{2}= \frac{1+2(m+n_i)}{2 L}$, and thus in this case $\omega_0=d \sin^2\!\frac{\pi}{2 L}$ and it is reached whenever 
$n_i=\pm m$ for all $i$.  Thus to the unique continuum minimum $\zeta$ there are $2^d$ minimum points in $\Lambda^*$.
In fact, in this case one should choose $\Lambda_0^*$ to consist of these $2^d$ points, since then for $k\in \Lambda^*_+$ the excess energies $e_k$ increase
like $|n|^2/L^2$ where $|n|$ denotes the number of ``lattice steps'' from $k$ to the set $\Lambda^*_0$, leading to similar estimates as in the nearest neighbour case.

If $L$ is even for this dispersion relation, $\zeta\in \Lambda^*$, $\omega_0=0$, $\Lambda_0^*=\set{\zeta}$, and the behaviour is identical to the nearest neighbour case.

Considering irrational minimum points $\zeta$ can lead to much more complicated situations.  
For example, suppose $r$ is an irrational number between $0$ and $\frac{1}{2}$ which has a binary representation $b_j\in \set{0,1}$, $j\in \N$, i.e., suppose that $r = \sum_{j=2}^\infty b_j 2^{-j}$ where the sequence $(b_j)$ does not converge to zero or one.
Set $\zeta_1=r$ and $\zeta_i=0$, for $i\ge 2$, and consider the following dispersion relation obtained as a product of two previous ones,
\[
 \omega(k) := \omega(k;0) \omega(k;\zeta)\,,
\]
with global minima at $0$ and $\zeta$.
Then for each $L$, $0\in \Lambda^*$, $\omega_0=0$, and this value can only be reached at $k=0$ on $\Lambda^*$.  However, for values of $k=n/L$, with $n_i=0$ for $i\ge 2$, we have
\[
 \omega(k) \le \sin^2 \frac{\pi (n_1-L r)}{L}\,.
\]
Along the subsequence $L=2^N$, $N\in \N$, here $n_1-L r = n_1-\sum_{\ell=0}^{N-2} b_{N-\ell} 2^\ell -\sum_{j=1}^{\infty} b_{N+j} 2^{-j}$.
We can choose $n_1=\sum_{\ell=0}^{N-2} b_{N-\ell} 2^\ell\le \sum_{\ell=0}^{N-2} 2^\ell = 2^{N-1}-1<\frac{L}{2}$, and for this value
\[
 e_k = \omega(k) \le \pi^2 \left(\sum_{j=1}^\infty b_{N+j} 2^{-j-N}\right)^2\,.
\]
Hence, by considering a binary sequence with ever less frequent ones and sufficiently large $N$, the bound can be made proportional to $L^{-p}$ for any $p\ge 2$.  Depending on how small the term is, the above point $k=n/L$ might or might not belong to the condensate modes $\Lambda_0^*$.
In particular, there are instances for which $e_k>0$ but $e_k \le \frac{1}{2\rho} L^{-2 d}$, and thus item \ref{it:smalldev} of Proposition \ref{th:mu1prop} can be applied without increasing the magnitude of the error. Hence, the system behaves like a uniformly distributed two-component condensate
even though $e_k$, $k\in \Lambda^*_0$, is not identically zero.

\subsection{Anisotropic dispersion relations}

Another generalization of the above condensate cases is to consider anisotropic dispersion relations.
For instance, in addition to shifting the global minimum to $\zeta\in \T^d$ we may 
take any finite collection of points $M^{(\ell)}\in \Z^d$, $\ell=1,2,\ldots,N$, choose some 
weights $b_\ell>0$ for them, and define
\[
 \omega(k) = 
 \sum_{\ell=1}^N b_\ell \sin^2(\pi (k-\zeta)\cdot M^{(\ell)})\,.
\]
If there is a sufficient variety of points in the collection, for instance, if all unit vectors are included,
there is only one global minimum for this dispersion relation, located at $k=\zeta$.
The Hessian at this point is equal to 
\[
 2\pi^2 \sum_{\ell=1}^N b_\ell M^{(\ell)} \otimes M^{(\ell)}\,,
\]
so that the second derivative into a direction $v\in S^{d-1}$ at $k=\zeta$ is given by
\[
 2\pi^2 \sum_{\ell=1}^N b_\ell \left|v\cdot M^{(\ell)}\right|^2\,.
\]
Hence, essentially arbitrary asymmetries between different directions 
may be generated near the minimum point by varying $m$ and $b$.

In the proof of Lemma \ref{th:mainomlemma} given in Sec.~\ref{sec:proofofgap}, the uniform upper bound for the  
number of degrees included in the condensate, $M_0$, depends on the dimension but also on 
the ratio between the maximal and minimal eigenvalue of the Hessian of $\omega$ at its minima, i.e., 
on the maximal anisotropy at these points.
The value appearing in the proof typically overestimates the true number of degrees of freedom needed. Let us conclude with two examples
which highlight the problems which arise when trying to improve on such general uniform bounds.

For simplicity, let us consider anisotropy in the first two components only. 
To borrow results from the previous computations, assume that $L=2m+1$ is odd and take $\zeta=(\frac{1}{2},0,0,\ldots,0)$.
We reparameterize the first component using $m_1:= m-L |k_1|\in \Z$ and the sign $\sigma_1$ of $k_1$.
Then $0\le m_1\le m$ and $k_1=\sigma_1 |k_1| = \sigma_1 ( m-m_1)/L$, implying also that
$|\sin(\pi (k_1-\frac{1}{2}))|= |\sin(\pi (2 m_1+1)/(2L))|\ge (2 m_1+1)/L$.

We first consider the nearest neighbour case where the first component has unit weight but the rest have a much smaller weight $1/B$, where
$B\gg 1$.
Then for $k\in \Lambda_L$, and denoting $n_i = L k_i$, for $i=2,3,\ldots,d$, we find an approximation
\[
\omega(k) \approx \pi^2 \frac{(m_1+1/2)^2+n^2/B}{L^2}\,,
\]
valid for $m_1/L, |n|/L\ll 1$.
Thus the minimum value is reached at the two points where $m_1=0$ and $n=0$.
However, if $m_1=0$, we then also have $e_k\approx \pi^2\frac{n^2}{B L^2}$ whenever $|n|/L\ll 1$.
Suppose that we wish to include in the condensate $\Lambda_0^*$ at least all $k$ with $e_k\le L^{\frac{1}{2}-d}$ 
(corresponding roughly to the choice $\kappa=\frac{1}{2}$ in Lemma \ref{th:mainomlemma}).
Since for some finite $L$ it can happen that $B\ge L^{d-1}$, the number of condensate modes can temporarily be very large.
This effect can be traced back to the flatness of constant level surfaces of $\omega$ caused by the strong anisotropy.

In the second example, we take also the first direction to have a small weight $1/B$ but add one more point to the collection:
set $b_{d+1}=1$ and $M^{(d+1)}=(M_1,M_2,0,\ldots,0)$ where $M_1,M_2\in \N$ are such that $M_2$ is odd and $M_1$ is even.
Suppose also that $L$ is large enough, satisfying $L\gg M_1,M_2$. Then for $m_1/L, |n|/L\ll 1$ we have
\[
 \omega(k) \approx \pi^2 \frac{(m_1+1/2)^2+n^2}{B L^2} + \pi^2 \frac{K^2}{L^2} \,,
\]
where, using the assumption that $M_1$ is an integer,
\[
 K = \left(L \left[k_1-\frac{1}{2}\right] M_1 + L k_2 M_2\right) \bmod L =
 \left(-\sigma_1 \left[m_1+\frac{1}{2}\right] M_1 + n_2 M_2\right) \bmod L\,.
\]
Since $M_1$ is even, $M_2$ is odd, and both are positive, we may 
set $n_2=\sigma_1\frac{M_1}{2}$ and choose $m_1$ so that $2 m_1+1=M_2$.  Setting also $n_i=0$ for $i\ge 3$, 
we obtain two points in $\Lambda_L^*$ for which 
$K=0$ and 
\[
 \omega(k) \approx \pi^2 \frac{M_1^2+M_2^2}{4 B L^2}\,.
\]
However, for any point for which $K\ne 0$, for instance, if $m_1=0=n_2$, we have
\[
 \omega(k)\ge \frac{4}{L^2}\,.
\]
Therefore, if the system is sufficiently anisotropic, e.g., $B\ge M_1^2+M_2^2$, it can happen that 
the minimum point is \defem{not} the nearest lattice point to the minimum on $\T^d$, but it could be found many lattice steps away from it.
In contrast to the first example, this effect does not disappear when $L\to \infty$, but will persists for all sufficiently large odd $L$
in the present case.

\section{Proof of the main result, Theorem \ref{th:mainresult}}\label{sec:proofofmain}

\begin{proofof}{Theorem \ref{th:mainresult}:}
Consider a fixed $L$ and a split $(\Lambda_0^*,\Lambda^*_+)$ of $\Lambda^*$ which is
separated by the energy interval $[a,b]$, $0\le a<b$, and has a relative energy gap $\delta^{-1}$.
We aim at separation in the degrees of freedom 
related to these two sets.  

We begin by simplifying the representation of the Berlin--Kac measure $\mu_0$.  Starting from the simplified form, we then construct a change of variables which will bring it closer to the measure $\mu_1$. We first shift the position of the $\delta$-constraint to match that in $\mu_1$.  This will introduce a shift in the normal fluid energies which we will need to repair back to the critical ones by a second change of variables.
Even after these changes, the measures will differ by a weight function which, however,
is close to one with high probability.  This property is checked quantitatively in a technical Lemma \ref{th:gmainlemma}, resulting in the estimates in Corollary \ref{th:mainthgbounds}.
To make the final comparison, we use the change of variables to construct a coupling 
between $\mu_0$ and $\mu_1$ which, together with Corollary \ref{th:mainthgbounds}, 
will result in the stated bound on their Wasserstein distance.

To begin, let us collect the field values for $k\in \Lambda_+^*$ into a vector $\Phi^+$, corresponding to the normal fluid,
and those for $k\in \Lambda_0^*$ into a vector $\Phi^0$, corresponding to the condensate.
We denote
\[
 V_0 := |\Lambda_0^*| \,, \qquad V_+ := |\Lambda_+^*| \,,
\]
for which $V_0,V_+>0$, and $V=V_0+V_+$.  Define also
\[
 N_0[\Phi] := \int_{\Lambda_0^*}\!\rmd k\, |\Phi_k|^2\,,\quad 
 N_+[\Phi] := \int_{\Lambda_+^*}\!\rmd k\, |\Phi_k|^2
 \,,\quad 
 \rho_+[\Phi] := \frac{N_+[\Phi]}{V}
  \,,\quad 
 \rho_0[\Phi] := \frac{N_0[\Phi]}{V}
 \,.
\]
Since $N_+[\Phi]+N_0[\Phi] = N[\Phi]$, we have now
\[
 H[\Phi] = \int_{\Lambda^*}\!\rmd k\, \omega(k) |\Phi_k|^2
 = \int_{\Lambda^*}\!\rmd k\, e_k |\Phi_k|^2 +
 \omega_0 N[\Phi]\,.
\]
Denote
\[
 E_+[\Phi] := \int_{\Lambda_+^*}\!\rmd k\, e_k |\Phi_k|^2\,,\qquad 
 E_0[\Phi] := \int_{\Lambda_0^*}\!\rmd k\, e_k |\Phi_k|^2
 \,,
\]
and we may conclude that in the integrand, in which almost surely $N[\Phi] = \rho V$,
we have
\[
 H[\Phi] = E_+[\Phi] + E_0[\Phi] + \omega_0 \rho V\,.
\]
Therefore, we may rewrite
\[
 \mu_0[\rmd \Phi] = \frac{1}{Z_{0}} 
 \prod_{k\in \Lambda_+^*} \left[\rmd \Phi^*_k \rmd \Phi_k\right]  \rme^{-E_+[\Phi]}
 \prod_{k\in \Lambda_0^*} \left[\rmd \Phi^*_k \rmd \Phi_k\right]  \rme^{-E_0[\Phi]} \delta(\rho_0[\Phi] + \rho_+[\Phi] - \rho)\,,
\]
where the new normalization constant is related to 
the one given in (\ref{eq:defmu0}) by $Z_0 = V \rme^{\omega_0 \rho V} Z_\rho $.

Let $\rhoc>0$ denote the critical density, measured as an expectation of $\rho_+$ over the probability measure (\ref{eq:defmuphiplus}), i.e., over
\[
 \mu_+[\rmd \Phi] := \frac{1}{Z_{+}} 
 \prod_{k\in \Lambda_+^*} \left[\rmd \Phi^*_k \rmd \Phi_k\right]  \rme^{-E_+[\Phi]}\,.
\]
By assumption, $e_k \ge b>0$ for each $k\in \Lambda_+^*$, and thus this is a well-defined Gaussian measure
under which $\re \Phi_k$, $\im \Phi_k$, $k\in\Lambda_+^*$, form a collection of jointly independent random variables,
with a zero mean and a variance $\frac{V}{2 e_k}$.  Therefore,
\[
 \mean{\rho_+}_{\mu_+} = 
 \frac{1}{Z_{+}} 
 \int \prod_{k\in \Lambda_+^*} \left[\rmd \Phi^*_k \rmd \Phi_k\right]  \rme^{-E_+[\Phi]} \frac{1}{V^2} \sum_{k\in \Lambda^*_+} |\Phi_k|^2
 = \frac{1}{V^2} \sum_{k\in \Lambda^*_+} \frac{V}{e_k} =  \int_{\Lambda_+^*}\!\!\rmd k\, \frac{1}{e_k}=\rhoc(L)\,,
\]
as defined in (\ref{eq:defrhocrit}).

Set then $\Delta:=\rho-\rhoc$, which is strictly positive by assumption.
Then we define the target measure $\mu_1$ as a product between $\mu_+$ and a suitably chosen condensate measure: 
we set
\begin{align}\label{eq:defmu1inproof}
 & \mu_1[\rmd \Phi] := \frac{1}{Z_1} 
 \prod_{k\in \Lambda_+^*} \left[\rmd \Phi^*_k \rmd \Phi_k\right]  \rme^{-E_+[\Phi]} 
 \nonumber \\ & \qquad \times
 \prod_{k\in \Lambda_0^*} \left[\rmd \Phi^*_k \rmd \Phi_k\right] \rme^{-E_0[\Phi]-\tilde{\vep}[\Phi] V \rhoc}
 \prod_{k\in \Lambda^*_+} \left(1-\frac{\tilde{\vep}[\Phi]}{e_{k}} \right)^{-1}
  \delta(\rho_0[\Phi] - \Delta) \,,
\end{align}
where $\tilde{\vep}$ depends only on the condensate components $\Phi^0$,
\[
  \tilde{\vep}[\Phi] := \frac{E_0[\Phi]}{N_0[\Phi]} \le \max_{k\in \Lambda^*_0}e_k \le a\,.
\]
Thus, for any $k\in \Lambda^*_+$,
\[
 \frac{\tilde{\vep}[\Phi]}{e_{k}} \le \delta <1\,,
\]
which implies that the weight in (\ref{eq:defmu1inproof}) is a strictly positive function.
Since here $\tilde{\vep}[\Phi]=E_0[\Phi]/(V \Delta)$ almost surely, this measure indeed coincides with the definition given in (\ref{eq:defmu1}).

To construct a suitable coupling between the measures $\mu_0$ and $\mu_1$, we rely on a change of variables and the diagonal concentration trick which we learned from 
Saksman and Webb, from the proof of Lemma B.1 in \cite[Appendix B]{SaksmanWebb2016}.
The trick is to construct an explicit coupling between two probability measures by concentrating as much of their common mass as possible in the diagonal of the coupling ($\Phi'=\Phi$) and distributing any remaining mass as a product on the off-diagonal ($\Phi'\ne \Phi$).
Although this coupling is seldom optimal, it can provide a good estimate
of the Wasserstein distance of the two measures in case most of the mass can be 
concentrated in the diagonal: note that the diagonal mass does not contribute to the value of the integral defining the Wasserstein distance in   (\ref{eq:defWpdist}).

In our application of the trick, we first need to change into variables using which 
the two measures share enough common mass.
To find new variables better adapted to compare the measures $\mu_0$ and $\mu_1$,
let us start from the measure $\mu_0$ and denote its integration variable by $\Psi$. 
The goal is to find a change of variables $\Psi=G[\Phi]$ which would yield a measure
close to $\mu_1$: we try to construct $G$ so that for any observable $f$ we would have
$\int\! \mu_0[\rmd \Psi] \, f(\Psi) = \int\! \mu_1[\rmd \Phi]\,  g[\Phi] f(G[\Phi])$ for some function $g$
which is close to one with high $\mu_1$-probability.  Some preliminary estimates and definitions will be needed to find the right choice, and we postpone the precise construction of the coupling later, until  Eq.~(\ref{eq:defmaingamma}).  

First, let us recall that $\Delta= \rho-\rhoc> 0$ and define 
\[
 \alpha[\Psi] := \begin{cases}      
   \frac{\rho_+[\Psi]-\rhoc}{\rho-\rhoc}\,, & \text{if }\rho_+[\Psi] < \rho\,, \\
      0\,, & \text{if }\rho_+[\Psi] \ge \rho\,.
 \end{cases}
\]
Note that $\alpha[\Psi]$ depends only on $\Psi^+$, and $-\frac{\rhoc}{\Delta}\le \alpha[\Psi]<1$.
Consider the expectation of some continuous function $f(\Psi^+,\Psi^0)$ with a compact support under the original measure $\mu_0[\rmd \Psi]$.
The mass constraint function can be written as
\[
 \rho_0[\Psi] + \rho_+[\Psi] - \rho
 = \rho_0[\Psi] -  (1-\alpha[\Psi]) \Delta\,,
\]
whenever $\rho_+[\Psi] < \rho$.  On the other hand, the set $\defset{\Psi^+}{\rho_+[\Psi]=\rho}$ has a measure zero, and if $\rho_+[\Psi] > \rho$, the mass constraint cannot be satisfied for any $\Psi^0$.  Hence, the collection of $\Psi$ with $\rho_+[\Psi]\ge \rho$ has zero measure with respect to $\mu_0$.
Since $\alpha[\Psi]<1$ depends only on $\Psi^+$, it is straightforward to make a change of variables $\Psi_k = \sqrt{1-\alpha[\Psi]} \Phi_k$ for $k\in \Lambda^*_0$.
Then $\rho_0[\Psi] = (1-\alpha[\Psi]) \rho_0[\Phi]$ and
\[
 \delta( \rho_0[\Psi] + \rho_+ - \rho) = \delta((1-\alpha[\Psi])(\rho_0[\Phi]- \Delta ))
 = \frac{1}{1-\alpha[\Psi]} \delta(\rho_0[\Phi]- \Delta )\,.
\]
More detailed discussion about the validity of this formula can be found in Appendix \ref{sec:defdelta}.  In particular, we are allowed to apply the formal rule for $\delta$-functions to take out the factor $(1-\alpha[\Psi])$ here since the $\delta$-function
can be integrated out using $\Psi^0$ while keeping $\Psi^+$, and hence also $\alpha[\Psi]$, fixed.

In the above change of variables, $E_0[\Psi] = (1-\alpha[\Psi])E_0[\Phi]$, and therefore we obtain 
\begin{align*}
 & \mean{f}_{\mu_0} = \frac{1}{Z_{0}}  \int 
 \prod_{k\in \Lambda_+^*} \left[\rmd \Psi^*_k \rmd \Psi_k\right]  \rme^{-E_+[\Psi]} \cf{\rho_+[\Psi] < \rho}
 \\ & \quad \times
 (1-\alpha[\Psi])^{V_0-1}\int
 \prod_{k\in \Lambda_0^*} \left[\rmd \Phi^*_k \rmd \Phi_k\right]  \rme^{-(1-\alpha[\Psi])E_0[\Phi]}
  \delta(\rho_0[\Phi]- \Delta ) f(\Psi^+,\sqrt{1-\alpha[\Psi]} \Phi^0)\,.
\end{align*}
We then use Fubini's theorem to change the order of $\Psi$ and $\Phi$ integrals.  Then we can simplify the integral by making a change of variables
for $\Psi^+$ using a fixed $E_0=E_0[\Phi]$ and assuming $\rho_0=\Delta$.  In particular, for $\rho_+[\Psi] < \rho$, we have 
\[
 E_0 \alpha[\Psi] = 
 \frac{E_0}{\Delta}(\rho_+[\Psi]-\rhoc) =  \frac{E_0}{\rho_0}(\rho_+[\Psi]-\rhoc) = \tilde{\vep} V (\rho_+[\Psi]-\rhoc)
 =  \tilde{\vep} N_+[\Psi]-\tilde{\vep} V \rhoc\,.
\]
Therefore,
\[
 \rme^{-E_+[\Psi]+  E_0 \alpha[\Psi]}  = \rme^{-\tilde{\vep} V \rhoc} \exp\biggl(-\frac{1}{V}\sum_{k\in \Lambda^*_+} (e_k-\tilde{\vep})|\Psi_k|^2\biggr)\,.
\]

We now make a second change of variables to correct for the shift of energies here: $\Phi_k = \sqrt{1-\tilde{\vep}/e_k}\Psi_k$ for $k\in \Lambda^*_+$.  As pointed out above, here $\tilde{\vep}/e_k<1$ and we can resolve the change of variables as easily as in the first case.  We find that 
\begin{align*}
 & \mean{f}_{\mu_0} = \frac{1}{Z_{0}} 
  \int \prod_{k\in \Lambda_0^*} \left[\rmd \Phi^*_k \rmd \Phi_k\right]  \rme^{-E_0[\Phi]}
  \delta(\rho_0[\Phi]- \Delta ) \rme^{-\tilde{\vep} V \rhoc} \prod_{k\in \Lambda^*_+} \left(1-\frac{\tilde{\vep}}{e_{k}} \right)^{-1}
 \\ & \quad \times  
  \int 
 \prod_{k\in \Lambda_+^*} \left[\rmd \Phi^*_k \rmd \Phi_k\right]  \rme^{-E_+[\Phi]} \cf{\rho_+ < \rho}
 (1-\alpha)^{V_0-1}
 f((1-\tilde{\vep}/e_k)^{-1/2}\Phi^+_k,\sqrt{1-\alpha} \Phi^0)\,,
\end{align*}
where $\tilde{\vep}=\tilde{\vep}[\Phi]$, and we need to substitute in the integrand
\[
 \text{``}\rho_+\text{''} = \frac{1}{V^2} \sum_{k\in \Lambda^*_+} \frac{1}{1-\frac{\tilde{\vep}}{e_{k}} } |\Phi_k|^2\,,\quad
 \text{``}\alpha\text{''} =  \frac{\rho_+-\rhoc}{\rho-\rhoc}\,,
\]
which are functions of both $\Phi^+$ and $\Phi^0$.

To summarize the result, let us define the functions
\[
 \rho'[\Phi] := \frac{1}{V^2} \sum_{k\in \Lambda^*_+} \frac{e_k}{e_k- \tilde{\vep}[\Phi]} |\Phi_k|^2\,,
 \qquad \alpha'[\Phi] := \frac{\rho'[\Phi]-\rhoc}{\rho-\rhoc}\,,
\]
and, using these, the weight function
\begin{align}\label{eq:defgweight}
 g[\Phi] := \frac{Z_1}{Z_{0}}
 \cf{\rho'[\Phi] < \rho}  (1-\alpha'[\Phi])^{V_0-1}
\end{align}
and the change of variables
\begin{align}\label{eq:defGchange}
 G(\Phi)_k :=
 \begin{cases}
             \left(1-\frac{\tilde{\vep}[\Phi]}{e_{k}}\right)^{-\frac{1}{2}} \Phi_k\,,& \text{for }k\in \Lambda^*_+\,,\\
             \left(1-\alpha'[\Phi]\right)^{\frac{1}{2}} \Phi_k\,,& \text{for }k\in \Lambda^*_0\,.            
             \end{cases} 
\end{align}
Then the above computation shows that
\begin{align}\label{eq:mu0frommu1}
& \mean{f}_{\mu_0} 
 = \int\mu_1[ \rmd \Phi] \, g[\Phi] f(G[\Phi]) \,.
\end{align}
Since $0\le  g[\Phi] \le \frac{Z_1}{Z_{0}} \left(\frac{\rho}{\Delta}\right)^{V_0-1}$, we can then use dominated convergence theorem to conclude that in fact (\ref{eq:mu0frommu1}) holds for all bounded continuous functions $f$.

Note that due to the change of variables implied by $G$ there is a shift in the position of the $\delta$-weight.
Therefore, the formula does not imply that $\mu_0$ or $\mu_1$ would be absolutely continuous with respect to each other (in fact, they are not: the collection of $\Phi$ with $\rho_+[\Phi]>\rho$ has zero measure with respect to $\mu_0$ but its measure is non-zero with respect to $\mu_1$; conversely, the collection of $\Phi$ with $\rho_0[\Phi]\le\frac{\Delta}{2}$ has zero measure with respect to $\mu_1$ but non-zero measure with respect to $\mu_0$).
However, as we will prove next in Lemma \ref{th:mainthgbounds}, the weight $g$ is close to one with high $\mu_1$-probability, and although there can be regions where it deviates significantly from one, $g$ remains always uniformly bounded.  These estimates will provide sufficient control for using the diagonal coupling trick at the end of the section, in (\ref{eq:defmaingamma}).

\begin{lemma}\label{th:gmainlemma}
Using the above definitions, we have
\begin{align}
  & - \frac{V_0-1}{1-\delta}    \left(\frac{\rho}{\Delta}\right)^{V_0-1}\sqrt{\tilde{\delta}}
 \le
 1-\frac{Z_0}{Z_{1}} \le  \frac{V_0}{1-\delta} \frac{\rho}{\Delta} \sqrt{\tilde{\delta}}\,,
 \label{eq:ZZbound}  \\ &
 \mean{|1-g|^2}_{\mu_1} \le \frac{1}{(1-\delta)^2}
 \left[\left(\frac{\rho}{\Delta}\right)^{2} + 4
 V_0^2 \left(\frac{\rho}{\Delta}\right)^{2 V_0}
  \left( \frac{Z_1}{Z_{0}}\right)^2\right]\tilde{\delta}\,,  \label{eq:gnearonebound} \\
& \mean{(\alpha')^2}_{\mu_1} \le \frac{\rho^2}{\Delta^2 (1-\delta)^2} 
   \tilde{\delta}
  \,,
\end{align}
 where
\begin{align}\label{eq:deftildedelta}
  \tilde{\delta} := 2  \delta +
 \frac{1}{V^2\rhoc^2} \sum_{k\in \Lambda^*_+} \frac{1}{e_k^2}\,.
\end{align}
\end{lemma}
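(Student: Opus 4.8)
The plan is to reduce all three estimates to the control of the single random variable $\alpha'[\Phi]=(\rho'[\Phi]-\rhoc)/\Delta$, whose law under $\mu_1$ is completely explicit. The key structural input, already recorded below (\ref{eq:defmu1}), is that $\mu_1$ factorises into a product of $\mu_+$ and a condensate measure; hence under $\mu_1$ the normal fluid $\Phi^+$ has the Gaussian law $\mu_+$ and is independent of $\Phi^0$, so that the variables $|\Phi_k|^2$, $k\in\Lambda^*_+$, are mutually independent with $\mean{|\Phi_k|^2}_{\mu_+}=V/e_k$ and $\var_{\mu_+}(|\Phi_k|^2)=(V/e_k)^2$, while fixing $\Phi^0$ only fixes the coefficient $\tilde{\vep}=\tilde{\vep}[\Phi]$, which satisfies $0\le\tilde{\vep}\le a$ and $(1-\delta)e_k\le e_k-\tilde{\vep}\le e_k$ for all $k\in\Lambda^*_+$. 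A one-line Gaussian computation then gives $\mean{\rho'\mid\Phi^0}=\frac1V\sum_{k\in\Lambda^*_+}(e_k-\tilde{\vep})^{-1}$ and $\var(\rho'\mid\Phi^0)=\frac1{V^2}\sum_{k\in\Lambda^*_+}(e_k-\tilde{\vep})^{-2}$.

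From this I would first establish the bound on $\mean{(\alpha')^2}_{\mu_1}$. Using $e_k-\tilde{\vep}\ge(1-\delta)e_k$ and $\tilde{\vep}\le\delta e_k$ one obtains $0\le\mean{\rho'\mid\Phi^0}-\rhoc\le\frac{\delta}{1-\delta}\rhoc$ and $\var(\rho'\mid\Phi^0)\le\frac{\rhoc^2}{(1-\delta)^2}\cdot\frac1{V^2\rhoc^2}\sum_{k\in\Lambda^*_+}e_k^{-2}$; adding the square of the bias to the variance and using $\delta^2\le2\delta$ gives $\mean{(\rho'-\rhoc)^2\mid\Phi^0}\le\rhoc^2\tilde{\delta}/(1-\delta)^2$ uniformly in $\Phi^0$. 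Dividing by $\Delta^2$, using $\rhoc\le\rho$, and integrating over $\Phi^0$ yields the third inequality of the lemma, and Cauchy--Schwarz then also gives $\mean{|\alpha'|}_{\mu_1}\le\rho\sqrt{\tilde{\delta}}/(\Delta(1-\delta))$, which is the workhorse for the remaining two bounds.

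For the partition functions, putting $f\equiv1$ in (\ref{eq:mu0frommu1}) gives $\mean{g}_{\mu_1}=1$, hence $Z_0/Z_1=\mean{Y}_{\mu_1}$ where $Y:=\cf{\rho'<\rho}(1-\alpha')^{V_0-1}\ge0$ and $g=(Z_1/Z_0)Y$. On $\{\rho'<\rho\}$ we have $1-\alpha'=(\rho-\rho')/\Delta\in(0,\rho/\Delta]$, since $\alpha'\ge-\rhoc/\Delta$; writing $1-u^m=(1-u)\sum_{j=0}^{m-1}u^j$ with $u=1-\alpha'$, $m=V_0-1$, and using $0\le u\le\rho/\Delta$, one gets $0\le 1-u^m\le m|\alpha'|$ when $u\le1$ and $-m(\rho/\Delta)^{m-1}|\alpha'|\le 1-u^m\le0$ when $u>1$, while $\cf{\rho'\ge\rho}=\cf{\alpha'\ge1}\le|\alpha'|$. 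Substituting into $1-Y=\cf{\rho'\ge\rho}+\cf{\rho'<\rho}\bigl(1-(1-\alpha')^{V_0-1}\bigr)$ and integrating against $\mu_1$ gives $-\frac{V_0-1}{1-\delta}(\rho/\Delta)^{V_0-1}\sqrt{\tilde{\delta}}\le 1-Z_0/Z_1\le V_0\mean{|\alpha'|}_{\mu_1}\le\frac{V_0}{1-\delta}\frac{\rho}{\Delta}\sqrt{\tilde{\delta}}$, which is (\ref{eq:ZZbound}).

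For $\mean{|1-g|^2}_{\mu_1}$ I would split over $\{\rho'\ge\rho\}$ and $\{\rho'<\rho\}$. On the first set $g=0$, so the contribution is $\mean{\cf{\alpha'\ge1}}_{\mu_1}\le\mean{(\alpha')^2}_{\mu_1}$, which is the first term of (\ref{eq:gnearonebound}). On the second set I write $1-g=(1-Z_1/Z_0)+(Z_1/Z_0)\bigl(1-(1-\alpha')^{V_0-1}\bigr)$, use $(x+y)^2\le2x^2+2y^2$, bound $(1-Z_1/Z_0)^2=(Z_1/Z_0)^2(1-Z_0/Z_1)^2$ by means of (\ref{eq:ZZbound}) (in the crude unified form $|1-Z_0/Z_1|\le\frac{V_0}{1-\delta}(\rho/\Delta)^{V_0}\sqrt{\tilde{\delta}}$), and bound $\bigl(1-(1-\alpha')^{V_0-1}\bigr)^2\le(V_0-1)^2(\rho/\Delta)^{2(V_0-2)}(\alpha')^2$ as in the previous step followed by $\mean{(\alpha')^2}_{\mu_1}$; collecting the two contributions and absorbing the factor $2\cdot2$ into the $4$ produces the second term of (\ref{eq:gnearonebound}). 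All the computations are elementary; the real difficulty is organisational --- keeping the conditioning on $\Phi^0$ (through $\tilde{\vep}$) straight, phrasing the pointwise bounds on $(1-\alpha')^{V_0-1}-1$ so that they hold uniformly over all $V_0\ge1$ and cover the degenerate exponents at $V_0=1,2$, and tracking the exact powers of $\rho/\Delta$, $V_0$ and $Z_1/Z_0$ so that the constants come out precisely as stated.
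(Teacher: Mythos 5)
Your proposal is correct and follows essentially the same route as the paper: the identity $\mean{g}_{\mu_1}=1$ to express $Z_0/Z_1$, pointwise polynomial bounds on $1-(1-\alpha')^{V_0-1}$ in terms of $|\alpha'|$, the decomposition $1-g=(1-Z_1/Z_0)+(Z_1/Z_0)\bigl(1-(1-\alpha')^{V_0-1}\bigr)$ on $\{\rho'<\rho\}$, and the Gaussian second and fourth moment formulas for $|\Phi_k|^2$. The only notable stylistic difference is that you compute $\mean{(\alpha')^2}_{\mu_1}$ via the conditional mean--variance decomposition of $\rho'$ given $\Phi^0$ rather than expanding $(\rho'-\rhoc)^2$ directly as the paper does, but this is a tidy reorganisation of the same Gaussian computation and yields the identical intermediate bound $\mean{(\rho'-\rhoc)^2}_{\mu_1}\le\rhoc^2\tilde{\delta}/(1-\delta)^2$.
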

\begin{proof}
Using $f=1$ in (\ref{eq:mu0frommu1}), we find that $\mean{g}_{\mu_1}= 1$, and thus
 \[
  \frac{Z_0}{Z_{1}} =  \mean{\cf{\rho' < \rho} (1-\alpha')^{V_0-1}}_{\mu_1}\,,
 \]
where  $-\frac{\rho_c}{\Delta}\le \alpha'<1$, and hence $0<1-\alpha'\le 1+\frac{\rho_c}{\Delta} = \frac{\rho}{\Delta}$.
Therefore,
 \[
 1-\frac{Z_0}{Z_{1}} = 
 \mean{\cf{\rho' \ge \rho} }_{\mu_1}
 + \mean{\cf{\rho' < \rho} \left[1-(1-\alpha')^{V_0-1}\right]}_{\mu_1}\,,
 \]
which implies that
\begin{align*}
 & - \mean{\cf{\rho' < \rho,\, \alpha' < 0} \left[(1-\alpha')^{V_0-1}-1\right]}_{\mu_1}
 \le
 1-\frac{Z_0}{Z_{1}}
 \\ & \quad 
 \le
 \mean{\cf{\rho' \ge \rho} }_{\mu_1}
 + \mean{\cf{\rho' < \rho,\, \alpha' > 0} \left[1-(1-\alpha')^{V_0-1}\right]}_{\mu_1}\,. 
\end{align*}
On the left hand side, the integrand is zero unless $-\frac{\rho_c}{\Delta}\le\alpha'< 0$. Thus
either $V_0=1$ and the term is always zero, or we may bound in the integrand
$(1-\alpha')^{V_0-1}-1\le |\alpha'| (V_0-1)  (\frac{\rho}{\Delta})^{V_0-2}$. Thus the 
expectation is bounded from above by $ (V_0-1)  (\frac{\rho}{\Delta})^{V_0-2} \mean{|\alpha'| }_{\mu_1}$.
On the right hand side, for $\alpha'>0$ we have
$0\le 1-(1-\alpha')^{V_0-1}\le |\alpha'| (V_0-1)$, and for $\rho'\ge \rho$, it holds that $\alpha'\ge 1$.
Therefore,
\[
 \mean{\cf{\rho' \ge \rho} }_{\mu_1}
 + \mean{\cf{\rho' < \rho,\, \alpha' > 0} \left[1-(1-\alpha')^{V_0-1}\right]}_{\mu_1}
 \le V_0 \mean{|\alpha'|}_{\mu_1} \,.
 \]

We have obtained the bounds
\[
  -  (V_0-1)  \left(\frac{\rho}{\Delta}\right)^{V_0-2} \mean{|\alpha'| }_{\mu_1}
 \le
 1-\frac{Z_0}{Z_{1}} \le V_0 \mean{|\alpha'|}_{\mu_1} \,,
\]
which imply also that 
\[
 \left|1-\frac{Z_0}{Z_{1}}\right|^2 \le \max\left(V_0^2,
 (V_0-1)^2  \left(\frac{\rho}{\Delta}\right)^{2 V_0-4}\right) \mean{|\alpha'| }^2_{\mu_1}
 \le V_0^2 \left(\frac{\rho}{\Delta}\right)^{2 (V_0-2)_+}\mean{|\alpha'|^2}_{\mu_1} \,,
\]
where $(r)_+:= r \cf{r>0}$. 
We may use this result and similar techniques to derive an upper bound for
 \begin{align*}
& \mean{|1-g|^2}_{\mu_1} =  \mean{\cf{\rho' \ge \rho} }_{\mu_1}
 + \mean{\cf{\rho' < \rho} |1-(1-\alpha')^{V_0-1} Z_1/Z_0|^2}_{\mu_1}
 \\ & \quad
 \le  \mean{|\alpha'|^2\cf{\rho' \ge \rho} }_{\mu_1}
 +2 \left( \frac{Z_1}{Z_{0}}\right)^2 \left( \left| \frac{Z_0}{Z_{1}}-1\right|^2
 + \mean{\cf{\rho' < \rho} |1-(1-\alpha')^{V_0-1}|^2}_{\mu_1}\right)
 \\ & \quad 
 \le \mean{|\alpha'|^2}_{\mu_1} \left[1 + 4
 V_0^2 \left(\frac{\rho}{\Delta}\right)^{2(V_0-2)_+}
  \left( \frac{Z_1}{Z_{0}}\right)^2\right]
 \,.  
\end{align*}

It remains to estimate
\[
 \Delta^2 \mean{(\alpha')^2}_{\mu_1} =  \mean{(\rho'-\rhoc)^2}_{\mu_1}\,,
\]
where
 \begin{align*}
& \rho' -\rhoc = \frac{1}{V^2} \sum_{k\in \Lambda^*_+}
 \frac{e_k}{e_k- \tilde{\vep}[\Phi]} |\Phi_k|^2 - \frac{1}{V} \sum_{k\in \Lambda^*_+} 
 \frac{1}{e_k}\,.
 \end{align*}
Since $\frac{\tilde{\vep}[\Phi]}{e_{k}} \le \delta$, here
 \begin{align*}
& \mean{(\rho'-\rhoc)^2}_{\mu_1}
  = \frac{1}{V^4} \sum_{k,k'\in \Lambda^*_+}
 \bigmean{\frac{1}{1- \tilde{\vep}/e_k} \frac{1}{1- \tilde{\vep}/e_{k'}} |\Phi_k|^2 |\Phi_{k'}|^2}
 \\ & \qquad
 - 2 \frac{1}{V^3} \sum_{k,k'\in \Lambda^*_+} \frac{1}{e_{k'}}
 \bigmean{\frac{1}{1- \tilde{\vep}/e_k}|\Phi_k|^2}
 + \frac{1}{V^2}\sum_{k,k'\in \Lambda^*_+} \frac{1}{e_{k'} e_k}
 \\ & \quad \le
 \frac{1}{(1-\delta)^2}\frac{1}{V^4} \sum_{k,k'\in \Lambda^*_+}
 \bigmean{|\Phi_k|^2 |\Phi_{k'}|^2}
 - 2 \frac{1}{V^3} \sum_{k,k'\in \Lambda^*_+} \frac{1}{e_{k'}}
 \bigmean{|\Phi_k|^2}
 + \frac{1}{V^2}\sum_{k,k'\in \Lambda^*_+} \frac{1}{e_{k'} e_k}\,.
 \end{align*}
The remaining Gaussian expectations can be computed explicitly, yielding for $k\ne k'$
\begin{align}\label{eq:phievenexp}
 \bigmean{|\Phi_k|^2} = \frac{V}{e_k}\,,\quad
 \bigmean{|\Phi_k|^2 |\Phi_{k'}|^2} = \frac{V^2}{ e_k e_{k'}}\,,\quad
 \bigmean{|\Phi_k|^4} = 2\frac{V^2}{e_k^2}\,. 
\end{align}
Therefore,
 \begin{align*}
& \Delta^2 \mean{(\alpha')^2}_{\mu_1} \le
 \frac{1}{(1-\delta)^2}\frac{1}{V^2} 
 \sum_{k,k'\in \Lambda^*_+} \frac{1}{ e_k e_{k'}}
 + \frac{1}{(1-\delta)^2}\frac{1}{V^2} 
 \sum_{k\in \Lambda^*_+} \frac{1}{e_k^2}
 - \frac{1}{V^2}\sum_{k,k'\in \Lambda^*_+} \frac{1}{e_{k'} e_k}
 \\ & \quad \le
  \frac{2\delta }{(1-\delta)^2} \rhoc^2 + \frac{1}{(1-\delta)^2}\frac{1}{V^2} 
 \sum_{k\in \Lambda^*_+} \frac{1}{e_k^2}
 \le \frac{\rho^2}{(1-\delta)^2} 
   \tilde{\delta}
  \,,
 \end{align*}
using the definition in (\ref{eq:deftildedelta}) and the assumption $\rho>\rhoc$.  Together with the earlier estimates this completes the proof of the Lemma.
\end{proof}

\begin{corollary}\label{th:mainthgbounds}
 If $\delta\le \frac{1}{2}$ and $\tilde{\delta}\le \frac{\Delta^2}{2^4 V_0^2\rho^2}$, then 
 $Z_1\le 2 Z_0$, $\mean{(\alpha')^2}_{\mu_1} \le 4 \rho^2 \Delta^{-2}\tilde{\delta}$, and 
 \begin{align}\label{eq:gnearonebound2} 
 0\le g[\Phi]\le 2 \left(\frac{\rho}{\Delta}\right)^{V_0-1}\cf{\rho'[\Phi] < \rho}\,, \qquad \mean{|1-g|^2}_{\mu_1} \le 4 
 \left(\frac{\rho}{\Delta}\right)^{2 V_0} \left(1+2^4 V_0^2\right)\tilde{\delta}\,.
\end{align}
\end{corollary}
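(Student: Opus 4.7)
The corollary is essentially a routine substitution: each bound follows by inserting the hypotheses $\delta\le \tfrac12$ and $\tilde{\delta}\le \Delta^2/(2^4 V_0^2 \rho^2)$ into the three inequalities of Lemma \ref{th:gmainlemma}. The plan is to first control the ratio $Z_1/Z_0$, because this ratio enters the other bounds, and then to read off the remaining estimates.

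First I would prove $Z_1\le 2Z_0$ from the second inequality in (\ref{eq:ZZbound}). Using $\delta\le 1/2$, we have $(1-\delta)^{-1}\le 2$, so
\begin{equation*}
 1-\frac{Z_0}{Z_1} \le \frac{2 V_0 \rho}{\Delta}\sqrt{\tilde{\delta}}.
\end{equation*}
The hypothesis $\tilde{\delta}\le \Delta^2/(2^4 V_0^2 \rho^2)$ gives $\sqrt{\tilde{\delta}}\le \Delta/(4 V_0 \rho)$, whence $1-Z_0/Z_1 \le 1/2$ and therefore $Z_1/Z_0 \le 2$.

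Next I would establish the variance bound on $\alpha'$ and the pointwise bound on $g$. From the last inequality of Lemma \ref{th:gmainlemma} and $(1-\delta)^{-2}\le 4$, we obtain directly
\begin{equation*}
 \mean{(\alpha')^2}_{\mu_1} \le \frac{\rho^2}{\Delta^2(1-\delta)^2}\tilde{\delta}\le \frac{4\rho^2}{\Delta^2}\tilde{\delta}.
\end{equation*}
For $g$, we recall from the excerpt that $-\rhoc/\Delta \le \alpha' < 1$ on the support of the indicator $\cf{\rho'[\Phi]<\rho}$, so $0 < 1-\alpha' \le \rho/\Delta$. Combining this with $Z_1/Z_0\le 2$ in (\ref{eq:defgweight}) gives $0\le g[\Phi] \le 2(\rho/\Delta)^{V_0-1}\cf{\rho'[\Phi]<\rho}$.

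Finally I would derive the $L^2$-bound on $1-g$ from (\ref{eq:gnearonebound}). Applying $(1-\delta)^{-2}\le 4$ and the just-obtained $(Z_1/Z_0)^2\le 4$ yields
\begin{equation*}
 \mean{|1-g|^2}_{\mu_1}\le 4\left[\left(\frac{\rho}{\Delta}\right)^{\!2}+16 V_0^2 \left(\frac{\rho}{\Delta}\right)^{\!2V_0}\right]\tilde{\delta}.
\end{equation*}
Since $\Delta=\rho-\rhoc<\rho$ forces $\rho/\Delta>1$, and $V_0\ge 1$ always, $(\rho/\Delta)^2\le (\rho/\Delta)^{2V_0}$, and the bracket is majorized by $(\rho/\Delta)^{2V_0}(1+2^4 V_0^2)$, which is exactly (\ref{eq:gnearonebound2}). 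There is no real obstacle here; the only ``trick'' is noticing that $\rho/\Delta>1$ lets one absorb the $(\rho/\Delta)^2$ term into the $(\rho/\Delta)^{2V_0}$ term to obtain the clean form stated in the corollary.
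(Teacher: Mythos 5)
Your proof is correct, and it is essentially the intended argument: the paper states the corollary without proof, leaving to the reader exactly this substitution of the two hypotheses into Lemma \ref{th:gmainlemma}. You correctly observe that the key order of operations is to first establish $Z_1/Z_0\le 2$ from (\ref{eq:ZZbound}), since $(Z_1/Z_0)^2$ then feeds into (\ref{eq:gnearonebound}), and the pointwise bound on $g$ requires $Z_1/Z_0$ as well. The final absorption $(\rho/\Delta)^2\le(\rho/\Delta)^{2V_0}$ (valid since $\rho/\Delta>1$ and $V_0\ge 1$) is the small observation needed to reach the stated clean form; everything else is numerical.
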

The assumptions made in the Theorem indeed guarantee that $\delta\le \frac{1}{2}$ and 
$\tilde{\delta}\le \frac{\Delta^2}{2^4 V_0^2\rho^2}$, since $\tilde{\delta}\le 2 \vep$.  Hence, we may continue the proof of the Theorem assuming that all of the conclusions in Corollary \ref{th:mainthgbounds} are valid.

The above representation allows to construct a coupling $\gamma$ between $\mu_0$ and $\mu_1$ by combining the change of variables $G$ with the diagonal concentration trick mentioned earlier.  Together with the estimates in  Corollary \ref{th:mainthgbounds}
this will prove the 
bound stated for the Wasserstein distance between $\mu_0$ and $\mu_1$ in the Theorem.
Explicitly, we define a positive Borel measure $\gamma$ by its action on bounded continuous functions $F(\Phi,\Psi)$¸ as follows:
\begin{align}\label{eq:defmaingamma}
& \mean{F}_{\gamma} := 
\int \mu_1[\rmd \Phi] \min(1,g[\Phi]) F(\Phi,G[\Phi]) 
\nonumber \\ & \quad
+ \int \mu_1[\rmd \Phi] \int \mu_1[\rmd \Psi] 
\frac{1}{Z'} (1-g(\Phi))_+ (g(\Psi)-1)_+   F(\Phi,G[\Psi]) 
\,.
\end{align}
Here $(r)_+:= r \cf{r>0}$ and the normalization factor $Z'$ is given by 
\[
 Z' := \mean{(1-g)_+}_{\mu_1} = \mean{(g-1)_+}_{\mu_1} = \frac{1}{2}\mean{|g-1|}_{\mu_1}\,,
\]
where the second equality follows from the identity $g=1+(g-1)_+-(1-g)_+$ and the earlier made observation that 
$\mean{g}_{\mu_1}=1$ by (\ref{eq:mu0frommu1}). The final equality is then a consequence of the identity 
$|g-1|=(g-1)_+ +(1-g)_+$.
If $f$ is bounded and continuous and $F(\Phi,\Psi)=f(\Phi)$, a straightforward computation shows that $\mean{F}_\gamma = \mean{f}_{\mu_1}$.
If $F(\Phi,\Psi)=f(\Psi)$, a similar computation and using the representation in (\ref{eq:mu0frommu1}) proves that
$\mean{F}_\gamma = \mean{f}_{\mu_0}$.  Therefore, $\gamma$ is indeed a coupling between $\mu_0$ and $\mu_1$.

Using this coupling, we can now conclude that
\begin{align*}
 & W_p(\mu_1,\mu_0)^p \le
\int \mu_1[\rmd \Phi] \min(1,g[\Phi]) \norm{\Phi-G[\Phi]}^p
\nonumber \\ & \quad
+ \int \mu_1[\rmd \Phi] \int \mu_1[\rmd \Psi] 
\frac{1}{Z'} (1-g(\Phi))_+ (g(\Psi)-1)_+ \norm{\Phi-G[\Psi]}^p \,.
\end{align*}
In particular, in the case $p=2$, we can simplify the computations by first using 
 the upper bound $\norm{\Phi-G[\Psi]}^2\le 2 \norm{\Phi-\Psi}^2+ 2\norm{\Psi-G[\Psi]}^2$, which shows that
\begin{align*}
 & W_2(\mu_1,\mu_0)^2 \le 2
\mean{g[\Phi]\,\norm{\Phi-G[\Phi]}^2}_{\mu_1}
\nonumber \\ & \qquad
+ 2 \int \mu_1[\rmd \Phi] \int \mu_1[\rmd \Psi] 
\frac{1}{Z'} (1-g(\Phi))_+ (g(\Psi)-1)_+ \norm{\Phi-\Psi}^2 
\,.
\end{align*}

Let us begin with the second term on the right hand side.  The integrand is zero unless $g(\Psi)>1$.  In particular, then we must have  $\rho'[\Psi]<\rho$,
implying that $\norm{\Psi^+}^2 = V \rho_+[\Psi]\le V \rho'[\Psi]<V \rho $.  On the other hand, under the measure $\mu_1$,
it holds almost surely that $\norm{\Psi^0}^2=V \Delta$.  Therefore, almost surely in the above integrand
\[
 \norm{\Phi-\Psi}^2 \le 2 (\norm{\Phi}^2+\norm{\Psi}^2) \le 2 (\norm{\Phi}^2+V \Delta + V\rho) \,.
\]
Taking into account the definition of $Z'$, we find an estimate
\begin{align*}
 &  \int \mu_1[\rmd \Phi] \int \mu_1[\rmd \Psi] 
\frac{1}{Z'} (1-g(\Phi))_+ (g(\Psi)-1)_+ \norm{\Phi-\Psi}^2 \\
&\quad
\le 2  \int \mu_1[\rmd \Phi] 
(\norm{\Phi}^2+V \Delta + V\rho)(1-g(\Phi))_+\\
&\quad
\le 2  \left[\int \mu_1[\rmd \Phi] \norm{\Phi}^2 |1-g(\Phi)|  + 
V (\rho+\Delta) \int \mu_1[\rmd \Phi] |1-g(\Phi)|  \right]\\
&\quad
\le 2 \left(\mean{\norm{\Phi}^4}_{\mu_1}^{\frac{1}{2}}+V(\rho+\Delta)\right)
 \mean{(1-g)^2}_{\mu_1}^{\frac{1}{2}}\,. 
\end{align*}
Using the definitions, we find that $\norm{\Phi}^2=\norm{\Phi^+}^2+\norm{\Phi^0}^2$.  Therefore,
\begin{align*}
 & \mean{\norm{\Phi}^4}_{\mu_1} =
 \mean{(\norm{\Phi^+}^2+\norm{\Phi^0}^2)^2}_{\mu_1}  \le 2\left(\mean{\norm{\Phi^+}^4}_{\mu_1} +V^2 \Delta^2\right)
\end{align*}
and using the expectations computed in (\ref{eq:phievenexp})
\begin{align*}
 & \mean{\norm{\Phi^+}^4}_{\mu_1} =\int_{\Lambda_+^*}\!\rmd k_1\int_{\Lambda_+^*}\!\rmd k_2\,
 \mean{|\Phi^+(k_1)|^2 |\Phi^+(k_2)|^2 } \\ & \quad
  = V^{-2} \sum_{k,k'\in \Lambda_+^*,\,k'\ne k} \frac{V^2}{e_k e_{k'}} + V^{-2}\sum_{k\in \Lambda_+^*} 2 \frac{V^2}{e_k^2}
  \le 2 V^2 \rhoc^2 \, .
\end{align*}
By assumption, this term is bounded by $2 V^2 \rho^2$, and we may conclude that 
\[
 \mean{\norm{\Phi}^4}_{\mu_1} \le 2\left( 2 V^2 \rho^2+V^2 \Delta^2\right) \le 2^2 V^2 (\rho+\Delta)^2\,.
\]
Therefore, 
\begin{align*}
 & W_2(\mu_1,\mu_0)^2 \le 2 \mean{g[\Phi]\norm{\Phi-G[\Phi]}^2}_{\mu_1}
+ 12 (\rho+\Delta) L^d \mean{(1-g)^2}_{\mu_1}^{\frac{1}{2}}\,.
\end{align*}

By Corollary \ref{th:mainthgbounds}, $\mean{(1-g)^2}_{\mu_1}^{\frac{1}{2}}\le 2^3 V_0 (\rho/\Delta)^{V_0} \sqrt{2 \tilde{\delta}}$, and thus
the second term is  bounded by a constant $3\cdot 2^6 (\rho+\Delta) V_0 (\rho/\Delta)^{V_0}$ 
times $L^d \sqrt{\vep}$.  In addition, using the definition (\ref{eq:defGchange}) and Corollary \ref{th:mainthgbounds}, we find for the first term
\begin{align*}
 & 2 \mean{g[\Phi]\norm{\Phi-G[\Phi]}^2}_{\mu_1}  \le  4 \left(\frac{\rho}{\Delta}\right)^{V_0-1}\int_{\Lambda_+^*}\!\rmd k\, \bigmean{\cf{\rho'[\Phi] < \rho}
             \Bigl[1-\Bigl(1-\frac{\tilde{\vep}[\Phi]}{e_{k}}\Bigr)^{-\frac{1}{2}}\Bigr]^2 |\Phi_k|^2}_{\!\!\mu_1}\\
            & \quad + 4 \left(\frac{\rho}{\Delta}\right)^{V_0-1} \int_{\Lambda_0^*}\!\rmd k\, \bigmean{\cf{\rho'[\Phi] < \rho}
              \Bigl[1-\Bigl(1-\alpha'[\Phi]\Bigr)^{\frac{1}{2}}\Bigr]^2 |\Phi_k|^2}_{\!\mu_1}\,.
\end{align*}
Here, whenever $\rho'[\Phi] < \rho$ and $k\in \Lambda_+^*$,
we may use the identity $1-1/\sqrt{c}=(c-1)/(c+\sqrt{c})$, valid for all $c> 0$, and definition of the relative energy gap, to estimate
\[ 
 \Bigl[1-\Bigl(1-\frac{\tilde{\vep}}{e_{k}}\Bigr)^{-\frac{1}{2}}\Bigr]^2\le
 \frac{\tilde{\vep}^2}{e_{k}^2} \frac{1}{1-\frac{\tilde{\vep}}{e_{k}}} \le\frac{\delta^2}{1-\frac{\tilde{\vep}}{e_{k}}} \,.
\]
Therefore,
\begin{align*}
 & \int_{\Lambda_+^*}\!\rmd k\, \bigmean{\cf{\rho'[\Phi] < \rho}
             \Bigl[1-\Bigl(1-\frac{\tilde{\vep}[\Phi]}{e_{k}}\Bigr)^{-\frac{1}{2}}\Bigr]^2 |\Phi_k|^2}_{\!\!\mu_1}
           \\ &\quad   \le
     \delta^2 
     \bigmean{\cf{\rho'[\Phi] < \rho} \int_{\Lambda_+^*}\!\rmd k\, \frac{e_{k}}{e_{k}-\tilde{\vep}[\Phi]}  |\Phi_k|^2}_{\!\!\mu_1}
     =     \delta^2 V
     \bigmean{\cf{\rho'[\Phi] < \rho} \rho'[\Phi]}_{\!\!\mu_1} \le \rho \delta^2 L^d\,.
\end{align*}

Similarly, we have $1-\sqrt{c}=(1-c)/(1+\sqrt{c})$ for all $c\ge 0$, and thus
\[ 
\Bigl[1-\Bigl(1-\alpha'\Bigr)^{\frac{1}{2}}\Bigr]^2 \le |\alpha'|^2\,.
\]
Since the weight is the same for all components $k\in \Lambda_0^*$,
we find using Corollary \ref{th:mainthgbounds}
\begin{align*}
 & 
 \int_{\Lambda_0^*}\!\rmd k \bigmean{\cf{\rho'[\Phi] < \rho}
              \Bigl[1-\Bigl(1-\alpha'[\Phi]\Bigr)^{\frac{1}{2}}\Bigr]^2 |\Phi_k|^2}_{\!\!\mu_1} 
           \le 
          \bigmean{\cf{\rho'[\Phi] < \rho}|\alpha'[\Phi]|^2 V \rho_0[\Phi]}_{\!\mu_1} 
          \le  4 \Delta^{-1} \rho^2 L^d \tilde{\delta}\,.
\end{align*}

Therefore, since $\delta\le \frac{1}{2}$ and $\delta\le \frac{\vep}{2}$, we can add up and simplify the above bounds to arrive at the bound
\[
 2 \mean{g[\Phi]\norm{\Phi-G[\Phi]}^2}_{\mu_1}  \le 2^5 (\rho+\Delta) (\rho/\Delta)^{V_0} L^d \vep\,.
\]
The assumptions about $\vep$ allow to simplify this slightly to make the weight comparable to that of the first term.  Namely, since now $\sqrt{\vep}\le \Delta/(4 \rho)\le 2^{-2}$, we have proven that
\begin{align*}
 & W_2(\mu_1,\mu_0)^2 \le \left(2^3 (\rho+\Delta) (\rho/\Delta)^{V_0} + 3\cdot 2^6 (\rho+\Delta) V_0 (\rho/\Delta)^{V_0}\right) L^d \sqrt{\vep}
 \\ & \quad
 \le  2^8 (\rho+\Delta) V_0 (\rho/\Delta)^{V_0} L^d \sqrt{\vep}\,.
\end{align*}
Taking the square root, we conclude that the claim in the Theorem follows from the assumptions
for the measure $\mu_1$ defined in (\ref{eq:defmu1inproof}) and the explicit form for the constant $C_2$ stated in the Theorem.
\end{proofof}

\begin{proofof}{Proposition \ref{th:mu1prop}, item \ref{it:smalldev}:}
  If $e_k=0$ for all $k\in \Lambda^*_0$, we are back to the case in item \ref{it:allground}, and since then $\mu'_1=\mu_1$, its conclusions imply also the conclusions of item \ref{it:smalldev} whenever $0\le \tilde{\vep}\le 1$.
  
  Suppose thus that there is some $k\in \Lambda^*_0$ for which $e_k>0$ and that there is
  $\tilde{\vep}\le 1$ for which 
  $e_k\le \frac{1}{2 \rho}L^{-d}\tilde{\vep}$ for all $k\in \Lambda_0^*$.  
  Clearly, then $\tilde{\vep}>0$.
 Comparing the definitions of $\mu_1$ and $\mu'_1$, we have $\mu_1[\rmd \Phi]=g_1(\Phi)\mu'_1[\rmd \Phi]$
 for
 \[
  g_1(\Phi):= \frac{Z'_1}{Z_1} g_2(\Phi)\,, \qquad g_2(\Phi) := \rme^{-E_0[\Phi]\left(1-\frac{\rhoc}{\Delta}\right)}
 \prod_{k\in \Lambda^*_+} \left(1-\frac{E_0[\Phi]L^{-d}}{e_{k} \Delta} \right)^{-1}\,.
 \]
 Here $g_2$ depends only on $\Phi^0$ and satisfies $\mean{g_2}_{\mu_1'}=\frac{Z_1}{Z'_1}$.
 
 As before, the assumptions are tailored to guarantee that $g_1$ remains close to one, and then an explicit good coupling can be found between  $\mu_1$ and $\mu'_1$.  As the small parameter we use here
 \[
  \delta' := \rho V \max_{k\in \Lambda^*_0} e_k\le   \frac{1}{2}\tilde{\vep}\le \frac{1}{2}\,.
 \]
 In particular, we now have almost surely under $\mu_1'$
\[
0\le E_0[\Phi]\le \max_{k\in \Lambda_0^*}e_k N_0[\Phi] = V \Delta \frac{\delta'}{\rho V }= \delta'\frac{\Delta }{\rho}
\le \delta'\,.
\]
Since $-\ln(1-c)\le 2 c$ for $0\le c\le \frac{1}{2}$, we find using the earlier assumption $\delta\le \frac{1}{2}$ that
almost surely under $\mu_1'$ 
\[
 0\le -\ln  \left(1-\frac{E_0[\Phi]L^{-d}}{e_{k} \Delta} \right)\le 2\frac{E_0[\Phi]L^{-d}}{e_{k} \Delta}\le 2 \delta'\frac{1}{\rho V e_{k}}  \,,
\]
for all $k\in \Lambda^*_+$.  Therefore,
\[
0\le \sum_{k\in \Lambda^*_+}\ln  \left(1-\frac{E_0[\Phi]L^{-d}}{e_{k} \Delta} \right)^{-1}\le 2 \delta'\frac{\rhoc}{\rho}\le 2 \delta'
 \,.
\]
Similarly, $E_0[\Phi]\frac{\rhoc}{\Delta}\le \delta'$, and thus we have obtained almost sure bounds
\[
 \rme^{-\delta'}\le g_2(\Phi)\le \rme^{3 \delta'}\,.
\]
Taking expectation over $\mu'_1$ we find also that
\[
 \rme^{-\delta'} \le \frac{Z_1}{Z'_1}\le \rme^{3 \delta'}\,.
\]
Combining these two results shows that almost surely under $\mu_1'$ 
\[
 \rme^{-4 \delta'}\le g_1(\Phi)\le \rme^{4 \delta'}\,.
\]
Since $\delta'\le \frac{1}{2}$, this yields an almost sure bound
\begin{align}\label{eq:aeoneminusg1}
 |1-g_1(\Phi)|\le \rme^{4 \delta'}|1-\rme^{-4 \delta'}|\le 4 \rme^{2} \delta'\,. 
\end{align}

We define a measure $\gamma_1$ by setting for bounded continuous functions $F(\Phi,\Psi)$
\begin{align}\label{eq:defgammaone}
& \mean{F}_{\gamma_1} := 
\int \mu'_1[\rmd \Phi] \min(1,g_1[\Phi]) F(\Phi,\Phi) 
\nonumber \\ & \quad
+ \int \mu'_1[\rmd \Phi] \int \mu'_1[\rmd \Psi] 
\frac{1}{Z''} (1-g_1(\Phi))_+ (g_1(\Psi)-1)_+   F(\Phi,\Psi) 
\,.
\end{align}
where 
\[
 Z'' := \mean{(1-g_1)_+}_{\mu'_1} = \mean{(g_1-1)_+}_{\mu'_1}\,.
\]
Note that, since $E_0$ is not a constant function, $g_1$ cannot be a constant function, and hence $Z''>0$.
As before, it is then straightforward to check that the first marginal equals $\mu'_1$ and the second marginal equals $\mu_1$.

Therefore, $\gamma_1$ is a coupling  between $\mu_1$ and $\mu'_1$, and we have
\[
  W_2(\mu_1,\mu'_1)^2 \le \int \mu'_1[\rmd \Phi] \int \mu'_1[\rmd \Psi] 
\frac{1}{Z''} (1-g_1(\Phi))_+ (g_1(\Psi)-1)_+   \norm{\Phi-\Psi}^2\,.
\]
Again, we estimate $\norm{\Phi-\Psi}^2 \le 2 (\norm{\Phi}^2+ \norm{\Psi}^2)$, and use the symmetry and definition of $Z''$ to obtain a bound
\[
  W_2(\mu_1,\mu'_1)^2 \le 2 \mean{\norm{\Phi}^2 |1-g_1(\Phi)|}_{\mu_1'} \,.
\]
Combined with the almost sure bound in (\ref{eq:aeoneminusg1}), we find that 
\[
  W_2(\mu_1,\mu'_1)^2 
  \le 2^3 \rme^{2} \delta'\mean{\norm{\Phi}^2}_{\mu_1'} \,.
\]
Here, $\mean{\norm{\Phi}^2}_{\mu_1'} = \mean{\norm{\Phi^+}^2+\norm{\Phi^0}^2}_{\mu_1'} = V \rhoc + V \Delta=V \rho$.
Therefore, 
\[
 W_2(\mu_1,\mu'_1)^2 \le 2^5  V \rho \tilde{\vep}\,.
\]
Note that we obtained a better dependence on $\tilde{\vep}$ than on $\vep$ in the earlier estimate since we did not need to use the Schwarz inequality above.  This was possible here since the weight $g_1$ is almost surely close to one unlike the weight $g$ which is close to one only with high probability.

Since the Wasserstein metric satisfies the triangle inequality, we can now combine the above bound with the one proved in Theorem \ref{th:mainresult}, and conclude that
\[
 W_2(\mu_0,\mu'_1) \le W_2(\mu_0,\mu_1)+W_2(\mu_1,\mu'_1)+ \le
 L^{\frac{d}{2}} 2^4 \sqrt{V_0 (\rho+\Delta)} 
 \left((\rho/\Delta)^{\frac{V_0}{2}}\vep_L^{\frac{1}{4}}+  \tilde{\vep}^{\frac{1}{2}}\right)\,,
\]
as claimed in the Proposition.
\end{proofof}

\section{Proof of the existence of the energy gap, Lemma \ref{th:mainomlemma}}\label{sec:proofofgap}

Here we suppose $d\ge 3$ and consider a dispersion relation
$\omega$ which satisfies Assumption \ref{as:omega}.  For each $L$, define $\omega_0$
and $e_k$, $k\in\Lambda^*$, as in Definition \ref{th:defslpit}.
We choose $\kappa$ such that $0<\kappa<\frac{d}{2}$, if $d\ge 4$, and $0<\kappa < 1$, if $d=3$, and fix its value for the rest of the proof.
In principle, only the local behaviour of $\omega$ around its global minima will matter,
but the proof is complicated by the fact that the local behaviour in a neighbourhood of each minima can be different and the values of $e_k$ can become mixed between the minima.

The proof will be composed out of several steps.  The steps are not completely independent, 
and each step may use estimates and notations accumulated from the previous steps.
Although the proof is not isolated into technical Lemmas, 
the steps highlight its structure by each having a specific goal, listed in the following:
\begin{enumerate}
 \item\label{it:continneighb} Isolate sufficiently small neighbourhoods in $\T^d$ around each 
 minimum of $\omega$ so that second order Taylor series bounds its behaviour in the neighbourhood.
 \item\label{it:largeL0}  Choose sufficiently large $L$ so that the rectangular grid $\Lambda^*(L)$ has some points in each neighbourhood.
 \item\label{it:defLambda1}  Construct a condensate candidate set $\Lambda^*_1$
 by isolating all small energies, with an energy difference from the lowest energy  proportional to $L^{-2}$.  Show that the number of points in this set is bounded by some $M_0$ which does not depend on $\kappa$ nor on $L$
 \item\label{it:defLambda0}  Use a ``pigeon hole'' argument to show 
 that this set must contain a large enough relative energy gap.  This will fix 
 the condensate wave number set $\Lambda^*_0$, hence also $\Lambda^*_+$, and complete the proof of item \ref{it:splitcondensate} of the Lemma. 
 \item\label{it:item2} Check that the relative energy gap of the construction satisfies item
  \ref{it:splitgap} of the Lemma. 
 \item\label{it:item3a} Use the previous estimates to find a constant $c_2$ for the bound 
 (\ref{eq:e2sumbound}), separately for $d=3$, $d=4$, and $d\ge 5$.
 \item\label{it:Riemann}  Using an approximation with suitable Riemann sums, prove the estimates (\ref{eq:defrhoinfty}) and (\ref{eq:rhoerrorbound}) for the continuum limit $L\to \infty$.
\end{enumerate}

{\it (Step \ref{it:continneighb})}
Consider a point $k_0\in T_0$ where $\omega(k_0)=\ommin$.
Since $k_0$  is a non-degenerate minimum of a twice continuously differentiable function $\omega$, we have $\nabla \omega(k_0)=0$
and the eigenvalues of $D^2 \omega(k_0)$ are strictly positive.  Let $\lambda_-$ and $\lambda_+$ denote the smallest and, respectively, the largest of these eigenvalues
as $k_0$ varies through the elements in $T_0$.  Then $0<\lambda_-\le \lambda_+$.
By continuity of $D^2 \omega$ there is $\delta>0$ such that $\delta<\frac{1}{2}$, and 
whenever\footnote{We make a slight abuse of notations here: By ``$|k-k_0|$'' we mean $d_{\T^d}(k,k_0)$, where $d_{\T^d}$ is the periodic distance on the torus, inherited as a quotient metric from the definition $\T^d=\R^d/\Z^d$. We are only using this notion for distances which are less than one half, in which case there is a metric isomorphism between a ball in $\R^d$ and an open subset of the torus containing the geodesic line connecting the points $k$ and $k_0$.  In this case, the metric behaves as the norm in $\R^d$, and the notation should not be overly misleading.} $k_0\in T_0$, $|k-k_0|<\delta$, and $p\in\R^d$ we have
\[
 \frac{\lambda_-}{2}|p|^2 < p\cdot (D^2 \omega(k)p) < 2 \lambda_+ |p|^2 \,.
\]
As $T_0$ is finite, we can also assume that the balls $B(k_0,\delta)$ are disjoint, by choosing a smaller $\delta$
if this is not true initially.
Since the set $\defset{k\in \T^d}{|k-k_0|\ge \delta,\text{ for all }k_0\in T_0}$ is compact, the continuous function $\omega$ has a minimum value $\omega_2$ which is attained within the set.  
Then we must have $\omega_2>\ommin$ since else the point $k$ at which $\omega(k)=\omega_2$ would belong to $T_0$.
Furthermore, by a Taylor expansion up to second order around $k_0$, we find that if $k_0\in T_0$ and $|k-k_0|<\delta$,
then
  \begin{align}\label{eq:locallowerom}
    \frac{\lambda_-}{4} |k-k_0|^2 \le \omega(k)-\ommin \le \lambda_+  |k-k_0|^2\,,\qquad |\nabla\omega(k)| \le 2 \lambda_+ |k-k_0| \,.
  \end{align}

{\it (Step \ref{it:largeL0})}
We are going to define a cut-off size $L_0$, and consider lattices with $L\ge L_0$.
We begin by assuming that $L_0 \in \N_+$ satisfies
\begin{align}\label{eq:L0assumptions}
 L_0>\frac{\sqrt{d}}{2 \delta}\,,\quad L_0\ge \left[\frac{c_0}{\omega_2-\ommin}\right]^{\frac{1}{2}}\,,
\end{align}
where $c_0$ is an $L$-independent constant depending on $\omega$ via $\lambda_+$,
\begin{align}\label{eq:c0def}
 c_0 := \frac{\lambda_+ d}{2}\,. 
\end{align}
For any such $\Lambda^*(L)$, 
let us first isolate the minimum value of $\omega$ on these points,
i.e., set as in the Lemma
\[
 \omega_0(L) := \min_{k\in \Lambda^*} \omega(k)\,.
\]
As shown by the examples in Sec.~\ref{sec:examples}, $\omega_0$ may then depend on $L$, and even if $\omega$ would have more than one minimum 
point on $\T^d$, the value of $\omega_0$ could be unique.  

Since $\Lambda^*$ forms a rectangular grid with side length 
$\frac{1}{L}$ on $\T^d$, to any point $k\in \T^d$ there is a point $k'\in \Lambda^*$ such that $|k - k'|_\infty \le \frac{1}{2 L}$.
Since $|p|_\infty = \max_i |p_i|\ge d^{-\frac{1}{2}} |p|$, then $|k - k'| \le \frac{\sqrt{d}}{2 L}\le \frac{\sqrt{d}}{2 L_0}< \delta$.
Therefore, if $k_0\in T_0$, there is $k_0'\in \Lambda^*$ for which $|k'_0 - k_0|\le \frac{\sqrt{d}}{2 L} <\delta$, and thus
$\omega(k'_0)-\ommin \le \lambda_+ |k'_0-k_0|^2 \le \frac{\lambda_+ d}{4} L^{-2}$.
This implies that
\[
 0\le \omega_0(L)-\ommin \le \frac{c_0}{2} L^{-2}\,.
\]
In particular, $\omega_0(L)\to \ommin$ as $L\to \infty$.  

{\it (Step \ref{it:defLambda1})}
We recall that $e_k = \omega(k)-\omega_0$ for $k\in \Lambda^*$,
and consider the following set of $k$ which have an energy close to the ground state:
\begin{align}
 \Lambda^*_1 := \defset{k\in\Lambda^*}{e_k< \frac{c_0}{2} L^{-2}}\,.
\end{align}
Clearly, any minimum point has $\omega(k)=\omega_0$ and thus it belongs to $\Lambda^*_1$.  Hence, $\Lambda^*_1$ 
is not empty.
In addition, the second inequality in (\ref{eq:L0assumptions}) implies that if $k\in \Lambda^*_1$, then 
$\omega(k)-\ommin = e_k  + \omega_0-\ommin < c_0 L^{-2}\le \omega_2-\ommin$.  Therefore, to each
$k\in \Lambda^*_1$, we can find a unique $k_0\in T_0$ such that $|k-k_0|<\delta$ and the inequalities (\ref{eq:locallowerom}) hold. 

For each $k_0\in T_0$, let us next consider the values in the subset
\begin{align}
 \Lambda^*(k_0;L) := \defset{k\in\Lambda^*}{|k-k_0|< \delta}\,.
\end{align}
By the same reasoning as above, we can find $n_0\in \Z^d$ for which $|n_0-L k_0|_\infty \le \frac{1}{2}$.
Therefore, is it possible to reparameterize the values in $\Lambda^*(k_0;L)$ defining 
$m(k)=(L k -n_0)\bmod \Lambda_L$ for each $k\in \Lambda^*(k_0;L)$.  Note that then for all $k\in \Lambda^*(k_0;L)$
we have $L k=(n_0+m(k))\bmod \Lambda_L$ and 
$L|k-k_0|_\infty = L \inf_{n\in \Z^d}|k-k_0-n|_\infty= |m(k)+n_0-L k_0|_\infty \ge |m(k)|_\infty - \frac{1}{2}$.
On the other hand, if $k\in \Lambda^*(k_0;L)\cap \Lambda^*_1$, 
\[
 \frac{\lambda_-}{4} |k-k_0|_\infty^2 \le \frac{\lambda_-}{4} |k-k_0|^2 \le \omega(k)-\ommin < c_0 L^{-2}\,,
\]
and thus also 
\[
 L|k-k_0|_\infty \le \sqrt{\frac{4 c_0}{\lambda_-}}\,.
\]
Therefore, then $|m(k)|_\infty \le \frac{1}{2}+\sqrt{\frac{2 \lambda_+ d}{\lambda_-}}$.
We define
\begin{align}\label{eq:defMmax}
 M := \left\lfloor \frac{1}{2}+\sqrt{\frac{2 \lambda_+ d}{\lambda_-}} \right\rfloor \,,
\end{align}
where $\lfloor x\rfloor$ denotes the smallest integer in $\Z$ less than or equal to $x\in \R$.  Then $M\ge 0$, and there are at most $(2 M+1)^d$ values $m\in \Z^d$ which can satisfy $|m|_\infty \le M$.
Even if the maximal number of points occur in $\Lambda^*(k_0;L)\cap \Lambda^*_1$ at each $k_0\in T_0$, we conclude that there are at most 
\begin{align}\label{eq:defmaxN0}
 M_0 := |T_0| (2 M+1)^d
\end{align}
points in $\Lambda^*_1$.  

{\it (Step \ref{it:defLambda0})}
We are  next going to construct $\Lambda^*_0$ as a subset of $\Lambda^*_1$, and then also $|\Lambda^*_0|\le M_0$ and 
$0\le \omega(k)-\ommin < c_0 L^{-2}$ for all $k\in \Lambda^*_0$.  
Let us stress that $M_0$ is indeed independent of $L$ and $\kappa$, as required in the Lemma.  For simplicity, we now add one more requirement for $L_0$: we assume that $L_0^d\ge M_0+1$, so that if $L\ge L_0$, the complement of $\Lambda^*_1$ cannot be empty.

To isolate those Fourier modes which behave as a condensate, recall that $\kappa$ has been fixed to satisfy the requirements of the Lemma.  Define $b'_L= \frac{1}{2}c_0 L^{-d+\kappa}$ and $r_L:= L^{-\frac{d-2-\kappa}{M_0}}$, to denote the two bounds appearing in item \ref{it:splitgap} of the Lemma.
Then $r_L\le 1$, since $L\ge 1$, and the assumptions imply that $\kappa< d-2$.
We also have 
\[
L^2 b'_L  = \frac{1}{2}c_0 L^{-d+\kappa+2}= \frac{1}{2}c_0 r_L^{M_0}\le \frac{1}{2}c_0\,.
\]
 Therefore, if $e_k<b'_L$, also $e_k<\frac{c_0}{2}L^{-2}$,
and thus $k\in \Lambda^*_1$.   All of these values of $k$ will be included in $\Lambda^*_0$ but to find a suitable gap, we might need to include 
also some values from the remainder set,
\[
 \Lambda^*_2 := \defset{k\in \Lambda_1^*}{e_k \ge b'_L} = 
 \defset{k\in \Lambda^*}{b'_L\le e_k < \frac{c_0}{2} L^{-2}}\,.
\]

If $\Lambda^*_2=\emptyset$, we can conclude that $e_k<b'_L$ for each $k\in \Lambda^*_1$ and, if $k'\in \Lambda^* \setminus \Lambda^*_1$,
we have $e_{k'} \ge \frac{c_0}{2} L^{-2} = r_L^{-M_0} b'_L \ge r_L^{-1} b'_L > r_L^{-1}e_k$.
Therefore, we may then define $\Lambda^*_0=\Lambda^*_1$ and the corresponding split is separated by $[a_L,b_L]$ and 
has an energy gap $\delta_L^{-1}$, where
$\delta_L<r_L$, $a_L:=b'_L$, $b_L:= r_L^{-M_0} b'_L\ge a_L$.

Suppose thus that $N_2:=|\Lambda^*_2|>0$, and enumerate the elements $k_i\in \Lambda^*_2$, $i=1,2,\ldots,N_2$, so that 
$o_i = e_{k_i}$ form an increasing sequence, $o_{i+1}\ge o_i$ for all $i$.  Define also $o_{N_2+1} := \min_{k\in \Lambda^* \setminus \Lambda^*_1} e_k
\ge \frac{c_0}{2} L^{-2}$ and $o_0 := \max_{k\in \Lambda^*_1 \setminus \Lambda^*_2} e_k< b'_L$.  Note that at least all minimum points belong to $\Lambda^*_1 \setminus \Lambda^*_2$ and our $L$ is large enough so that $\Lambda^* \setminus \Lambda^*_1$ cannot be empty.
Clearly, also the new sequence of $o_i$, $i=0,1,\ldots,N_2+1$, is increasing.
Therefore, we can use a pigeon hole argument to the relative energies: We have
\[
  (N_2+1) \max_{i=0,1,\ldots, N_2} \ln \frac{o_{i+1}}{o_i}  
 \ge \sum_{i=0}^{N_2} \ln \frac{o_{i+1}}{o_i} = \ln \left(\prod_{i=0}^{N_2} \frac{o_{i+1}}{o_i}\right) =
  \ln \left(\frac{o_{N_2+1}}{o_0}\right) \ge
  \ln \left(\frac{c_0}{2 L^2 b'_L}\right)\,.
\]
The right hand side is equal to $\ln r_L^{-M_0}=M_0 \ln r_L^{-1}$, and
since $N_2+1 \le |\Lambda^*_1|\le M_0$, there is at least one $i\in\set{0,1,\ldots, N_2}$ for which
\[
 \frac{o_{i+1}}{o_i} \ge r_L^{-1}\, .
\]
Let $j$ denote the smallest of such $i$, and define
\[
 \Lambda^*_0 := \defset{k\in \Lambda^*}{e_k \le o_j}\,.
\]
By construction, $o_0\le o_j< \frac{c_0}{2} L^{-2}$ and thus $\Lambda^*_1\setminus \Lambda^*_2\subset \Lambda^*_0\subset \Lambda^*_1$.  
Therefore, neither $\Lambda^*_0$ nor its complement $\Lambda^*_+$ can be empty, and $|\Lambda^*_0|\le M_0$.
In addition, $0\le \omega(k)-\ommin < c_0 L^{-2}$ for all $k\in \Lambda^*_0$, and thus   
$(\Lambda_0^*,\Lambda^*_+)$ forms a split of $\Lambda^*$ which satisfies item \ref{it:splitcondensate} of the Lemma.

{\it (Step \ref{it:item2})}
In case $j=0$, we have $o_j=o_0<b'_L$.  Otherwise, $j\le N_2\le M_0-1$ and, 
by construction, we have $o_{i+1}< r_L^{-1} o_i$  for all $i<j$.  Since $j\le M_0-1$, we find
\[
 o_j \le r_L^{-j} o_{0} < r_L^{-(M_0-1)} b'_L =  r_L \frac{c_0}{2} L^{-2}\,.
\]
Also by construction, 
if $k'\in \Lambda^*_+$, then $k'\in \Lambda^*_2$ or $k'\in \Lambda^*\setminus \Lambda^*_1$, and in both cases $e_{k'}\ge b'_L$.  Thus we may define $b_L := \min_{k\in \Lambda^*_+} e_k$ for which $b_L\ge b'_L$.
In addition, for any $k\in \Lambda^*_0$ we have
\[
 e_k \le o_j\le r_L o_{j+1} \le r_L e_{k'}\,.
\]
Therefore, setting $a_L:= o_j$, we find that 
this choice results in a split which is separated by $[a_L,b_L]$ and 
has an energy gap $\delta_L^{-1}$, where
$\delta_L\le r_L$.

{\it (Step \ref{it:item3a})}
We have now shown that the split $(\Lambda_0^*,\Lambda^*_+)$ constructed above satisfies 
also item \ref{it:splitgap} of the Lemma, and thus only the bounds stated in item \ref{it:splitbounds} remain to be proven.
We only need to consider values of $e_k$ for $k\in \Lambda^*_+$ for which we have proven a lower bound
$e_k\ge \frac{1}{2}c_0 L^{-d+\kappa}$.  In addition, we may also further divide these values into the sets
\[
 F(k_0) := \Lambda^*(k_0;L)\cap \Lambda^*_+\,, \quad k_0\in T_0\,,
\]
and $F' := \Lambda^*_+ \setminus \left(\cup_{k_0\in T_0} F(k_0)\right)$.  If $k\in F'$, we
have by construction a lower bound $e_k\ge \omega_2-\omega_0$ which by 
(\ref{eq:L0assumptions}) and item \ref{it:splitcondensate} of the Lemma 
is bounded from below by $\omega_2-\ommin-\frac{c_0}{2}L^{-2}\ge \frac{1}{2}\left(\omega_2-\ommin\right)>0$
for all $L\ge L_0$.  Therefore,
\[
 \sum_{k\in F'} \frac{1}{e_k^2} \le \frac{4}{(\omega_2-\ommin)^2} V=O(L^d)\,.
\]

Let us then consider a fixed $k_0\in T_0$ and the values $k\in F(k_0)$.
As explained above, we may parameterize these using integers $m(k)\in \Lambda_L$.
If $|m(k)|_\infty\ge 1$, we have then
$L|k-k_0|_\infty \ge |m(k)|_\infty - \frac{1}{2} \ge \frac{1}{2}|m(k)|_\infty$.
On the other hand, then also
\[
 e_k = \omega(k)-\ommin+\ommin-\omega_0\ge \frac{\lambda_-}{4}|k-k_0|_\infty^2-\frac{c_0}{2}L^{-2}
 \ge \left(\frac{\lambda_-}{2^4}|m(k)|_\infty^2-\frac{c_0}{2}\right)L^{-2}\,.
\]
This implies that whenever $|m(k)|_\infty^2 \ge \frac{2^4 c_0}{\lambda_-}$, we have
$e_k \ge \frac{\lambda_-}{2^5}|m(k)|_\infty^2 L^{-2}$.
For the remaining values we use the bound in item \ref{it:splitgap} of the Lemma, and taking into account that $|m(k)|_\infty\le \frac{L}{2}$,
we may conclude that
\[
 \sum_{k\in F(k_0)} \frac{1}{e_k^2} \le \frac{4}{c_0^2} L^{2 d-2 \kappa} \left(1+2\sqrt{\frac{2^4 c_0}{\lambda_-}}\, \right)^d
 + \sum_{m\in \Z^d} \cf{1\le |m|_\infty\le L/2} L^{4} \frac{2^{10}}{\lambda_-^2} |m|_\infty^{-4}\,.
\]

The remaining sum satisfies a bound
\[
 \sum_{m\in \Z^d} \cf{1\le |m|_\infty\le L/2} |m|_\infty^{-4} \le
 \sum_{n=1}^{L} \frac{1}{n^4} 2 d (2 n+1)^{d-1} \le d 2^{2 d-1} \sum_{n=1}^{L} n^{d-5} \,.
\]
If $d\ge 5$, the terms in the sum over $n$ form an increasing sequence and its value is bounded by $L^{d-4}$.
If $d\le 4$, the summand consists of integer values of the decreasing function $x^{-(5-d)}$.  Thus by a Riemann sum estimate, 
we may use the following bound for $d=4$,
\[
 \sum_{n=1}^{L} n^{-1} \le 1 + \int_1^L\!\rmd s \, \frac{1}{s} = 1+\ln L\,,
\]
and for $d=3$ we obtain
\[
 \sum_{n=1}^{L} n^{-2} \le 1 + \int_1^L\!\rmd s \,\frac{1}{s^2} = 1+1-\frac{1}{L} \le 2\,.
\]

Collecting the above bounds together we find that there is a constant $c>0$, which may vary with $d$ but can be chosen independently of $L$,
such that, if $d=3$, 
\[
 \frac{1}{V} \sum_{k\in \Lambda^*_+} \frac{1}{e_k^2} \le c \left(L^{3-2\kappa}+L\right) \,,
\]
where $3-2 \kappa>1$,
if $d=4$,
\[
 \frac{1}{V} \sum_{k\in \Lambda^*_+} \frac{1}{e_k^2} \le c \left(L^{4-2\kappa}+\ln L+ 1\right) \,,
\]
where $4-2 \kappa>0$, and if $d\ge 5$,
\[
 \frac{1}{V} \sum_{k\in \Lambda^*_+} \frac{1}{e_k^2} \le c \left(L^{d-2\kappa}+1\right) \,,
\]
where $d-2\kappa>0$.  In each of the three cases, the first term in the parenthesis on the right hand side dominates over the second term as $L\to \infty$.  Therefore, 
we can always find a constant $c_2$ so that the bound in (\ref{eq:e2sumbound}) holds for the fixed choice of $\kappa$.

{\it (Step \ref{it:Riemann})}
For the final estimates (\ref{eq:defrhoinfty}) and (\ref{eq:rhoerrorbound}), let us first recall the bounds (\ref{eq:locallowerom}) 
satisfied by $\omega(k)-\ommin$ in a $\delta$-neighbourhood of any of its zeroes.  Using the bounds and spherical coordinates shows that
the integral (\ref{eq:defrhoinfty}) defining $\rho_\infty$ is finite for all $d\ge 3$.  
Denote the integrand by $f(k) := \frac{1}{\omega(k)-\ommin}$ for $k\in \T^d\setminus T_0$, and 
choose arbitrarily $f(k)$ to be zero otherwise.  Suppose that $L\ge L_0$, so that we may use all of the above results, in particular, let us continue to use the split $(\Lambda_0^*,\Lambda^*_+)$ defined above.

Cover $\T^d$ with closed boxes with side length $\frac{1}{L}$
and with $k\in \Lambda^*$ at the centre of each box, i.e., set for each $k\in \Lambda^*$ 
\[
D_k := \defset{k'\in \T^d}{|k'-k|_\infty\le \frac{1}{2L}}\,. 
\]
Clearly, then $\int_{D_k}\!\rmd k'\, 1=L^{-d}$, and thus
\begin{align}\label{eq:Dkcentering}
 \rhoc(L) = L^{-d}\sum_{k\in \Lambda^*_+} \frac{1}{e_k} =
\sum_{k\in \Lambda^*_+} \int_{D_k}\!\rmd k'\, \frac{1}{e_k}  \,. 
\end{align}
On the other hand, the points on the torus which correspond to a point in more than one box form a set of zero measure, 
so we may write
\[
  \rho_\infty = \int_{\T^d}\!\rmd k'\, f(k') = \sum_{k\in \Lambda^*}
  \int_{D_k}\!\rmd k'\, f(k') \,.
\]
Therefore,
\[
  \rho_\infty -  \rhoc(L) = \sum_{k\in \Lambda^*_0}
  \int_{D_k}\!\rmd k'\, f(k') + \sum_{k\in \Lambda^*_+} \int_{D_k}\!\rmd k'\, \left(f(k')-\frac{1}{e_k} \right)\,.
\]

We estimate the error in two parts: First, 
the  sum over $k\in \Lambda^*_0$ and those $k\in \Lambda^*_+$ which are
sufficiently close to some $k_0\in T_0$ can be estimate similarly.  For the remaining 
$k\in \Lambda^*_+$ we use differentiability of $f$ and decay of the error with distance from the singular set $T_0$.

We first recall the above split of $\Lambda^*_+$ into $F'$ and $F(k_0)$, and consider the sum over $k\in F(k_0)$ for some fixed $k_0\in T_0$.
Computing directly from the definitions, we find that
\[
 f(k')-\frac{1}{e_k} = \left(\omega(k)-\omega(k')-\omega_0+\ommin\right) f(k') \frac{1}{e_k}\,.
\]
Here $k'\in D_k$, and thus $|k'-k|_\infty\le \frac{1}{2 L}$.  Hence, by convexity of $D_k$,
\begin{align}\label{eq:omombound}
 |\omega(k)-\omega(k')|\le |k'-k| \sup_{\xi\in D_k} |\nabla\omega(\xi)|\le \frac{1}{L} \frac{\sqrt{d}}{2}\sup_{\xi\in D_k} |\nabla\omega(\xi)|
 \,. 
\end{align}
Using again the parameterization of $k$ by $m(k)$ for which $L|k-k_0|_\infty\le |m(k)|_\infty+ \frac{1}{2}$,
by the second bound in (\ref{eq:locallowerom}) we may estimate for all $\xi\in D_k$ and sufficiently large $L$
\begin{align}\label{eq:omderbound}
 |\nabla\omega(\xi)|\le 2 \lambda_+ |\xi-k_0|\le 2 \lambda_+ \sqrt{d}\, |\xi-k_0|_\infty
 \le 2 \lambda_+ \sqrt{d} \left(\frac{1}{2 L}+\frac{1}{2 L}+\frac{|m(k)|_\infty}{L}\right)\,. 
\end{align}

Therefore, if $k$ is close enough to $k_0$ so that
$|m(k)|_\infty^2 < \frac{2^4 c_0}{\lambda_-}+4$, we can conclude that there is an $L$ and $k$-independent constant $c'$ such that for all $k'\in D_k$ 
\[
|\omega(k)-\omega(k')-\omega_0+\ommin| \le c' L^{-2}\,.
\]
Thus the contribution from such $k$ satisfies
\[
  \int_{D_k}\!\rmd k'\, \left|f(k')-\frac{1}{e_k} \right|\le \frac{c'}{L^2 e_k} \int_{D_k}\!\rmd k'\, f(k')
  \le \frac{2 c'}{c_0} L^{d-2-\kappa} \int_{D_k}\!\rmd k'\, f(k')\,.
\]
In addition, then $|k-k_0|\le \sqrt{d}|k-k_0|_\infty\le L^{-1} c''$, for an $L$-independent constant $c''>0$.
Therefore, the sum of the error terms over these $k$ is bounded by $ \frac{2 c'}{c_0} L^{d-2-\kappa}$ times
\[
\int_{|k-k_0|\le c''/L}\!\rmd k'\, f(k') \le \frac{4}{\lambda_-} |S^{d-1}|\int_0^{c''/L}\!\rmd r \, r^{d-1-2}
= \frac{4}{\lambda_-} |S^{d-1}| \frac{(c'')^{d-2}}{d-2} L^{2-d}\,.
\]
This proves that the error from these terms is $O(L^{-\kappa})$ as $L\to\infty$.  

Since for each $k\in \Lambda^*_0$ we know that $L |k-k_0|_\infty \le \sqrt{4 c_0/\lambda_-}$, an identical argument may be used to conclude that, as $L\to\infty$,
\[
 \sum_{k\in \Lambda^*_0}
  \int_{D_k}\!\rmd k'\, f(k') = O(L^{2-d})=O(L^{-\kappa})\,.
\]

Let us next estimate terms $k\in F(k_0)$ with  $|m(k)|_\infty^2 \ge \frac{2^4 c_0}{\lambda_-}+4$.
By the earlier computations, we know that then $e_k \ge \frac{\lambda_-}{2^5}|m(k)|_\infty^2 L^{-2}$.
On the other hand, since $|m(k)|_\infty \ge 2$,
we also have $|m(k)|_\infty-1 \ge \frac{1}{2} |m(k)|_\infty$, and thus, if $k'\in D_k$, we may estimate 
$|k'-k_0|_\infty \ge |k-k_0|_\infty-|k'-k|_\infty
\ge \frac{1}{L}\left(|m(k)|_\infty-1 \right)\ge  \frac{1}{2 L}|m(k)|_\infty$.
Thus by (\ref{eq:locallowerom})
\[
 \frac{1}{f(k')} = \omega(k')-\ommin \ge \frac{\lambda_-}{4} |k'-k_0|_\infty^2
 \ge \frac{\lambda_-}{2^4} L^{-2} |m(k)|_\infty^{2}\,,
\]
and both $1/e_k$ and $f(k')$ have similar upper bounds.

It is now useful to expand the difference further and integrate the identity
\[
 f(k')-\frac{1}{e_k} = \left(\omega(k)-\omega(k')-\omega_0+\ommin\right)  \frac{1}{e_k^2}
+\left(\omega(k)-\omega(k')-\omega_0+\ommin\right)^2  \frac{1}{e_k^2} f(k')\,.
\]
Since $\int_{D_k}\!\rmd k'\, (k'_i-k_i)=0$ for any $i=1,2,\ldots,d$, we have
\[
 \int_{D_k}\!\rmd k'\,\left(\omega(k)-\omega(k')\right)
 = \int_{D_k}\!\rmd k'\int_0^1\!\rmd \tau \, (1-\tau) (k-k')\cdot D^2\omega(\tau k + (1-\tau) k')(k-k')\,,
\]
and, therefore, 
\[
 \left|\int_{D_k}\!\rmd k'\,\left(\omega(k)-\omega(k')\right)\right| \le \frac{d}{4 L^2} \sup_{\xi\in D_k}\norm{D^2 \omega(\xi)} \frac{1}{2} L^{-d}\,.
\]
Since $\omega$ is  twice continuously differentiable, together with (\ref{eq:omombound})
this shows that there is an $L$-independent constant $C'>0$  such that
\begin{align}\label{eq:eknotsmallbound}
  \left|\int_{D_k}\!\rmd k'\,\left(f(k')-\frac{1}{e_k}\right) \right|
  \le C' L^{-2-d} \frac{1}{e_k^2} + C' \frac{(1+L\sup_{\xi\in D_k} |\nabla\omega(\xi)|)^2}{ L^{4} e_k^2}\int_{D_k}\!\rmd k'\,f(k')\,. 
\end{align}

Therefore, denoting $m=m(k)$, using (\ref{eq:omderbound})
to estimate the derivative, and recalling the earlier upper bounds for $1/e_k$ and $f(k')$, we find that 
\[
  \left|\int_{D_k}\!\rmd k'\,\left(f(k')-\frac{1}{e_k}\right) \right|
  \le C'' L^{2-d} |m|^{-4}\,, 
\]
where the constant $C''$ is independent of $L$.  Estimating the sum over possible values of $m$ as above, we thus find that the contribution from these terms is $O(L^{-1})$, for $d=3$, it is $O(L^{-2}(1+\ln L))$, for $d=4$, and 
$O(L^{-2})$, for $d\ge 5$.  The first two cases are $O(L^{-\kappa})$, and thus we have proven that
\[
 \sum_{k\in \Lambda^*_0}
  \int_{D_k}\!\rmd k'\, f(k') + \sum_{k_0\in T_0} \sum_{k\in F(k_0)} \int_{D_k}\!\rmd k'\, \left(f(k')-\frac{1}{e_k} \right) = O(L^{-\min(\kappa,2)})\,,
\]
as required by the Lemma.

It remains to estimate the contribution from the values with $k\in F'$.  Since then $e_k\ge (\omega_2-\ommin)/2>0$ uniformly in $k$ and $L$, we may simply use the uniform bound for the gradient in (\ref{eq:eknotsmallbound}), and conclude that 
\[
  \sum_{k\in F'} \left|\int_{D_k}\!\rmd k'\,\left(f(k')-\frac{1}{e_k}\right) \right|
  \le C' L^{-2-d} \sum_{k\in F'} \frac{1}{e_k^2} + C''' L^{-2}\sum_{k\in F'}\int_{D_k}\!\rmd k'\,f(k') = O(L^{-2})\,.
\]
Combining all of the above results, we have thus proven that 
\[
   \rhoc(L) = \rho_\infty + O(L^{-\min(\kappa,2)})\,,
\]
which completes the proof of the Lemma.

\appendix

\section{Definition and basic properties of the $\delta$-constraints}\label{sec:defdelta}

In the text, we often use measures which are defined on $\R^n$, $n\ge 2$, by the formula
\begin{align}\label{eq:defgedeltameasure}
 \mu[\rmd s] = w(s) \,
 \delta\!\left(|s|^2 - N\right)\rmd^n s\,.
\end{align}
where $N>0$, $w:\R^n\to\R$ is a strictly positive continuous function, and 
$\rmd^n s$ denotes the Lebesgue measure on $\R^n$.  We first move to spherical coordinates to formally integrate out the $\delta$-function.  
Then for any continuous bounded non-negative function $f:\R^n \to \R$ we would have
\begin{align}\label{eq:defdelta}
& \int_{\R^n}\! \mu[\rmd s]\, f(s) 
 = \int_{\R^n}\!\rmd^n s\, f(s) w(s) \delta\!\left(|s|^2 - N\right)
 = \int_{S^{n-1}}\!\rmd \Omega\, \int_0^\infty \!\rmd r\, r^{n-1}
 f(s) w(s) \delta\!\left(r^2 - N\right) \nonumber \\
 &\quad  = \int_{S^{n-1}}\!\rmd \Omega\, \int_0^\infty \!\rmd t\, \frac{1}{2} t^{\frac{n}{2}-1}
 f(s) w(s) \delta\!\left(t - N\right) 
 = \frac{1}{2} N^{\frac{n}{2}-1}
 \int_{S^{n-1}}\!\rmd \Omega\, f(\sqrt{N}\Omega) w(\sqrt{N}\Omega) \,,
\end{align}
where
we have used shorthand notations $s=r \Omega=\sqrt{t}\Omega$ and the assumption that $N>0$.
Here $\rmd \Omega$ denotes the solid angle integration and thus its total mass is finite.  On the other hand, the values $\sqrt{N}\Omega$ cover the sphere with radius $\sqrt{N}$ and centre at the origin, which is a compact set.
Since the continuous function $f w$ is non-negative and has a maximum on this sphere, we may conclude that the 
map from $f$ to the right hand side of (\ref{eq:defdelta}) is a positive linear functional on 
the space of bounded continuous functions on $\R^n$.  Since $\R^n$ is a locally compact Hausdorff space, Riesz representation theorem implies that there is a unique regular Borel measure $\mu$ on $\R^n$ for which (\ref{eq:defdelta}) holds for all continuous $f$ with a compact support, and hence obviously also for all bounded continuous $f$.

This yields the definition of $\mu$ as a positive Radon measure.  The argument also shows that 
$\int_{\R^n}\! \mu[\rmd s]1 =  \frac{1}{2} N^{\frac{n}{2}-1}
 \int_{S^{n-1}}\!\rmd \Omega\,w(\sqrt{N}\Omega)<\infty$.  Since $w>0$ by assumption, and $S^{n-1}$ is compact, there is $c>0$ such that $w(\sqrt{N}\Omega)\ge c$.  Thus
the value of the integral is greater than zero, and it is always possible to normalize $\mu$ into a probability measure by multiplying $w$ with a positive constant, as was assumed in the text.

Consider the open set $E:=\defset{s\in\R^n}{|s|\ne \sqrt{N}}$, and define for all $j\in \N_+$
the closed sets $E_j := \defset{s\in\R^n}{||s|-\sqrt{N}|\ge \frac{1}{j}}$.  Clearly, $\cup_j E_j = E$, and by Urysohn's lemma 
to each $j$ there exists a continuous function $f_j$ such that $f_j(s)=1$ if $s\in E_j$, and $f_j(s)=0$ if $s\not\in E$.
We can use (\ref{eq:defdelta}) to compute $\int_{\R^n}\! \mu[\rmd s]\, f_j(s)$ and since $f_j(\sqrt{N}\Omega)=0$ for
all $\Omega$, it follows that 
\[
 0\le \mu(E) \le \sum_{j} \int_{\R^n}\! \mu[\rmd s]\, f_j(s)  =0\,.
\]
Therefore, $\mu(E)=0$ and $|s|^2=N$ almost surely under $\mu$, as claimed in the text.

Finally, let us point out that many ordinary properties
of Lebesgue measures are inherited by the measure $\mu$.  For instance, we are mainly interested in situations where 
$w$ and $f$ are continuous bounded functions on $\R^n$.  Then for any sequence $\vep_j>0$
for which $\vep_j\to 0$, we can approximate the value of $ \int\! \mu[ds] f(s)$ by replacing the $\delta$-function by a Gaussian function with a standard deviation $\vep_j$, i.e., if we define for $y\in \R$
\[
 G_\sigma(y) := (2\pi \sigma^2)^{-\frac{1}{2}} \rme^{-\frac{1}{2\sigma^2} y^2}\,,
\]
using spherical coordinates and dominated convergence theorem one may show that
\[
 \int_{\R^n}\! \mu[\rmd s]\, f(s) = \lim_{j\to\infty} \int_{\R^n}\!\rmd^n s\, f(s) w(s) G_{\vep_j}(|s|^2 - N)\,.
\]
Then, it is possible to perform a change of variables as usual to the Lebesgue integrals on the right hand side, and compute the limit to get the value of the left hand side.  Similarly, one may check that, if $w$ is 
invariant under permutation of the labels of the vector $s$ or rotations of the space $\R^n$,
then so is $\mu$.

In addition, the following two observations arising from the above limits are used in the text.
First,  
if one makes a scaling of the field $s$, the result follows standard formal rules of $\delta$-functions:
given $R>0$, make a change of variables $s=R s'$, yielding
\begin{align*}
 & \int_{\R^n}\!\rmd^n s\, f(s) w(s) G_\vep(|s|^2 - N) = 
 R^{n} \int_{\R^n}\!\rmd^n s'\, f(R s') w(R s') G_\vep(R^{2} |s'|^2 - N)\\
 & \quad =  R^{n} \int_{\R^n}\!\rmd^n s'\, f(R s') w(R s') R^{-2} G_{\vep R^{-2}}(|s'|^2 - N R^{-2})\,. 
\end{align*}
Therefore,
\[
 \int_{\R^n}\!\rmd^n s\, f(s) w(s) \,
 \delta\!\left(|s|^2 - N\right)
 = R^n \int_{\R^n}\!\rmd^n s'\, f(R s') w(R s') \,
 R^{-2} \delta\!\left(|s'|^2 - N R^{-2}\right)\,.
\]

Secondly, if $I\subset \set{1,2,\ldots,n}$, $2\le |I|<n$, we may use Fubini's theorem and spherical coordinates in $\R^I$
to integrate out the $\delta$-constraint.  Let $J$ denote the complement of $I$, set $m=|I|$, and apply Fubini's theorem to show that
\[
 \int_{\R^n}\!\rmd^n s\, f(s) w(s) G_\vep(|s|^2 - N)
 = \int_{\R^J}\!\rmd^{|J|} y
  \int_{S^{m-1}}\!\rmd \Omega\, \int_0^\infty \!\rmd r\, r^{m-1}
 f(s) w(s) G_\vep(r^2+ y^2 - N)\,.
\]
We change variables to $t=(r^2+ y^2 - N)/\vep$ and the right hand side becomes
\[
 \int_{\R^J}\!\rmd^{|J|} y
  \int_{S^{m-1}}\!\rmd \Omega\, \int_{(y^2 - N)/\vep}^\infty \!\rmd t\,  \frac{1}{2} (N-y^2+\vep t)^{\frac{m}{2}-1}
 f(s) w(s) \frac{1}{\sqrt{2\pi }} \rme^{-\frac{1}{2} t^2}\,.
\]
Since $|J|>0$, the set $y^2=N$ has zero Lebesgue measure and thus the integrand may be replaced by zero on this subset without changing the value of the integral.  The integral over the subset of $y$ with $y^2>N$, goes to zero as $\vep\to 0$, by the dominated convergence theorem.  Similarly, using dominated convergence theorem for values $y^2<N$ proves that
\[
 \int_{\R^n}\! \mu[\rmd s]\, f(s) = 
  \int_{\R^J}\!\rmd^{|J|} y \, \cf{|y|<\sqrt{N}} 
  \int_{S^{m-1}}\!\rmd \Omega\, \frac{1}{2} (N-y^2)^{\frac{m}{2}-1}
 f(s) w(s)\,,
\]
where $m=|I|$ and
$s=y+\sqrt{N-y^2}\,\Omega$, given in terms of the orthogonal decomposition
$\R^n = \R^J\oplus \R^I$.

\section{Coupling and Wasserstein distance}\label{sec:Wdistance}

We recall here the basic definitions and notions related to the main technical tool used in the proofs here, namely to couplings and the Wasserstein metric.
For readers interested in more detailed discussion and properties, we refer to the first few chapters of 
\cite{Villani:OptTransp}.

The Wasserstein metric is used to measure the distance between two probability measures on a Radon
space $X$.  The standard examples of Radon spaces are complete separable metric spaces, e.g., 
$\R^n$, separable Hilbert spaces, and their closed subsets.  We are only going to use Hilbert spaces here, i.e., assume that $X$ is a closed subset of a Hilbert space, and we consider the metric inherited from the norm $\norm{\cdot}$.

Suppose that $\mu_1$ and $\mu_2$ are Borel probability measures on $X$ such that there are $p\ge 1$ and $a_1,a_2\in X$ for which 
\[
 \int_X\! \mu_i(\rmd x) \norm{x-a_i}^p<\infty\,,\qquad i=1,2\,.
\]
A \defem{coupling} $\gamma$ between the measures $\mu_1$ and $\mu_2$ is a new probability measure on $X\times X$
such that its marginal distribution in the first variable is $\mu_1$ and in the second variable the marginal is $\mu_2$.  This occurs if and only if for all integrable Borel measurable functions $f:X\to \C$ we have $\mean{f(x_1)}_\gamma = \mean{f}_{\mu_1}$ and $\mean{f(x_2)}_\gamma = \mean{f}_{\mu_2}$ where $\gamma$-integration is taken over $(x_1,x_2)\in X\times X$, as in (\ref{eq:defWpdist}) below.
It is closely connected to coupling of two random variables in probability theory, although here there is less choice in 
the allowed $\sigma$-algebras.  Also, let us recall that if $X$ is a subset of a finite-dimensional space then it is locally compact, and thus by Riesz representation theorem it suffices to check that the above identities hold for all continuous and compactly supported functions $f$.

Under the above assumptions, the measures $\mu_1$ and $\mu_2$ have a finite $p$:th Wasserstein distance $W_p(\mu_1,\mu_2)$ 
which is defined via the formula
\begin{align}\label{eq:defWpdist}
 W_p(\mu_1,\mu_2)^p := \inf_{\gamma}\int_{X\times X}\! \gamma(\rmd x_1,\rmd x_2)\, \norm{x_1-x_2}^p 
\end{align}
where the infimum is taken over couplings $\gamma$ between $\mu_1$ and $\mu_2$.
There is always at least one such coupling, namely $\mu_1\times \mu_2$.  
Since $\norm{x_1-x_2}\le\norm{x_1-a_1}+\norm{a_1-a_2}+\norm{a_2-x_2}$, the expectation over $\gamma$ is finite for this coupling,
$\int_{X\times X}\! \gamma(\rmd x_1,\rmd x_2)\, \norm{x_1-x_2}^p<\infty$.

\addcontentsline{toc}{section}{References}

\end{document}